\title{Incremental Data-Driven Policy Synthesis via Game Abstractions}
\author {
    Irmak Sağlam\textsuperscript{\rm 1\thanks{These authors contributed equally.}},
    Mahdi Nazeri\textsuperscript{\rm 1,\rm 2\footnotemark[1]},
    Alessandro Abate\textsuperscript{\rm 2},
    Sadegh Soudjani\textsuperscript{\rm 1,\rm 3},
    Anne-Kathrin Schmuck\textsuperscript{\rm 1}
}
\newif\ifappendix
\newtheorem{lemma}{Lemma}
\newtheorem{definition}{Definition}
\theoremstyle{remark}
\newtheorem*{rem}{Remark}
\newcommand*\circled[1]{\tikz[baseline=(char.base)]{
            \node[shape=circle,draw,inner sep=1pt] (char) {#1};}}
\tikzset{
   n/.style= {circle,fill,inner sep=1.5pt,node distance=2cm}
  ,acc/.style={circle,draw,inner sep=3pt,node distance=2cm}
  ,phantom/.style={circle},
  ,arr/.style={->, >=stealth, semithick, shorten <= 3pt, shorten >= 3pt}
}
\crefname{definition}{Def.}{Def.}
\crefname{clm}{Claim.}{Claim.}
\crefname{problem}{Prob.}{Prob.}
\crefname{algorithm}{Algo.}{Algo.}
\crefname{theorem}{Thm.}{Thm.}
\crefname{lemma}{Lem.}{Lem.}
\crefname{appendix}{App.}{App.}
\crefname{line}{line}{line}
\crefname{section}{Sec.}{Sec.}
\crefname{corollary}{Cor.}{Cor.}
\crefname{example}{Ex.}{Ex.}
\crefname{line}{Line}{Lines}
\crefname{item}{Item}{Items}
\crefname{subsection}{Subsec.}{Subsec.}
\crefname{inpara}{Inline Item}{Inline Item}
\crefname{figure}{Fig.}{Fig.}
\crefname{table}{Table}{Tables}
\newcommand{\Cpre}{\mathsf{Cpre}^1}
\newcommand{\Cpresys}{\mathsf{Cpre}^0}
\newcommand{\Lpre}{\mathsf{Lpre}^{\forall}}
\newcommand{\Lpreexists}{\mathsf{Lpre}^{\exists}}
\newcommand{\Preexists}{\mathsf{Pre}_1^{\exists}}
\newcommand{\Preforall}{\mathsf{Pre}_0^{\forall}}
\newcommand{\subgraph}{\mathsf{H}}
\newcommand{\subgr}{\mathsf{F}}
\newcommand{\Prog}{\mathtt{pr}}
\newcommand{\Progc}{\mathtt{pr}^{\circ}}
  \newcommand{\IS}[1]{\textcolor{red!50}{[\textbf{IS:} #1]}}
  \newcommand{\AKS}[1]{\textcolor{violet!50}{[\textbf{AKS:} #1]}}
  \newcommand{\MN}[1]{\textcolor{cyan!50}
  {[\textbf{MN:} #1]}}
  \newcommand{\SaS}[1]{\textcolor{purple!50}{[\textbf{SS:} #1]}}
  \newcommand{\IS}[1]{}
  \newcommand{\AKS}[1]{}
  \newcommand{\MN}[1]{}
  \newcommand{\SaS}[1]{}
\newcommand{\Wsys}{\mathsf{Win}_0}
\newcommand{\Wenv}{\mathsf{Win}_1}
\newcommand{\Win}{\mathsf{Win}}
\newcommand{\XX}{\mathbf{B}}
\newcommand{\gamegraph}{G}
\newcommand{\ltup}[1]{\langle #1\rangle}
\newcommand{\spec}{\Phi}
\newcommand{\Inf}{\mathsf{Inf}}
\newcommand{\cupdot}{\mathbin{\mathaccent\cdot\cup}}
\newcommand{\Psys}{Player 0\xspace}
\newcommand{\Penv}{Player 1\xspace}
\newcommand{\Vsys}{V_0}
\newcommand{\Venv}{V_1}
\newcommand{\tup}[1]{\left( #1\right)}
\definecolor{colored}{rgb}{1,0.5,0}
\newtcolorbox{resp}[1][]{%
	enhanced jigsaw,%
	colback=gray!5!white,%
	colframe=gray!80!black,%
	size=small,%
	boxrule=1pt,%
	halign title=flush center,%
	coltitle=black,%
	breakable,%
	drop shadow=black!50!white,%
	attach boxed title to top left={xshift=1cm,yshift=-\tcboxedtitleheight/2,yshifttext=-\tcboxedtitleheight/2},%
	minipage boxed title=3cm,%
	boxed title style={%
		colback=white,%
		size=fbox,%
		boxrule=1pt,%
		boxsep=2pt,%
		underlay={%
			\coordinate (dotA) at ($(interior.west) + (-0.5pt,0)$);
			\coordinate (dotB) at ($(interior.east) + (0.5pt,0)$);
			\begin{scope}[gray!80!black]
				\fill (dotA) circle (2pt);
				\fill (dotB) circle (2pt);
			\end{scope}
		}%
	},%
	#1%
}
\begin{document}

\maketitle

\begin{abstract}
We address the synthesis of control policies for unknown discrete-time stochastic dynamical systems to satisfy temporal logic objectives. We present a data-driven, abstraction-based control framework that integrates online learning with novel incremental game-solving. Under appropriate continuity assumptions, our method abstracts the system dynamics into a finite stochastic (2.5-player) game graph derived from data. Given a
requirement over time 
on this graph, we compute the winning region---i.e., the set of initial states from which the objective is satisfiable---in the resulting game, together with a corresponding control policy.

Our main contribution is the construction of abstractions, winning regions and control policies \emph{incrementally}, as data about the system dynamics accumulates. Concretely, our algorithm refines under- and over-approximations of reachable sets for each state-action pair as new data samples arrive. These refinements induce structural modifications in the game graph abstraction---such as the addition or removal of nodes and edges---which in turn modify the winning region. Crucially, we show that these updates are inherently monotonic: under-approximations only grow, over-approximations only shrink, and the winning region only expands.

We exploit this monotonicity by defining an objective-induced ranking function on the nodes of the abstract game that increases monotonically as new data samples are incorporated. These ranks 
underpin our novel incremental game-solving algorithm, which employs customized gadgets (DAG-like subgames) within a rank-lifting algorithm to efficiently update the winning region. Numerical case studies demonstrate significant computational savings compared to the baseline approach, which re-solves the entire game from scratch whenever new data samples arrive.

\end{abstract}

\begin{links}
    \link{Code}{https://github.com/nazerimahdi/AAAI-26}
\end{links}

\section{Introduction}

\paragraph{Motivation} Guaranteeing correct behavior in safety-critical systems---such as autonomous vehicles, robotic platforms, or air traffic control---requires control policies that satisfy high-level specifications under uncertainty \cite{belta2019formal}. A rich class of such specifications can be expressed using temporal logic \cite{baier2008principles}, which captures complex time-based goals like ``eventually reach a goal while always avoiding obstacles.'' In the presence of stochastic dynamics, where randomness plays a central role, one wants to synthesize a policy that satisfies these specifications with high probability \cite{LSAZ2022automated}. %

\paragraph{Challenges}
A traditional approach for synthesizing policies with temporal logic guarantees is to construct a finite abstraction of the system which encodes the interaction between the controller, the system’s stochastic behavior, and the specification as a game which can be solved by standard techniques \cite{abate2010approximate,majumdar2024symbolic}. However, such abstraction-based controller synthesis methods typically assume known system models, and rely on static abstractions that must be rebuilt from scratch when the knowledge about the system dynamics changes. This makes them poorly suited when system dynamics are unknown and data-driven or online learning settings are used to infer game abstractions which then evolve over time and induce policy updates which need to be handled efficiently.

\paragraph{Contribution}
This paper introduces the first \emph{incremental}  data-driven abstraction-based control framework for discrete-time stochastic systems with unknown dynamics, which refines abstractions and winning regions incrementally---and therefore very efficiently---as new data about the system dynamics becomes available.

\section{Overview}\label{sec:overview}

\begin{figure*}[t]
  \centering
  \includegraphics[width=\textwidth]{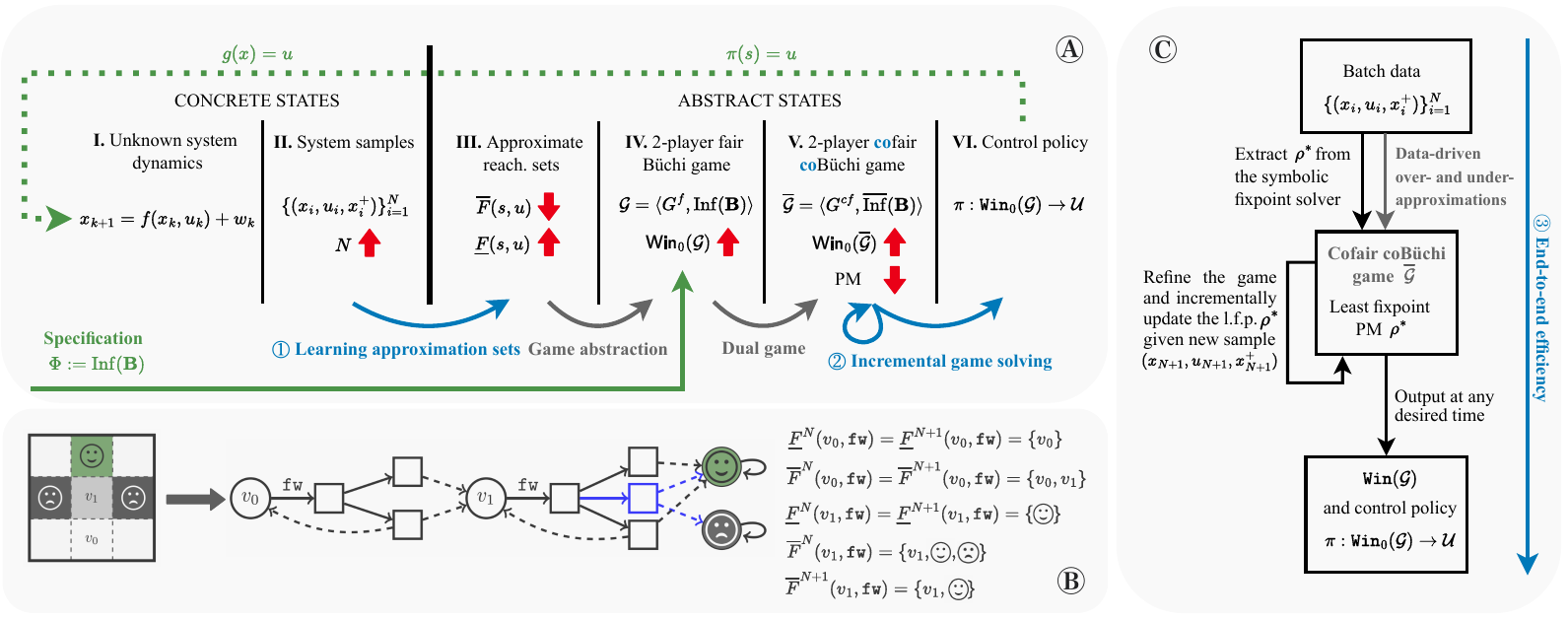}
  \vspace{-0.5cm}
  \caption{Overview of (A) incremental synthesis, (B) motivating example, and (C) end-to-end data-driven synthesis. Our contributions \circled{1},\circled{2}, and \circled{3}, are highlighted in blue, along with the proven learning-induced monotonicity highlighted in red.
  }
  \vspace{-0.2cm}
  \label{fig:overview}
\end{figure*}

The overview of our new incremental synthesis framework is depicted in Fig.~\ref{fig:overview} (A) \& (C) along with a running example in Fig.~\ref{fig:overview} (B) to concretize our contribution.

\paragraph{Running Example}
Consider a simple 2D car which is only moving in one direction (forward). The car dynamics can be modeled by a \emph{stochastic differential equation} (Fig.~\ref{fig:overview} (A)-I). However, we have no access to this model and can only collect a large set of (noisy) samples $(x,u,x^+)$ (Fig.~\ref{fig:overview} (A)-II). That is, we can put the car in different positions $x$, choose a (constant) control input $u$ and observe its next position $x^+$ after a fixed time period $\tau$.

In addition, we are given a \emph{temporal specification} $\spec$ (Fig.~\ref{fig:overview} green) which requires (among other obligations) that the car should reach a target location eventually, after passing through a gate. In order to formalize this specification, the workspace is divided into regions of interest, as examplified in Fig.~\ref{fig:overview} (B)-left. Here, the black cells represent walls, the green cell represents the target, and the gray cell represents the gate. The car is initially in the cell $v_0$.  

\paragraph{Problem Statement}
Consider the dynamics of the car generally modeled as a discrete-time, continuous-state stochastic system $\mathcal{S}$ such that at time step $k$,
\begin{equation}
\label{eq:dynamics}
    x_{k+1} = f(x_k, u_k) + w_k,
\end{equation}
with state $x_k\in \mathcal{X}$, control input $u_k\in \mathcal{U}$, and noise $w_k\in \Omega$. %
The function $f$ denotes the nominal (noiseless) dynamics.

The \emph{overall problem} we want to solve, is to compute a control policy $g: \mathcal{X}\rightarrow \mathcal{U}$ which maps the current state $x_k$ of the car to an input $u_k$ s.t.\ the resulting \emph{closed loop} trajectory $\xi=x_0x_1x_2\dots$ eventually visits the target cell after passing through the gate with probability one, i.e.\ \emph{almost surely} satisfying $\spec$. The challenge in solving this problem is to gather enough knowledge about the system dynamics and represent this knowledge in an appropriate way, %
such that a control policy can be computed which always enforces such strong correctness guarantees when used to control the actual system, e.g., the car.

\paragraph{Existing Approaches}
To address this problem, a discretization of the state space (e.g.\ by a grid as depicted in~\cref{fig:overview} (B)-left) is typically utilized. Each cell is an \emph{abstract state}, and transition probabilities between these cells can either be computed using the model (if it is known) or learned from samples. This gathered knowledge can then be used to build different types of abstractions (see \citet{LSAZ2022automated} for a recent review; seminal works include \citet{tabuada2009verification, abate2008probabilistic,Soudjani2013adaptive} with recent extensions to data-driven approaches \citet{gracia2023distributionally,nazeri2025data, DBLP:journals/corr/abs-2508-15543}).

These abstractions are formulated as a \emph{game} between the controller (Player 0) and the environment (Player 1), as illustrated in Fig.~\ref{fig:overview} (B), with controller (circle) and environment (box) vertices capturing their interplay.
Intuitively, due to the state-space discretization, the incomplete system knowledge, and the stochastic nature of the system (noise), multiple cells are reachable by the car from a particular cell with the same control input. The environment player captures this non-determinism via the transition structure of the game. 

Historically, such \emph{two-player games on finite graphs} are used to abstract the computation flow of reactive programs in computer science and many algorithmic solution procedures exist \cite{gamesongraphs}. 
Given a temporal logic specification $\spec$ (e.g., visiting the target after passing the gate) over a two-player game (e.g., resulting from abstraction) a solution of the game constitutes a control policy $\pi$ that continuously reacts to the actions of the environment s.t.\ $\spec$ holds. Such specifications are widely used to formalize both safety and strategic liveness objectives for cyber-physical systems, e.g.\ for mobile robot navigation \cite{KressGazitFainekosPappas2009,kress2018synthesisForRobotsReview} or in autonomous driving~\cite{AlthoffBeltaReviewFMCEforAutonomousDriving}.

If a control player state allows for a control policy that fulfills the specification, it is called \emph{winning} and is contained in the \emph{winning region} $\Wsys$. Algorithmic games solving techniques typically compute the winning regions first (Fig.~\ref{fig:overview} (A)-IV) and then derive a control policy (Fig.~\ref{fig:overview} (A)-VI). When game abstractions are built carefully, these control policies $\pi$ can be refined to a controller $g$ which controls the underlying dynamical system s.t.\ the above problem statement is successfully addressed (Fig.~\ref{fig:overview} green dashed arrow).

\paragraph{Limitation}
The major limitation of the above approach is its non-flexibility when new knowledge about the system dynamics is obtained. Given either the system dynamics or a \emph{fixed set} of samples, a game abstraction is computed and solved. When new information about the system dynamics is obtained, the entire abstraction needs to be recomputed and previous knowledge about the game and its solution (i.e., the winning region) are lost. %

\paragraph{Contribution}
To address this limitation, we provide three distinct contributions:
\begin{enumerate}
    \item[\circled{1}] We introduce a novel learning-based abstraction technique which learns under- and over-approximations of reachable sets. This enables incremental refinements of game-abstractions (indicated in Fig.~\ref{fig:overview} by red up/down arrows) when new data samples arrive. 
    \item[\circled{2}] We derive a novel game solving algorithm that incrementally updates the winning region by exploiting the monotonicity of game graph updates. %
\end{enumerate}
While \circled{1} can be combined with the abstraction-based controller synthesis tool \texttt{FairSyn} \cite{majumdar2024symbolic,majumdar2023flexible} when system dynamics are unknown, \circled{1} can also be combined with \circled{2} to obtain a novel efficient incremental synthesis algorithm when a small set of new samples arrives. In practice, however, we need both. We typically first have a large data set and compute an initial game abstraction along with its winning region, which we would then like to refine if (a small number of) new samples arrive. Unfortunately, initialization of \circled{2} with the solution of %
a symbolic fixpoint solver is highly non-trivial. Our final contribution is:
\begin{enumerate}
    \item[\circled{3}] We provide an efficient end-to-end algorithm that initializes the incremental data-driven abstraction based controller synthesis algorithm which combines \circled{1} and \circled{2} with the \emph{batch} solution which combines \circled{1} and a symbolic fixpoint solver %
    (see Fig.~\ref{fig:overview} (C) for an illustration).
\end{enumerate}
On a conceptual level, all above contributions are enabled by a formal connection of 
the monotonicity property of abstraction learning and the monotonicity requirements of incremental game solving (highlighted by red arrows in~\cref{fig:overview}). 

\smallskip
Before we formalize our contributions, we overview them briefly using the introduced running example.

\paragraph{\circled{1} Learning Approximate Reachable Sets}
Going back to our running example, let $s$ be an abstract state (i.e., a grid cell) and $\texttt{fw}$ the action \emph{forward}. Then the under-approximation $\underline{F}(s, \texttt{fw})$ carries the information \emph{if the car takes the action $\texttt{fw}$ from $s$ infinitely often, which cells $s'$ will it end up \textbf{for sure} (i.e., with probability one)?} On the other hand the over-approximation  $\overline{F}(s, \texttt{fw})$ carries the information \emph{if the action $\texttt{fw}$ is taken from $s$ infinitely often, which cells \textbf{can it} end up in (i.e., with positive probability)?}
For our concrete car example, it might happen that after applying $\texttt{fw}$ in $v_0$, the car ends up in $v_0$ in the next time step (i.e., $v_0\in\overline{F}(v_0,\texttt{fw}$), while we know that \emph{eventually} $v_1$ will be reached under this action, (i.e., $v_1\in\underline{F}(v_0,\texttt{fw})$).

In this paper, we present a novel algorithm to compute $\overline{F}^N, \underline{F}^N$ from a data set $\mathbf{D_N} = \{(x_i, u_i, x_i^+)\}_{i=1}^{N}$ for each abstract state-input pair when the specification $\spec$ is Büchi. Given these under- and over-approximations, \citet{majumdar2024symbolic} formalizes how an abstract fair Büchi game $\langle G^f,\spec\rangle$ can be built whose solution solves the above problem statement (\emph{game abstraction} \cref{fig:overview} (A) III$\to$IV). Intuitively, such games (an example is~\cref{fig:overview} (B)) include \emph{fair environment edges} (dashed) which restrict the choices of the environment player such that such edges need to be taken infinitely often, when the source vertex is seen infinitely often, which abstractly captures the semantics of $\underline{F}$ as exemplified above. 

\paragraph{\circled{2} Incremental Game Solving}
It is well known that solving graph games incrementally is difficult. The very few existing techniques \cite{ChatterjeeHenzinger14, SaglamST24} rely on a ranking function, called \emph{progress measure} (PM). %
The advantage of PM algorithms is that PMs are (1) only locally updated and (2) allow to extract the winning region once their value has stabilized. %
Therefore, if the mentioned local PM updates are efficiently implementable and the game graph changes are mild, PM algorithms might be used to efficiently re-compute the winning regions.
Unfortunately, the PM of a game depends on the specification and they are often either not known or too complicated to compute efficiently. Further, to obtain an efficient PM-based incremental algorithm for a game class, 
we should re-solve the game strictly under monotonic graph modifications, i.e., modifications that can only increase the PM.

In this work, our main contribution is to 
(i) show that our online learning framework yields monotonic graph modifications for the \emph{dual of} the abstract fair Büchi game $\mathcal{G}=\langle G^f,\spec\rangle$---i.e.,  the induced \emph{co}fair \emph{co}Büchi game $\overline{\mathcal{G}}=\langle G^{cf},\overline{\spec}\rangle$, and
(ii) construct a PM for \emph{co}fair \emph{co}Büchi games (Fig.~\ref{fig:overview} (A)-V), which allows to incrementally refine the winning region $\Wsys(\mathcal{G})$. 

While the illustrated game in Fig.~\ref{fig:overview} (B) is built as a fair Büchi game, due to their duality, it can simply be re-interpreted as a \emph{co}fair \emph{co}Büchi game by swapping the players (now fair moves are owned by the controller) and the winning condition (now $\smiley$ should only be visited finitely often \emph{by the (new) control player}). 
In the illustrated fair Büchi game we see that a refinement of $\overline{F}(v_1, \texttt{fw}) = \{v_1, \smiley, \frownie\}$ to $\{v_1, \smiley\}$ leads to the removal of the purple edges in the game. In the modified game (only black edges) we see that $\frownie$ being unreachable from $v_0$ lets the control player win the game. This is due to the fair (dashed) edges: Whenever an environment node is passed infinitely often, all its dashed edges are taken infinitely often as well. Therefore, every play starting from $v_0$ eventually reaches $\smiley$. 
Intuitively, the refinements of approximation sets ($\underline{F}$ expanding/$\overline{F}$ shrinking) yield graph modifications that only expand the controller winning region in the fair Büchi game. This means, the fair Büchi PM can only \emph{decrease} after graph modifications. In turn, the cofair coBüchi PM in the dual cofair coBüchi game can only \emph{increase}. This makes the learning-induced graph modifications \emph{monotonic} on the dual cofair coBüchi game.

\paragraph{\circled{3} End-to-End Efficiency}
We note that constructing a PM for cofair coBüchi games is a novel technique for algorithmic game solving. It uses a novel gadget (a small DAG-like subgame) construction to turn a cofair coBüchi game into a regular coBüchi game. This allows to \emph{pull} the (known) coBüchi PM on the reduced coBüchi game (introduced by~\citet{Jurdzinskismall00}) \emph{back} to the cofair coBüchi game, which yields a PM for cofair coBüchi games. The gadgets introduced in this work are inspired by the gadgets in~\citet{HausmannPSS24} but are more compact. However, a crucial feature of our novel gadgets is their consistency with the (known) cofair coBüchi fixpoint algorithm~\cite{banerjee2023fast}. This allows us to use
a symbolic fixpoint solver to initialize our algorithm.\footnote{The initialization step can be carried out by any off‐the‐shelf symbolic fixpoint solver (e.g.\ \texttt{FairSyn}).  We instead use our local solver because interfacing with \texttt{FairSyn} is nontrivial and it provides no specialized optimizations for the two‐nested fixpoints arising in Büchi/coBüchi games.}

\section{Learning Abstract Reachable Sets}\label{sec:learning}
This section details our first contribution (cf. Fig.~\ref{fig:overview}, \circled{1}). 

\subsection{Learning Bounds on System Dynamics}
Closely related to prior work \cite{zabinsky2003optimal, beliakov2006interpolation, jin2020data}, we first present a data-driven method for learning upper and lower bounds of the unknown dynamics $f$. %

\paragraph{Assumptions}
Our approach relies on two assumptions:\\
\begin{inparaitem}[$\blacktriangleright$]
\item \textbf{(Lipschitz continuity)} The system dynamics $f$ are unknown but Lipschitz continuous in $x$, with a known upper bound $L_X$ on its Lipschitz constant. Formally,
$$
\forall x, y \in \mathcal{X},\; \forall u \in \mathcal{U}: \quad \|f(x, u) - f(y, u)\| \leq L_X \cdot \|x - y\|.\footnote{We use $||v||$ to denote the $L^\infty$ norm of vector $v$.}
$$
\item \textbf{(Bounded noise support)} The noise $w_k$ is supported on a known compact set
$\Omega = [l, h] \subset \mathbb{R}^n$, although the underlying probability distribution $\mathbb{P}$ is unknown.\footnote{
For vectors $l, h \in \mathbb{R}^n$, $[l, h]$ is the axis-aligned hyperrectangle defined by the Cartesian product
$
[l, h] := \prod_{i=1}^n [l(i), h(i)]$. }
\end{inparaitem}

These assumptions imply that our framework only requires state measurements without requiring the noise measurements. It relies only on the knowledge of the support of the noise without requiring any information about its probability distribution or access to independent noise samples.
This allows for practical settings in which observations are easily obtainable, for example, by recording the system in either open- or closed-loop operation. We note that using conservative values for both the Lipschitz constants and the noise support preserves the soundness of our approach.

\paragraph{Learning bounds from Samples}
Given a dataset $\mathbf{D_N}$ and Lipschitz constant $L_X$, we fix an arbitrary state $x_* \in \mathbb{R}^n$ and control input $u_* \in \mathcal{U}$. This induces the subset of data samples 
$
\mathbf{D_N}(u_*) = \{(x_i, u_i, x_i^+) \in \mathbf{D_N} \mid u_i = u_*\}\subseteq \mathbf{D_N}.
$ s.t.\
for all $(x_i, u_*, x_i^+) \in \mathbf{D_N}(u_*)$, 
$
x_i^+ = f(x_i, u_*) + w_i.
$
Since the noise term $w_i$ is unknown but bounded within the compact set $\Omega = [l,h]$, we obtain the bounds (see Fig.~\ref{fig:learning})
\begin{align}\label{eq:bounds}
f(x_*, u_*) \in~ &[\check{f}(x_*, u_*|\mathbf{D_N}), \hat{f}(x_*, u_*|\mathbf{D_N})],~\text{s.t.}\\
\check{f}(x_*, u_*|\mathbf{D_N}) := &\max_{\substack{(x_i, u_*, x_i^+) \\ \in \mathbf{D_N}(u_*)}}
\left(x_i^+ - L_X\cdot\|x_i - x_*\|\right) - h,\notag\\
\hat{f}(x_*, u_*|\mathbf{D_N}) := &\min_{\substack{(x_i, u_*, x_i^+) \\ \in \mathbf{D_N}(u_*)}}
\left(x_i^+ + L_X\cdot\|x_i - x_*\|\right) - l.\notag
\end{align}

\begin{figure}
    \centering
    \includegraphics[width=0.8\linewidth]{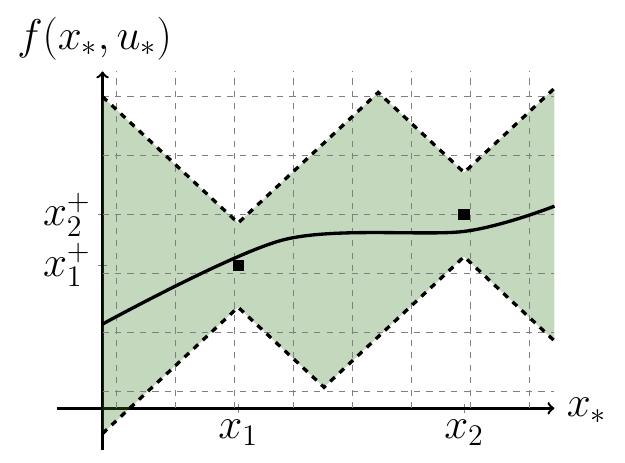}
    \caption{Unknown function $f(x_*, u_*)$ (black), two noisy observations $x_1$, $x_2$ (small squares) and learned lower and upper bounds $\check{f}(x_*, u_*|\mathbf{D_2})$, $\hat{f}(x_*, u_*|\mathbf{D_2})$ (dashed).}
    \label{fig:learning}
\end{figure}
\subsection{Constructing Approximate Reachable Sets}
As explained in the overview section, this paper follows an abstraction-based synthesis paradigm which requires the formal definition of the abstract state and action space (i.e., going from the left to the right side of Fig.~\ref{fig:overview} (A)). 

\paragraph{Abstract States and Transitions} 
We partition the continuous state space $\mathcal{X} \subseteq \mathbb{R}^n$ into $v$ disjoint regions $\{R_i\}_{i=1}^v$ such that $\mathcal{X} = \bigcup_{i=1}^v R_i$ and $ R_i \cap R_j = \emptyset$ for all $i \neq j$.
Each region $R_i$ is represented by an \emph{abstract state} $s_i\in S := \{s_i\}_{i=1}^{v}$. We define a translation function s.t.\ for all $1 \leq i \leq v$:
$\mathcal{T}(x) = s_i$ for all $x \in R_i$ and
$\mathcal{T}^{-1}(s_i) = R_i$.
We define 
$
T(s \mid x, u) = \mathbb{P}_w\big(f(x, u) + w \in \mathcal{T}^{-1}(s)\big)
$
as the probability of reaching $s$ from $x$ under input $u$.

\paragraph{Approximate Reachable Sets}
We now derive under- and over-approximations of abstract reachable states from the learned bounds via Eq.~\ref{eq:bounds} (cf. Fig.~\ref{fig:overview} (A)-III). 
\begin{restatable}{thm}{ouapproximation}\label{thm:ouapproximation}
The following sets of abstract states,
\begin{align*}
\underline{F}(s, u|\mathbf{D_N}) &:= \{s' \in S \mid \mathcal{T}^{-1}(s') \subseteq \underline{f}(s, u|\mathbf{D_N})\} \text{ and}\\
\overline{F}(s, u|\mathbf{D_N}) &:= \{s' \in S \mid \mathcal{T}^{-1}(s') \cap \overline{f}(s, u|\mathbf{D_N}) \neq \emptyset\},
\end{align*}
define\footnote{We sometimes use $\overline{F}^N(s, u)$ to denote $\overline{F}(s, u|\mathbf{D_N})$, and often also omit the dependence on $\mathbf{D_N}$ and use $\overline{F}(s, u)$. Similarly for $\underline{F}$.}
an under- and over-approximation of the forward reachable set of $s$ under input $u$, respectively,
where 
\begin{align*}
&\overline{f}(s, u|\mathbf{D_N}) :=\\
&\left[\min_{x_* \in \mathcal{T}^{-1}(s)} \check{f}(x_*, u|\mathbf{D_N}) + l,\!\max_{x_* \in \mathcal{T}^{-1}(s)} \hat{f}(x_*, u|\mathbf{D_N}) + h\right]\!,\\
& \text{and }\underline{f}(s, u|\mathbf{D_N}):=\\
&\left[\max_{x_* \in \mathcal{T}^{-1}(s)} \hat{f}(x_*, u|\mathbf{D_N}) + l,\!\min_{x_* \in \mathcal{T}^{-1}(s)} \check{f}(x_*, u|\mathbf{D_N}) + h\right]\!.
\end{align*}
\end{restatable}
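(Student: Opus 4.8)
The plan is to reduce both claims to the pointwise envelope of Eq.~\ref{eq:bounds}, which guarantees $f(x_*,u) \in [\check{f}(x_*,u\mid\mathbf{D_N}),\,\hat{f}(x_*,u\mid\mathbf{D_N})]$ componentwise for every $x_* \in \mathcal{T}^{-1}(s)$, and then to lift this to the whole source cell $\mathcal{T}^{-1}(s)$ via the coordinatewise $\max/\min$. I work componentwise throughout, since in the $L^\infty$ setting every interval is an axis-aligned box and the reachable set from a fixed concrete state $x$ is exactly the box $\mathrm{Post}(x,u) := [f(x,u)+l,\, f(x,u)+h]$. Making the statement precise, I read \emph{over-approximation} as: for every $x \in \mathcal{T}^{-1}(s)$, every abstract state $s'$ with $\mathrm{Post}(x,u)\cap\mathcal{T}^{-1}(s')\neq\emptyset$ lies in $\overline{F}(s,u)$; and \emph{under-approximation} as: every $s' \in \underline{F}(s,u)$ satisfies $\mathcal{T}^{-1}(s')\subseteq\mathrm{Post}(x,u)$ for \emph{every} $x \in \mathcal{T}^{-1}(s)$, i.e.\ $s'$ is reachable from every concrete state of the source cell.

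For the over-approximation I would fix a witness $x\in\mathcal{T}^{-1}(s)$, $w\in\Omega$ and a point $y = f(x,u)+w \in \mathcal{T}^{-1}(s')$. From Eq.~\ref{eq:bounds} and $w\in[l,h]$ I obtain, componentwise, $y \ge \check{f}(x,u)+l \ge \min_{x_*\in\mathcal{T}^{-1}(s)}\check{f}(x_*,u)+l$ and $y \le \hat{f}(x,u)+h \le \max_{x_*\in\mathcal{T}^{-1}(s)}\hat{f}(x_*,u)+h$; these are precisely the two endpoints of $\overline{f}(s,u)$, so $y\in\overline{f}(s,u)$. Since $y\in\mathcal{T}^{-1}(s')$ as well, $\mathcal{T}^{-1}(s')\cap\overline{f}(s,u)\neq\emptyset$ and hence $s'\in\overline{F}(s,u)$, as required.

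For the under-approximation the key step is the inclusion $\underline{f}(s,u)\subseteq\bigcap_{x\in\mathcal{T}^{-1}(s)}\mathrm{Post}(x,u)$. Indeed, using $\check{f}(x,u)\le f(x,u)\le\hat{f}(x,u)$ from Eq.~\ref{eq:bounds}, the left endpoint $\max_{x_*}\hat{f}(x_*,u)+l$ of $\underline{f}(s,u)$ dominates $f(x,u)+l$ for every $x$, while its right endpoint $\min_{x_*}\check{f}(x_*,u)+h$ is dominated by $f(x,u)+h$ for every $x$; hence every $y\in\underline{f}(s,u)$ satisfies $f(x,u)+l\le y\le f(x,u)+h$, i.e.\ $y-f(x,u)\in\Omega$, for all $x$, so $y\in\mathrm{Post}(x,u)$ for all $x$. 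Now if $s'\in\underline{F}(s,u)$, then $\mathcal{T}^{-1}(s')\subseteq\underline{f}(s,u)$, so every point of $\mathcal{T}^{-1}(s')$ is reachable from every concrete $x\in\mathcal{T}^{-1}(s)$. This is exactly the support condition justifying the fair ``must''-edges of the overview: $s'$ can be reached from \emph{every} concrete state of the source cell.

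The routine parts are the two inequality chains; the part needing care is the swapped $\max/\min$ in $\underline{f}$ versus $\overline{f}$. The main obstacle I anticipate is pinning down the exact semantics being claimed and handling degenerate cases: because $\underline{f}(s,u)$ uses $\max_{x_*}\hat{f}$ as its lower endpoint and $\min_{x_*}\check{f}$ as its upper endpoint, the box can be empty in one or more coordinates (when the cell is large or the data sparse relative to $L_X$), in which case $\underline{F}(s,u)=\emptyset$; I must confirm this is a vacuously valid under-approximation rather than a gap. A secondary subtlety is that all bounds are read coordinatewise and the $\max/\min$ over the (possibly infinite or non-closed) cell $\mathcal{T}^{-1}(s)$ must be attained — compactness of the cell closures together with continuity of $\check{f},\hat{f}$ in $x_*$ (each is a finite $\max$/$\min$ of Lipschitz functions) handles this, and I would state that explicitly.
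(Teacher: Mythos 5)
Your proof is correct and takes essentially the same route as the paper's: both directions reduce to the envelope of Eq.~\ref{eq:bounds}, the over-approximation by showing the reached point $y=f(x,u)+w$ lies in $\overline{f}(s,u)$ (the paper phrases this as nonempty interval intersections, yours as a pointwise witness --- a cosmetic difference), and the under-approximation via exactly the paper's key inclusion $\underline{f}(s,u)\subseteq\bigcap_{x\in\mathcal{T}^{-1}(s)}[f(x,u)+l,\,f(x,u)+h]$. Your closing remarks on the possibly empty $\underline{f}(s,u)$ box (vacuously a valid under-approximation) and on attainment of the coordinatewise $\max/\min$ over cell closures add rigor the paper's proof silently glosses over, but do not alter the argument.
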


\subsection{Abstract Fair Büchi Games}
To complete the picture, we recall from \cite{majumdar2024symbolic} how $\underline{F}(s, u)$ and $\overline{F}(s, u)$ are used to construct a fair game graph (cf. Fig.~\ref{fig:overview} (A) III$\to$IV).  

\paragraph{(Co)Fair Game Graphs}
Formally, a \emph{fair game graph} is a tuple $\gamegraph^f= (\gamegraph, E^f)$ where 
 $\gamegraph= \tup{V,\Vsys,\Venv,E}$ is a game graph s.t.\ $(V,E)$ is a finite directed graph with \emph{edges} $ E $ and \emph{vertices} $ V $ partitioned into \Psys and \Penv vertices, $\Vsys$ (circles) and $\Venv$ (squares), respectively. \emph{Fair edges} (dashed) $E^f \subseteq E \cap (\Venv \times V)$ is a subset of edges originating from $\Venv$ vertices and $V^f = \{ v \in V \mid E^f(v) \neq \emptyset\}$ are called \emph{fair vertices}.
A game graph is called \emph{normal} (or \emph{non-fair}) if $E^f = \emptyset$ and is shown by $G$. It is called \textbf{cofair} if $V^f \subseteq \Vsys$ instead of $\Venv$, and is shown by $G^{cf}$.
 
A play $\xi = v_1 v_2 \ldots$ is an infinite sequence of successive vertices on the game graph and is called \emph{fair} if for each $v \in V^f$, $v \in \Inf(\xi) \Rightarrow E^f(v) \subseteq \Inf(\xi)$. Here, $\Inf(\xi)$ is the set of vertices visited infinitely often in $\xi$.

\paragraph{Games} A game is a tuple $\ltup{G^{\circ}, \spec}$ of a (normal/fair/cofair) game graph $G^\circ$ with a specification $\spec$, which is a set of \Psys-winning plays. A play $\xi$ is winning for \Psys (i) in a normal game, iff $\xi\in\spec$, (ii) in a fair game iff $\xi \in \spec $ or $\xi$ is \emph{not fair}, (iii) in a cofair game iff $\xi \in \spec $ and $\xi$ is \emph{ fair}. Otherwise, $\xi$ is winning for \Penv. A Player $j \in \{0, 1\}$ strategy is a function $\sigma: V^* \cdot V_j \to V$ where $(v, \sigma(h\cdot v))\in E$ for every $h \in V^*$, and $\xi = v_1 v_2 \ldots$ is a $\sigma$-play iff for every $i$, $v_i \in V_j \Rightarrow \sigma(v_1 \ldots v_i) = v_{i+1}$. 
A Player $j$ strategy $\sigma$ is \emph{winning} from $v_1$ iff every $\sigma$-play $\xi = v_1 v_2 \ldots$ is Player $j$-winning. A node $v$ is in the winning region of Player $j$ ($\texttt{Win}_j(G, \spec)$) iff there is a Player $j$ strategy winning from $v$. For all the games we consider, $V = \Wsys(G, \spec) \uplus \Wenv(G, \spec)$.

\paragraph{Abstract fair game graph}
Let $\underline{F}(s, u)$ and $\overline{F}(s, u)$ be the learned approximate reachable sets for every abstract state and input pair $(s,u)$, as in Thm.~\ref{thm:ouapproximation}. 
They induce the following abstract fair game graph $\gamegraph^f$ \cite{majumdar2024symbolic}: 
\begin{align*}
    \Vsys = \,&\{s \mid s \in S\}, \\
\Venv = \,&\{ s^u \mid s \in \Vsys \text{ and } u \in \mathcal{U}\} \,\, \cup \\
& \textstyle \bigcup_{i \in [0, m^{s, u}]} s^u_i \text{  where } m^{s, u} = | \overline{F}(s, u) | - | \underline{F}(s, u) |. 
\end{align*}
For every $s \in S, u \in \mathcal{U}$, assign a fixed order to $\overline{F}(s, u) \setminus \underline{F}(s, u) = \{\overline{s}^{s, u}_1, \ldots, \overline{s}^{s, u}_{m_{s,u}}\}$. The edges are then defined as  
\begin{align*}
    E^f = \,&\{ (s^u_i, \underline{s}) \mid i \in [0, m^{s,u}], \, \underline{s} \in \underline{F}(s, u)\}\, \cup \\
    &\{ (s^u_i, \overline{s}^{s,u}_i) \mid i \in [1, m^{s,u}] \}, \\
    E = \,&E^f \,\cup\, \{(s, s^u) \mid u \in \mathcal{U} \}.
\end{align*}
See Fig.~\ref{fig:overview} (B) for an example, and the appendix for more details. Note that in the constructed game each abstract state is represented by a $\Vsys$ node $s$ (circle), and for each input $u\in \mathcal{U}$ (with $\mathcal{U}=\{\texttt{fw}\}$ in the example), a $\Venv$ node $s^u$ (first layer box-state after circle) is reached. Every $s^u$ has $(m^{s,u}+1)$ successors, all of which are fair nodes (second layer of box-states)---with all outgoing edges being fair---and all of them have all the states in $\underline{F}(s, u)$ as their successors. All but one fair node (i.e.,\ $s^u_0$) have exactly one more state from $\overline{F}(s, u) \setminus \underline{F}(s, u)$ as a (fair) successor.

It is proven by \citet{majumdar2024symbolic} that given any specification $\spec$ in Linear Temporal Logic \cite{baier2008principles} over $G^f$, the solution of the resulting game $\ltup{ G^f,\spec}$ provides a control policy $\pi$ that can be refined into a feedback controller $g$. Within this paper, we restrict attention to a subclass, Büchi and coBüchi specifications.

\paragraph{Büchi and coBüchi}
A Büchi specification $\spec = \Inf(\XX)$ is the set of all plays that visit a node from the set $\XX \subseteq V$ infinitely often. A coBüchi specification $\overline{\Inf}(\XX)$ is the set of plays that do not visit any node from $\XX$ infinitely often.

We assume to have an initial Büchi specification $\spec = \Inf(\XX)$ where $\XX \subseteq S$ is a set of abstract states to be seen infinitely often. Then, through the above construction we obtain an abstract (fair Büchi) game $\ltup{G^f, \Inf(\XX)}$ whose solution provides a control policy that satisfies $\Inf(\XX)$ with probability one (i.e. \emph{almost surely}) in the dynamical system.

\section{Exploiting Monotonicity}
This section details our next contribution (cf. Fig.~\ref{fig:overview} 
red arrows), connecting (i) the monotonicity of the learning step to (ii) monotonic game graph modifications and (iii) how to exploit the latter for incremental game solving.

\paragraph{Reachable Sets}
We first establish a monotonic behavior of the bounds $\check{f}$ and $\hat{f}$ under a new data sample. Intuitively, as more samples become available, these bounds capture more precise information about the unknown system dynamics.

\begin{restatable}{thm}{mono}\label{thm:mono}
For all $u_* \in \mathcal{U}$, $x_* \in \mathcal{X}$, and $N \in \mathbb{N}$, the following monotonicity properties hold:
$$\check{f}(x_*, u_*|\mathbf{D_N}) \preccurlyeq \check{f}(x_*, u_*|\mathbf{D_{N+1}}),$$
$$\hat{f}(x_*, u_*|\mathbf{D_N}) \succcurlyeq \hat{f}(x_*, u_*|\mathbf{D_{N+1}}).$$
\end{restatable}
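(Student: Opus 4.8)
The plan is to prove the two inequalities directly from the definitions in Eq.~\ref{eq:bounds}, using only the elementary fact that adding one more element to the index set over which we take a $\max$ (respectively $\min$) can only increase the $\max$ (respectively decrease the $\min$). This is a purely monotonicity-of-extrema argument and requires no appeal to the Lipschitz or bounded-noise assumptions beyond the fact that the bounds are well-defined.

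First I would fix arbitrary $u_* \in \mathcal{U}$, $x_* \in \mathcal{X}$, and $N \in \mathbb{N}$, and observe the key set containment
\begin{equation*}
\mathbf{D_N}(u_*) \subseteq \mathbf{D_{N+1}}(u_*).
\end{equation*}
Indeed, $\mathbf{D_{N+1}} = \mathbf{D_N} \cup \{(x_{N+1}, u_{N+1}, x_{N+1}^+)\}$, so every sample with control input $u_*$ in $\mathbf{D_N}$ is still present in $\mathbf{D_{N+1}}$; the new sample either has $u_{N+1} = u_*$ (adding exactly one element to the index set) or it does not (leaving the index set unchanged). In both cases the feasible index set for $\mathbf{D_{N+1}}(u_*)$ is a superset of that for $\mathbf{D_N}(u_*)$.

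Next I would carry out the two cases. Recalling that $\check{f}(x_*, u_*|\mathbf{D_N})$ is defined as a $\max$ over $(x_i, u_*, x_i^+) \in \mathbf{D_N}(u_*)$ of the quantity $x_i^+ - L_X\cdot\|x_i - x_*\| - h$, enlarging the index set to $\mathbf{D_{N+1}}(u_*)$ takes the maximum over a superset of the same terms, and a maximum over a larger set can only increase (or stay equal). This gives $\check{f}(x_*, u_*|\mathbf{D_N}) \leq \check{f}(x_*, u_*|\mathbf{D_{N+1}})$. Symmetrically, $\hat{f}(x_*, u_*|\mathbf{D_N})$ is a $\min$ over the same enlarging index set of $x_i^+ + L_X\cdot\|x_i - x_*\| - l$, and a minimum over a larger set can only decrease (or stay equal), yielding $\hat{f}(x_*, u_*|\mathbf{D_N}) \geq \hat{f}(x_*, u_*|\mathbf{D_{N+1}})$. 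Since $u_*$, $x_*$, and $N$ were arbitrary, both monotonicity properties hold universally.

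I do not anticipate a genuine obstacle here, since the statement reduces to the monotonicity of $\max$/$\min$ under index-set enlargement. The only point requiring a small amount of care is the degenerate case where $\mathbf{D_N}(u_*) = \emptyset$: then the $\max$/$\min$ defining $\check{f}, \hat{f}$ is over an empty set, and one should either adopt the convention $\check{f} = -\infty$, $\hat{f} = +\infty$ (so that the bounds $f(x_*,u_*) \in [\check f, \hat f]$ hold vacuously and the inequalities remain valid when the first relevant sample arrives) or simply state the theorem for indices $N$ large enough that each relevant control input has been sampled. Beyond this convention, the argument is complete.
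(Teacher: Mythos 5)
Your proof is correct and follows essentially the same route as the paper's: both reduce the claim to the monotonicity of $\max$/$\min$ under enlargement of the index set $\mathbf{D_N}(u_*) \subseteq \mathbf{D_{N+1}}(u_*)$, with the paper merely phrasing this as $\check{f}(x_*,u_*\mid\mathbf{D}_{N+1}) = \max\bigl(\check{f}(x_*,u_*\mid\mathbf{D}_N),\, x_{N+1}^+ - L_X\|x_{N+1}-x_*\| - h\bigr)$ and the symmetric identity for $\hat{f}$. Your explicit handling of the case $u_{N+1}\neq u_*$ and of the empty index set (via the $\mp\infty$ convention) is a small refinement the paper leaves implicit, but it changes nothing substantive.
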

This directly implies that the over- and under-approximations of the abstract reachable sets also become tighter. %

\begin{restatable}{thm}{UAOA}\label{thm:UAOA}
For all $s \in S$, $u \in \mathcal{U}$, and $N \in \mathbb{N}$, the following monotonicity properties hold:
$$\underline{F}(s, u|\mathbf{D_N}) \subseteq \underline{F}(s, u|\mathbf{D_{N+1}}),$$
$$
\quad \overline{F}(s, u|\mathbf{D_{N+1}}) \subseteq \overline{F}(s, u|\mathbf{D_N}).$$
\end{restatable}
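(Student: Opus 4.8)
The plan is to derive Theorem~\ref{thm:UAOA} as a direct corollary of Theorem~\ref{thm:mono} by tracking how the monotonicity of the pointwise bounds $\check f, \hat f$ propagates through the two $\max/\min$ aggregations that define the interval $\overline f(s,u\mid\mathbf{D_N})$ and $\underline f(s,u\mid\mathbf{D_N})$, and then through the set-containment definitions of $\overline F$ and $\underline F$. First I would fix an arbitrary $s\in S$, $u\in\mathcal U$, and $N\in\mathbb N$, and recall that Theorem~\ref{thm:mono} gives, for every $x_*\in\mathcal{T}^{-1}(s)$, the pointwise inequalities $\check f(x_*,u\mid\mathbf{D_N})\le\check f(x_*,u\mid\mathbf{D_{N+1}})$ and $\hat f(x_*,u\mid\mathbf{D_N})\ge\hat f(x_*,u\mid\mathbf{D_{N+1}})$ (understood componentwise on $\mathbb R^n$, with the additive constants $l,h$ unaffected).

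Next I would lift these to the aggregated endpoints. Since taking a $\max$ (resp. $\min$) over $x_*\in\mathcal{T}^{-1}(s)$ is monotone in its argument, the pointwise bounds yield
\begin{align*}
\max_{x_*}\hat f(x_*,u\mid\mathbf{D_{N+1}}) &\le \max_{x_*}\hat f(x_*,u\mid\mathbf{D_N}),\\
\min_{x_*}\check f(x_*,u\mid\mathbf{D_N}) &\le \min_{x_*}\check f(x_*,u\mid\mathbf{D_{N+1}}).
\end{align*}
The key observation is how these combine in the two interval definitions. For $\overline f$, the left endpoint uses $\min\check f+l$ (which increases with $N$) and the right endpoint uses $\max\hat f+h$ (which decreases with $N$), so the interval $\overline f(s,u\mid\mathbf{D_{N+1}})$ shrinks: $\overline f(s,u\mid\mathbf{D_{N+1}})\subseteq\overline f(s,u\mid\mathbf{D_N})$. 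For $\underline f$, the left endpoint is $\max\hat f+l$ (decreasing) and the right endpoint is $\min\check f+h$ (increasing), so the interval $\underline f(s,u\mid\mathbf{D_{N+1}})$ grows: $\underline f(s,u\mid\mathbf{D_N})\subseteq\underline f(s,u\mid\mathbf{D_{N+1}})$. I would verify this containment-of-hyperrectangles claim coordinatewise, reducing it to comparing scalar interval endpoints in each of the $n$ dimensions.

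Finally I would push the interval containments through the definitions of the abstract reachable sets from Theorem~\ref{thm:ouapproximation}. For the over-approximation, $s'\in\overline F(s,u\mid\mathbf{D_{N+1}})$ means $\mathcal T^{-1}(s')\cap\overline f(s,u\mid\mathbf{D_{N+1}})\neq\emptyset$; since $\overline f(s,u\mid\mathbf{D_{N+1}})\subseteq\overline f(s,u\mid\mathbf{D_N})$, a nonempty intersection with the smaller set forces a nonempty intersection with the larger one, giving $s'\in\overline F(s,u\mid\mathbf{D_N})$, hence $\overline F(s,u\mid\mathbf{D_{N+1}})\subseteq\overline F(s,u\mid\mathbf{D_N})$. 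For the under-approximation, $s'\in\underline F(s,u\mid\mathbf{D_N})$ means $\mathcal T^{-1}(s')\subseteq\underline f(s,u\mid\mathbf{D_N})$; since $\underline f(s,u\mid\mathbf{D_N})\subseteq\underline f(s,u\mid\mathbf{D_{N+1}})$, we get $\mathcal T^{-1}(s')\subseteq\underline f(s,u\mid\mathbf{D_{N+1}})$, hence $s'\in\underline F(s,u\mid\mathbf{D_{N+1}})$ and $\underline F(s,u\mid\mathbf{D_N})\subseteq\underline F(s,u\mid\mathbf{D_{N+1}})$. The argument is essentially routine given Theorem~\ref{thm:mono}; the only step requiring genuine care is the middle one, namely correctly bookkeeping which $\max$/$\min$ endpoint sits at which side of each interval so that $\overline f$ and $\underline f$ move in \emph{opposite} directions — this is the point where a sign or direction slip would invalidate the whole conclusion, so I would state the endpoint comparisons explicitly per coordinate rather than gesture at them.
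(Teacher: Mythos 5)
Your proposal is correct and follows essentially the same route as the paper's proof: invoke Theorem~\ref{thm:mono} pointwise, propagate through the $\max/\min$ aggregations to obtain $\overline f(s,u\mid\mathbf{D_{N+1}})\subseteq\overline f(s,u\mid\mathbf{D_N})$ and $\underline f(s,u\mid\mathbf{D_N})\subseteq\underline f(s,u\mid\mathbf{D_{N+1}})$, and then push these interval containments through the definitions of $\overline F$ and $\underline F$. Your version is in fact more explicit than the paper's, which states the interval containments without the coordinatewise endpoint bookkeeping you carry out.
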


\paragraph{Game Updates}
Let $\mathcal{G}_N = \langle\gamegraph^f_N, \Inf(\XX)\rangle$ and
$\mathcal{G}_{N+1} = \ltup{\gamegraph^f_{N+1}, \Inf(\XX)}$
be abstract fair Büchi games induced by the under- and over-approximations  $\underline{F}^{N}(s, u)$, $\overline{F}^N(s, u)$ and $\underline{F}^{N+1}(s, u)$, $\overline{F}^{N+1}(s, u)$, for all $s \in S, u \in \mathcal{U}$, respectively. That is, $\gamegraph^f_{N+1}$ is $G^f_N$ modified due to a new data sample. Then the modified game has an \emph{expanded} winning region.

\begin{restatable}{thm}{dualmonotonicitya}\label{thm:dualmonotonicitya}
Let $V_{N+1}$ be the vertex set of $G^f_{N+1}$. Then,
    $\Wsys(\mathcal{G}_{N+1}) \supseteq \Wsys(\mathcal{G}_N) \cap V_{N+1}$. %
\end{restatable}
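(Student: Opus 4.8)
The plan is to prove the inclusion by transferring a Player~0 winning strategy from $\mathcal{G}_N$ to $\mathcal{G}_{N+1}$. I fix a Player~0 strategy $\pi_N$ that is winning from every vertex of $\Wsys(\mathcal{G}_N)$. Player~0's only genuine choices occur at the abstract-state (circle) vertices $\Vsys = S$, which are common to both games, and the objective $\Inf(\XX)$ with $\XX\subseteq S$ is prefix-independent; so I would let $\pi_{N+1}$ replay, at each circle vertex, the action $\pi_N$ plays on a suitably translated history. It then suffices to show that every \emph{fair} $\pi_{N+1}$-play $\xi$ issuing from $\Wsys(\mathcal{G}_N)\cap V_{N+1}$ visits $\XX$ infinitely often, since an unfair play is already Player~0-winning by the fair-game semantics.

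The core device is a play translation $\Theta$ mapping each macro-step $s\to s^u\to s^u_i\to s'$ of a $\mathcal{G}_{N+1}$-play to a macro-step $s\to s^u\to s^u_{\kappa(s')}\to s'$ of a $\mathcal{G}_N$-play with the \emph{same} circle projection. I route the successor $s'$ through the canonical fair vertex: $\kappa(s')=0$ if $s'\in\underline{F}^N(s,u)$, and otherwise $\kappa(s')$ is the unique index with $\overline{s}^{s,u,N}_{\kappa(s')}=s'$. Well-definedness is exactly where Thm.~\ref{thm:UAOA} enters: every $\mathcal{G}_{N+1}$-successor satisfies $s'\in\overline{F}^{N+1}(s,u)\subseteq\overline{F}^N(s,u)$, hence $s'$ is either forced in $\mathcal{G}_N$ (lies in $\underline{F}^N$, a successor of $s^u_0$) or is one of the optional states $\overline{F}^N(s,u)\setminus\underline{F}^N(s,u)$ (a successor of its dedicated $s^u_{\kappa(s')}$). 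Since the translated history depends only on the circle vertices, $\pi_{N+1}$ is well-defined and each $\pi_{N+1}$-play maps to a $\pi_N$-play from the same start vertex.

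The main obstacle, and the step where both monotonicity directions are indispensable, is that $\Theta$ must preserve fairness: $\Theta(\xi)$ should be fair in $\mathcal{G}_N$ whenever $\xi$ is fair in $\mathcal{G}_{N+1}$. I would check the fair-edge obligation at each fair vertex $s^u_j$ occurring infinitely often in $\Theta(\xi)$. For its edges into $\underline{F}^N(s,u)$: a visit to $s^u_j$ entails a visit to $s^u$, and as $s^u$ has finitely many successors, fairness of $\xi$ forces all of $\underline{F}^{N+1}(s,u)$ --- thus, by $\underline{F}^N\subseteq\underline{F}^{N+1}$, all of $\underline{F}^N(s,u)$ --- to appear infinitely often, which transfers to $\Theta(\xi)$ since the projections agree. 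For the single extra edge present when $j\ge 1$: by the definition of $\kappa$, every visit to $s^u_j$ in $\Theta(\xi)$ is immediately followed by $\overline{s}^{s,u,N}_j$, so this edge is taken on every visit and hence infinitely often. Therefore $\Theta(\xi)$ is fair. The naive translation that simply copies the index $i$ fails here, precisely because it cannot guarantee this last obligation.

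Combining these, $\Theta(\xi)$ is a fair $\pi_N$-play from a vertex of $\Wsys(\mathcal{G}_N)$, hence visits $\XX$ infinitely often; as $\Theta$ preserves the circle projection and $\XX\subseteq S$, so does $\xi$, making $\xi$ Player~0-winning. This yields the inclusion for every circle vertex of $\Wsys(\mathcal{G}_N)\cap V_{N+1}$ and, running the same translation from an initial first-layer vertex $s^u$, for those vertices as well. The remaining second-layer $\Venv$ vertices of $V_{N+1}$ I would treat via prefix-independence, reducing the winning status of $s^u_i$ to that of its successor abstract states; the delicate point to verify separately is that each successor which becomes \emph{forced} in $\mathcal{G}_{N+1}$ (i.e.\ newly enters $\underline{F}^{N+1}$) is itself winning, so that enlarging the forced successor set cannot turn a winning box vertex into a losing one.
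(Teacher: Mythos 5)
Your core argument is sound on the vertices where you complete it, and it takes a genuinely different route from the paper. The paper proves the type-(a) case by contradiction: it fixes a \emph{positional} Player~0 winning strategy $\iota$ in $\mathcal{G}_N$ and an \emph{almost positional} Player~1 winning strategy $\sigma$ in $\mathcal{G}_{N+1}$ (via a structure lemma imported from earlier work on fair B\"uchi games), case-splits on whether the modified gadget recurs in the joint play, and in the recurring case builds a mimicking Player~1 strategy $\sigma'$ in $G^f_N$ that alternates between $\sigma$'s choice $s^u_i$ and the dedicated node $s^u_j$ carrying $(s^u_j,s')\in E^f_N$, so that the two plays have the same $\Inf$-set. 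Your direct simulation via the translation $\Theta$ with canonical rerouting $\kappa$ avoids the almost-positional machinery entirely, and your accounting of where each monotonicity direction enters is exactly right: $\overline{F}^{N+1}\subseteq\overline{F}^{N}$ (\cref{thm:UAOA}) makes $\kappa$ well defined, while $\underline{F}^{N}\subseteq\underline{F}^{N+1}$ plus the pigeonhole at the finitely-branching $s^u$ yields fairness preservation under the paper's vertex-based fairness definition. For abstract states and first-layer vertices $s^u$, your proof is complete (modulo the routine existence of a uniform winning $\pi_N$, supplied by positional determinacy).

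The genuine gap is precisely the point you flagged and deferred: the second-layer vertices. The auxiliary claim you would need --- that a state newly entering $\underline{F}^{N+1}(s,u)$ is itself winning whenever some $s^u_i$ was --- is \emph{false}, so this step cannot be ``verified separately''. Concretely, take $\mathcal{U}=\{u\}$, $\XX=\{a\}$, $\overline{F}^{N}(s,u)=\{a,b\}$, $\underline{F}^{N}(s,u)=\{a\}$, and let $a,b$ be absorbing ($\overline{F}(a,u)=\underline{F}(a,u)=\{a\}$, $\overline{F}(b,u)=\underline{F}(b,u)=\{b\}$). Then $s^u_0$ has the unique successor $a$, so $s^u_0\in\Wsys(\mathcal{G}_N)$, while $b\in\Wenv(\mathcal{G}_N)$. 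If the new sample puts $b$ into $\underline{F}^{N+1}(s,u)$ (a type-(a) modification), every second-layer node --- including $s^u_0$, per the game construction in which all $s^u_i$ carry all of $\underline{F}(s,u)$ --- gains a fair edge to $b$. Player~1 now plays $s^u_0\to b$ once and stays in $b$'s forced cycle; this play is fair (the obligation at $s^u_0$ binds only if $s^u_0$ \emph{recurs}, which it does not) and avoids $\XX$. Hence $s^u_0\in\Wsys(\mathcal{G}_N)\cap V_{N+1}$ yet $s^u_0\in\Wenv(\mathcal{G}_{N+1})$: the inclusion genuinely fails at such internal nodes, and fairness cannot rescue it.

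For calibration, note that the paper's own proof quietly skirts the same corner: its case (i) justifies the winning tail by the claim that every node $\iota,\sigma$-reachable in $G^f_{N+1}$ is already $\iota$-reachable in $G^f_N$ ``from $s^u$'' --- an argument that covers plays entering the modified gadget through $s^u$ (hence abstract states and first-layer nodes), but not a play that starts at some $s^u_i$ and uses a new edge immediately. Since the synthesis pipeline only consumes the winning region on abstract states, the honest repair is to restrict the stated inclusion to the vertices your $\Theta$-argument actually handles, rather than to attempt your deferred claim; so the right conclusion for you is not to close the gap but to recognize that, at second-layer vertices, there is no gap-free proof to be had.
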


\paragraph{Monotonicity on the dual game}
As discussed in Sec.~\ref{sec:overview}, our second contribution (cf. Fig.~\ref{fig:overview}, \circled{2}, Sec.~\ref{sec:incremental}) introduces a novel \emph{progress-measure} (PM) based incremental game solving algorithm. These algorithms require \emph{monotonic} graph modifications---the \Psys winning region can only \emph{shrink} after a graph modification. This is the opposite of what we have in~\cref{thm:dualmonotonicitya}. Luckily, we automatically obtain this monotonicity on the \emph{dual} of the abstract fair Büchi game.

Fair Büchi and cofair coBüchi games are dual, in the following sense: A  cofair coBüchi game $\overline{\mathcal{G}} = \ltup{\gamegraph^{cf}, \overline{\Inf}(\XX)}$ is the dual of a fair Büchi game $\mathcal{G} = \ltup{G^f, \Inf(\XX)}$ where $\gamegraph^f$ and $\gamegraph^{cf}$ are identical except for the ownership of the nodes is swapped. Then, $\Wsys(\mathcal{G}) = \Wenv(\overline{\mathcal{G}})$
and vice versa. %
The following result is a simple corollary of Thm.~\ref{thm:dualmonotonicitya}.

\begin{restatable}{corollary}{monotonicity}[\cref{thm:dualmonotonicitya}]\label{cor:monotonicity}
    Let $\mathcal{G}_N$ and $\mathcal{G}_{N+1}$ be defined as in~\cref{thm:dualmonotonicitya}, and let $\overline{\mathcal{G}}_N$ and $\overline{\mathcal{G}}_{N+1}$  be their dual cofair coBüchi games. 
    Then, $\Wsys(\overline{\mathcal{G}}_{N+1}) \subseteq \Wsys(\overline{\mathcal{G}}_{N}) $; that is, the learning-induced graph modifications are monotonic on the dual of the abstract fair Büchi game.
\end{restatable}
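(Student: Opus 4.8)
The plan is to derive this purely from the self-duality of the game together with \cref{thm:dualmonotonicitya}, without re-examining any strategies. First I would invoke the stated duality between fair Büchi and cofair coBüchi games: since $\overline{\mathcal{G}}_N$ and $\overline{\mathcal{G}}_{N+1}$ are the duals of $\mathcal{G}_N$ and $\mathcal{G}_{N+1}$, we have $\Wsys(\overline{\mathcal{G}}_N) = \Wenv(\mathcal{G}_N)$ and $\Wsys(\overline{\mathcal{G}}_{N+1}) = \Wenv(\mathcal{G}_{N+1})$. This reduces the goal $\Wsys(\overline{\mathcal{G}}_{N+1}) \subseteq \Wsys(\overline{\mathcal{G}}_N)$ to the equivalent statement $\Wenv(\mathcal{G}_{N+1}) \subseteq \Wenv(\mathcal{G}_N)$ about the \emph{original} fair Büchi games, which is the setting where \cref{thm:dualmonotonicitya} applies.

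Next I would rewrite both environment regions as complements. For each game we consider, the determinacy decomposition $V = \Wsys \uplus \Wenv$ holds, so $\Wenv(\mathcal{G}_N) = V_N \setminus \Wsys(\mathcal{G}_N)$ and $\Wenv(\mathcal{G}_{N+1}) = V_{N+1}\setminus\Wsys(\mathcal{G}_{N+1})$, where $V_N, V_{N+1}$ are the vertex sets of $G^f_N, G^f_{N+1}$. The inclusion then follows by a short pointwise argument: fix $v \in V_{N+1}\setminus\Wsys(\mathcal{G}_{N+1})$. \cref{thm:dualmonotonicitya} gives $\Wsys(\mathcal{G}_N)\cap V_{N+1} \subseteq \Wsys(\mathcal{G}_{N+1})$, so from $v \notin \Wsys(\mathcal{G}_{N+1})$ and $v \in V_{N+1}$ we deduce $v\notin\Wsys(\mathcal{G}_N)$. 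Combined with $v \in V_N$ (see below), this yields $v\in V_N\setminus\Wsys(\mathcal{G}_N)=\Wenv(\mathcal{G}_N)$, as required.

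The one step that needs care — and which I expect to be the only real obstacle — is justifying $v \in V_N$, i.e.\ the vertex-set monotonicity $V_{N+1}\subseteq V_N$. This does not come for free from \cref{thm:dualmonotonicitya} (whose very intersection with $V_{N+1}$ signals that the vertex sets may differ), so I would establish it directly from the abstract-game construction. The only vertices that depend on the data are the fair environment nodes $s^u_i$ with $i\in[0,m^{s,u}]$ and $m^{s,u}=|\overline{F}(s,u)|-|\underline{F}(s,u)|$. By \cref{thm:UAOA}, refining the data shrinks $\overline{F}$ and grows $\underline{F}$, hence $m^{s,u}$ can only decrease; every surviving vertex is indexed by the same triple in both games, so no new vertices appear and $V_{N+1}\subseteq V_N$. (The $\Vsys$ states $s$ and the first-layer nodes $s^u$ are independent of the data and present in both.) With this inclusion in hand the pointwise argument closes, proving $\Wsys(\overline{\mathcal{G}}_{N+1})\subseteq\Wsys(\overline{\mathcal{G}}_N)$.
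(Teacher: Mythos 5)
Your proof is correct and follows essentially the same route the paper intends: it treats the statement as an immediate consequence of the duality $\Wsys(\overline{\mathcal{G}}) = \Wenv(\mathcal{G})$, determinacy ($V = \Wsys \uplus \Wenv$), and complementation of the inclusion in \cref{thm:dualmonotonicitya}. Your explicit verification of the vertex-set monotonicity $V_{N+1} \subseteq V_N$ — which the paper leaves implicit, and which is confirmed by its supplementary description of the two atomic modifications (edge additions keep the vertex set fixed; an overapproximation shrink only \emph{removes} a node $s^u_i$) — is a worthwhile piece of care, since the corollary's unrestricted inclusion would otherwise not follow from the theorem's intersection with $V_{N+1}$.
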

Therefore, the incremental game solving algorithm presented in the next section will incrementally compute $\Wsys(\overline{\mathcal{G}})$, $\Wenv(\overline{\mathcal{G}})$ for the dual $\overline{\mathcal{G}}$ of the abstract game $\mathcal{G}$. As 
$\Wenv(\overline{\mathcal{G}}) = \Wsys(\mathcal{G})$,
this automatically induces  $\Wsys(\mathcal{G})$.

\section{Incremental Game Solving}\label{sec:incremental}
This section details our second contribution
(cf.\ Fig.~\ref{fig:overview}, \circled{3}) 
which is the definition of a valid progress measure (PM) on cofair coBüchi games which induces an incremental game solving algorithm. 

\paragraph{Valid Progress Measures}
Given a (normal/fair/cofair) game graph $\gamegraph$ a progress measure (PM) is a ranking function $\rho: V \to [0, L] \cup \{\top\}$  where $L\in \mathbb{N}$ defines the range of the PM and $[0,L] $ inherits its order from the usual order on $\mathbb{N}$, and $\top$ is the largest value where $n < \top $ for any $n \in [0, L]$, $L + 1 = \top,\top + 1 = \top$ and $\top \leq \top$. 
Two progress measures $\rho_1, \rho_2$ are compared according to element-wise comparison, i.e. $\rho_1 \preccurlyeq \rho_2$ iff for each $v \in V$, $\rho_1(v) \leq \rho_2(v)$.

A PM is called \emph{valid} for games with specification $\spec$ if it satisfies some additional rules (such as~\cref{eq:cofaircobuechi-pm} and~\eqref{eq:cofaircobuechi-prf} below), which yield a natural monotonic lifting function (such as~\cref{eq:cofaircoBuechi-lifting})  
that sends one PM $\rho$ to another $\rho' = \textsf{Lift}(\rho, v)$ that is the same as $\rho$ on all vertices and possibly \emph{increases} in exactly one vertex $v$ (which is \emph{lifted})---so always, $\rho \preccurlyeq \rho'$.
The application of this lifting function to the smallest progress measure $\rho^0 : V \to \{0\}$ (to all vertices in an arbitrary order) yields a least fixpoint (l.f.p.) PM $\rho^*$ that distinguishes the winning regions of the game:
$\Wsys(\gamegraph, \spec) = \{v \in V \mid \rho^*(v) \neq \top\}$, and consequently, $\Wenv(\gamegraph, \spec) =  \{v \in V \mid \rho^*(v) = \top\}$.

\paragraph{A Valid Cofair CoBüchi PM}
The following theorem summarizes our main contribution.

\begin{restatable}{thm}{cofaircoBuechiPM}\label{thm:cofaircoBuechiPM}
Let $\langle\gamegraph^{cf},\overline{\Inf}(\XX)\rangle$ be a cofair coBüchi game. Then the PM $\rho: V \to [0, |\XX| + |V^f|] \cup \{\top\}$ is valid for cofair coBüchi if it satisfies
\begin{equation}\label{eq:cofaircobuechi-pm}
\rho(v) \geq \begin{cases}
    \Prog(v) + 1 \quad &\text{if} \quad v\in \XX,\\
    \Prog(v) \quad &\text{if}\quad v \not\in \XX,
\end{cases}
\end{equation}
where
\begin{equation}\label{eq:cofaircobuechi-prf}
\Prog(v) = \begin{cases}
    \min \begin{cases}  
        \max_{(v,w) \in E^f} \rho(w)  \\
        \min_{(v,w) \in E}\rho(w) + 1  
    \end{cases} &\text{if} \, v \in V^f \setminus \XX, 
    \\
    \min_{(v,w) \in E}\rho(w)  &\hspace{-0.84cm}\text{if} \, v \in \Vsys  \setminus (V^f \setminus \XX), \\
    \max_{(v,w) \in E}\rho(w)  &\hspace{-0.84cm}\text{if}  \,v \in \Venv, \\
\end{cases}
\end{equation}
and it induces the lifting function
\begin{equation}\label{eq:cofaircoBuechi-lifting}
\textsf{Lift}(\rho, v) = \rho' \, \text{where} \,\,\rho'(w) = 
\begin{cases}
    \rho(w)  &\text{if}\,\quad w \neq v,\\     
    \Prog(v)+1 &\text{if}\, v = w \in \XX, \\ 
    \Prog(v) &\text{if}\, v = w \not \in \XX.
\end{cases}
\end{equation}
\end{restatable}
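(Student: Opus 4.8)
The plan is to prove validity of the proposed progress measure by showing that it meets the general requirements laid out in the ``Valid Progress Measures'' paragraph: namely that the lifting function in~\eqref{eq:cofaircoBuechi-lifting} is monotone (so that iterating it from $\rho^0$ reaches a least fixpoint $\rho^*$), and that this least fixpoint correctly separates the winning regions, i.e.\ $\Wsys(\gamegraph^{cf}, \overline{\Inf}(\XX)) = \{v \mid \rho^*(v) \neq \top\}$. First I would verify monotonicity and well-definedness: I would check that $\mathtt{pr}(v)$ as defined in~\eqref{eq:cofaircobuechi-prf} is a monotone function of $\rho$ (each branch is a min/max of values $\rho(w)$ over successors, possibly $+1$, which are all monotone operations), so that $\textsf{Lift}(\cdot, v)$ is monotone and satisfies $\rho \preccurlyeq \textsf{Lift}(\rho, v)$; this guarantees a unique least fixpoint exists over the finite lattice $([0,L]\cup\{\top\})^V$ with $L = |\XX| + |V^f|$, reached by any fair iteration schedule.

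The substantive work is the two-directional soundness/completeness argument. For \textbf{soundness} (every $v$ with $\rho^*(v) \neq \top$ is won by \Psys), I would extract a \Psys strategy from $\rho^*$: at a $\Vsys$ vertex, \Psys moves to a successor witnessing the relevant $\min$ in~\eqref{eq:cofaircobuechi-prf} (a strictly rank-decreasing edge outside $\XX$, or the fair-respecting choice inside $V^f \setminus \XX$), and I would argue that along any resulting play consistent with \emph{fairness} of the environment the rank cannot increase and strictly decreases each time $\XX$ is visited, while~\eqref{eq:cofaircobuechi-pm} forces a $+1$ drop on $\XX$-nodes. Since ranks live in the finite range $[0,L]$, $\XX$ can be visited only finitely often on any fair play, so the coBüchi objective $\overline{\Inf}(\XX)$ is met on fair plays, which is exactly the \Psys winning condition for a cofair coBüchi game. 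The delicate point is the first branch of~\eqref{eq:cofaircobuechi-prf}: for a fair \Psys-vertex, \Psys may be forced to take a fair edge (captured by the $\max$ over $E^f$) rather than an arbitrary rank-decreasing edge, so I must show the $\min$ of the two options still yields a strategy that is consistent with the fairness constraint the cofair semantics impose on \Psys's own fair moves. For \textbf{completeness} (every $v$ with $\rho^*(v) = \top$ is won by \Penv), I would show that the $\top$-region is a \Penv trap under the fixpoint conditions: if $\rho^*(v) = \top$ then the progress conditions force enough successors to also be $\top$ that \Penv can keep the play inside $\{v \mid \rho^*(v) = \top\}$ while still passing through $\XX$ infinitely often and respecting fairness of \Psys's fair edges, thus violating $\overline{\Inf}(\XX)$.

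For both directions the cleanest route is to piggyback on the gadget reduction advertised in contribution~\circled{3}: the cofair coBüchi game is transformed by inserting DAG-like gadgets into an ordinary coBüchi game whose (known) coBüchi progress measure from~\cite{Jurdzinskismall00} is already proven valid, and the values of~\eqref{eq:cofaircobuechi-pm}--\eqref{eq:cofaircobuechi-prf} are obtained by ``pulling back'' that coBüchi PM through the gadgets. If I adopt this route, the core lemma to establish is a rank-correspondence: a play in $\gamegraph^{cf}$ is \Psys-winning iff its gadget-expansion is \Psys-winning in the reduced coBüchi game, and the pulled-back ranks coincide with the fixpoint of~\eqref{eq:cofaircoBuechi-lifting}. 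The \textbf{main obstacle} I anticipate is precisely this correspondence at the fair \Psys-vertices: the $\min\{\max_{E^f}\rho, \min_E \rho + 1\}$ structure must be shown to exactly mirror how the gadget routes a play either ``through'' all fair successors (the $\max$ term, charging no extra rank) or ``escaping'' via some edge at the cost of one unit of progress, and verifying that the compact gadget faithfully encodes the cofair fairness constraint without introducing spurious \Penv escapes is where the real care is needed.
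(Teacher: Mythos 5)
Your recommended route is exactly the paper's proof: it reduces the cofair coBüchi game to an ordinary coBüchi game by inserting the compact gadgets (proved winning-region-preserving in the appendix via a two-directional strategy-transfer argument using play extensions), and then pulls Jurdziński's coBüchi PM back through the gadget, where the Penv-owned left branch and the coBüchi-marked right branch yield precisely the $\min\{\max_{(v,w)\in E^f}\rho(w),\ \min_{(v,w)\in E}\rho(w)+1\}$ structure you flag as the delicate point. The core lemma you identify (equivalence of the gadget game on $V$) and the anticipated obstacle at fair \Psys-vertices coincide with the paper's own decomposition, so your plan matches the published argument essentially step for step.
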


The proof of the above theorem is highly non-trivial and a contribution to the field of game solving. Our proof (in the appendix) uses a novel gadget-based reduction from cofair coBüchi games to (normal) coBüchi games. Gadgets, as introduced by~\citet{HausmannPSS24}, are small, DAG-like subgames %
such that by replacing a fair node with its corresponding gadget, one obtains an equivalent non-fair game that is linear in size. We present an optimized version of these gadgets, given in Fig~\ref{fig:gadgets}, which allows us to \textit{pull} the (known) coBüchi PM on the reduced coBüchi game (introduced by~\citet{Jurdzinskismall00}) \textit{back} to the cofair coBüchi game, which in turn yields the valid cofair coBüchi PM in Thm.~\ref{thm:cofaircoBuechiPM}.

\begin{figure}[htbp]
  \centering
  \begin{minipage}{0.45\linewidth}
    \centering
    \begin{tikzpicture}[auto,node distance=1.5cm,semithick]
      \node[draw, circle, minimum size=0.8cm, inner sep=0pt] (q) {$v$}; 
      \node[draw, rectangle, minimum width=0.65cm, minimum height=0.65cm, inner sep=0pt]
        (ql) [below left=2em of q] {$v_l$};
      \node[draw, circle, minimum size=0.8cm, inner sep=0pt] 
      (qr) [below right=2em of q] {$v_r$};
      \node[draw, circle, minimum size=0.6cm, inner sep=0pt] at (qr.center) {};
      \node (eleft)  [below=1.5em of ql] {$E^f(v)$};
      \node (eright) [below=1.5em of qr] {$E(v)$};
      \draw[->] (q)  -- (ql);
      \draw[->] (q)  -- (qr);
      \draw[->] (ql) -- (eleft);
      \draw[->] (qr) -- (eright);
    \end{tikzpicture}
    \par\smallskip
    
  \end{minipage}\hfill
  \begin{minipage}{0.45\linewidth}
    \centering
    \begin{tikzpicture}[auto,node distance=1.5cm,semithick]
      \node[draw, circle, minimum size=0.8cm, inner sep=0pt] (q) {$v$}; 
      \node[draw, circle, minimum size=0.6cm, inner sep=0pt] at (q.center) {};
      \node (e)  [below=1.5em of q] {$E(v)$};
      \draw[->] (q)  -- (e);
    \end{tikzpicture}
    \par\smallskip 
  \end{minipage}
  \caption{Simplified cofair coBüchi gadgets for $v \in V^f \setminus \XX$ (left) and $v \in V^f \cap \XX$ (right). Doubly encircled nodes denote the coBüchi nodes.}
  \label{fig:gadgets}
\end{figure}
\paragraph{Enhanced PM range for (the dual of) the abstract game}
Let $\mathcal{G}$ be an abstract fair Büchi game and $\overline{\mathcal{G}}$ be its dual. The range of the valid cofair coBüchi PM can be reduced to $[0, |S| + |B|]$ for $\overline{\mathcal{G}}$ due to the DAG-like structure of the 2-layers of $\Venv$ nodes in the abstract game (\cref{fig:overview} (B)). This enhancement is important as in an abstract game $|V^f|$ can be as large as $|S|^2$. %

\paragraph{Incremental Game Solving}
Recall that initializing the PM over a game to $\rho^0 : V \to \{0\}$ and iteratively applying $Lift(\cdot, \cdot)$ results in the l.f.p. $\rho^*$ which allows to extract the winning regions.
In fact, to obtain the l.f.p. $\rho^*$, we do not need to start applying the lifting function to $\rho^0$, but applying it to any PM $\rho' \preccurlyeq \rho^*$ will carry us to $\rho^*$.

It was first observed by~\citet{ChatterjeeHenzinger14} that this property can be exploited to designate extremely efficient re-solution algorithms under game graph modifications that can only shrink the \Psys winning region. This is because shrinking $\Wsys$ is correlated with increasing PM values (as can be observed from $\rho^*(v) = \top \Rightarrow v \in \Wenv$). Such graph modifications are called \emph{monotonic}. 

Take a game $\ltup{\gamegraph^{\circ}, \spec}$ with a valid PM, compute its l.f.p. PM $\rho^*$ and modify $\gamegraph^{\circ}$ via a monotonic graph modification to obtain $\gamegraph'$. We can compute the l.f.p. PM %
$\rho'^*$ of $\ltup{\gamegraph', \spec}$ by warm-starting the computation from $\rho^*$ on $G'$ and applying the lifting function until the new  l.f.p. $\rho'^*$ is reached.

We showed that new arriving data samples induce monotonic graph modifications on the dual of the abstract game (\cref{cor:monotonicity}). Therefore, the lifting function in~\cref{eq:cofaircoBuechi-lifting} defines an incremental solution algorithm 
for the dual game. 

\begin{rem}
We note that our incremental game solving technique can be extended to parity and Rabin specifications (which are generalizations of Büchi specifications) as both are positional for Player 0, and have well-defined progress measures \cite{Jurdzinskismall00,majumdar2024rabin} as well as known gadget constructions \cite{chatterjee2005complexity,HausmannPSS24}. The same incremental principles and monotonicity arguments we use for Büchi games apply, and the corresponding theorems can be reproduced using these alternative game formulations (corresponding PM and gadget constructions). However, neither will be as tractable as Büchi specifications, as neither have known polynomial solution algorithms.
\end{rem}

\section{End-to-End Efficiency}

The previous sections detailed 
our first two contributions (cf. Fig.~\ref{fig:overview} (A), \circled{1}\&\circled{2}) which can be combined to an algorithm that incrementally refines the winning region when new data samples arrive, due to monotonicity of learning-induced graph modifications (Fig.~\ref{fig:overview} (A) $\&$ red arrows, \cref{cor:monotonicity}). This algorithm is summarized in Alg.~\ref{alg:main_fixed}.
\begin{algorithm}[t]
\caption{Data-driven end-to-end synthesis}
\label{alg:main_fixed}
\begin{algorithmic}[1]
\REQUIRE Lipschitz constant $L_X$, noise support $[l,h]$, dataset $\mathbf{D} = \mathbf{D_N} = \{(x_i,u_i,x_i^+)\}_{i=1}^{N}$, $I \subseteq S$.
\ENSURE Current winning region $\Wsys$, a control policy \\$\pi: \Wsys \to \mathcal{U}$ if $I \cap \Wsys \neq \emptyset$
\STATE $\mathcal{G}, \rho^* \gets \texttt{initialise}(\mathbf{D}, L_X, [l,h])$;   %
\WHILE{a new sample $(x,u,x^+)$ arrives}
    \STATE $\mathbf{D}\gets\mathbf{D}\cup\{(x,u,x^+)\}$;
    \STATE update $\underline F,\overline F$ and modify $\mathcal{G}$ accordingly;
   \STATE lift $\rho^*$ on $\overline{\mathcal{G}}$ (\cref{eq:cofaircoBuechi-lifting}), until it reaches the new $\rho^*$;
   \STATE  \textbf{print } $\Wsys \gets \{s \in S \mid \rho^* = \top\}$;
    \IF{ $I \cap \Wsys \neq \emptyset$} 
    \STATE  \textbf{output } $\pi \gets\texttt{synt-controller}(\mathcal{G}, \Wsys)$
    \ENDIF
\ENDWHILE
\end{algorithmic}
\end{algorithm}
Alg.~\ref{alg:main_fixed} starts by an initialization step (line 1) where
(i) the fair Büchi game $\mathcal{G} $ is built from the initial data set $\mathbf{D_N} = \{(x_i,u_i,x_i^+)\}_{i=1}^{N}$ (cf. \circled{1}, Sec.~\ref{sec:learning}) and then 
(ii) its dual cofair coBüchi game $\overline{\mathcal{G}}$ is solved to obtain the initial l.f.p. PM $\rho^*$.
When a new (set of) sample(s) arrive (line 2), the game graph of $\mathcal{G}$ is modified according to the new approximation sets (line 4) and $\rho^*$ is lifted to the l.f.p. PM of the new $\mathcal{G}$ (line 5).

\paragraph{Initialization}
The simplest way to realize step (ii) of the initialization is to apply the lifting algorithm (\cref{eq:cofaircoBuechi-lifting}) on $\overline{\mathcal{G}}$ %
starting from $\rho^0$ until $\rho^*$ is obtained. However, this approach is known to be very time-consuming, as every node may be lifted up to $|S| + |\XX|$ times, %
 which is typically very large. 
To circumvent this problem, we use a symbolic fixpoint algorithm to solve the initial dual game $\overline{\mathcal{G}}$. \cref{eq:fixpoint-cofaircoBuechi} in~\cref{lemma:cofaircoBuechifp} is the negation of the $\mu$-calculus formula given in~\citet{banerjee2023fast} for fair Büchi games.\footnote{See~\citet{Kozen:muCalculus} for $\mu$-calculus semantics.}

\begin{restatable}{lemma}{cofaircoBuechifp}\label{lemma:cofaircoBuechifp}
Let $\mathcal{G}$ be a fair Büchi game on Büchi set $\XX \subseteq V$. Then $\psi = \Wenv(\mathcal{G}) = \Wsys(\overline{\mathcal{G}})$, where
\begin{align}\label{eq:fixpoint-cofaircoBuechi}
\psi := 
\mu Y.\, \nu X. \, &( \neg\XX \,\cup \,\Cpre(Y) ) \, \cap \, \Cpre(X) \, \cap \,\\ &( \Lpre(X) \, \cup \, \Preexists(Y)\, \cup \, \neg V^f)\notag.
\end{align}
Here, $\neg H :=V \setminus H$ for a subset $H$ of $V$. 
\end{restatable}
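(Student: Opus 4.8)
The plan is to obtain $\psi$ as the De~Morgan dual of the known symbolic fixpoint for \Psys's winning region $\Wsys(\mathcal{G})$ in a fair Büchi game, and then to close the argument with determinacy and the fair/cofair duality. \citet{banerjee2023fast} show that, under strong transition fairness, $\Wsys(\mathcal{G})$ for the fair Büchi game $\mathcal{G} = \ltup{G^f, \Inf(\XX)}$ is computed by a two-nested fixpoint of the shape
\begin{align*}
\phi := \nu Y.\, \mu X.\, &(\XX \cap \Cpresys(Y)) \cup \Cpresys(X) \\
&\cup\, (\Lpreexists(X) \cap \Preforall(Y) \cap V^f),
\end{align*}
where the outer $\nu$ encodes recurrent visits to $\XX$, the inner $\mu$ the attraction toward it, and the last disjunct the additional progress that \Psys gains from the fair edges. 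Since all games we consider are determined with $V = \Wsys(\mathcal{G}) \uplus \Wenv(\mathcal{G})$, it suffices to establish $\psi = \neg\phi$: then $\psi = V \setminus \Wsys(\mathcal{G}) = \Wenv(\mathcal{G})$, and the remaining equality $\Wenv(\mathcal{G}) = \Wsys(\overline{\mathcal{G}})$ is exactly the fair/cofair duality recalled just before \cref{cor:monotonicity}.

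To show $\psi = \neg\phi$, I would push the complement through the alternating fixpoint via the standard $\mu$-calculus De~Morgan law, which swaps $\mu \leftrightarrow \nu$ and replaces a body $\Psi(X,Y)$ by $\neg\Psi(\neg X, \neg Y)$ over the lattice $(2^V, \subseteq)$. The crux is the complementation of the individual predecessor operators, which I would verify by unfolding their one-step semantics: the two controllable predecessors are dual, $\neg\Cpresys(\neg W) = \Cpre(W)$ (what \Psys cannot force to remain in $W$ is precisely what \Penv can force out of it); and the fairness operators dualize as $\neg\Lpreexists(\neg W) = \Lpre(W)$ and $\neg\Preforall(\neg W) = \Preexists(W)$. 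Substituting these identities into $\neg\Psi(\neg X, \neg Y)$ and distributing the complement over the Boolean connectives turns $(\XX \cap \Cpresys(Y))$ into $(\neg\XX \cup \Cpre(Y))$, the isolated $\Cpresys(X)$ into $\Cpre(X)$, and the fairness disjunct into $(\Lpre(X) \cup \Preexists(Y) \cup \neg V^f)$, reproducing exactly the body of $\psi$ in \cref{eq:fixpoint-cofaircoBuechi}.

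The main obstacle I anticipate is establishing the operator-level dualities rigorously, especially for $\Lpre$ and $\Preexists$, since their definitions are tied to the fair-edge structure $E^f$ rather than to ordinary successors. Each requires unwinding the quantifier alternation in the definition of a fair winning play and checking that the complement of a $\forall$-flavoured live predecessor is the $\exists$-flavoured one for the opposing player, and conversely. A secondary point to state carefully is the applicability of the De~Morgan law for monotone alternating fixpoints, which is classical but should be cited rather than re-proved. Once these dualities are in place, the remainder is a purely syntactic Boolean rewriting matching the complemented body of $\phi$ to the body of $\psi$, which completes the identification $\psi = \Wenv(\mathcal{G}) = \Wsys(\overline{\mathcal{G}})$.
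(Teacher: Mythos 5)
Your proposal takes essentially the same route as the paper's own proof: the paper likewise cites \citet{banerjee2023fast} for the fair Büchi fixpoint computing $\Wsys(\mathcal{G})$, negates it through the standard $\mu$-calculus De~Morgan duality combined with determinacy ($V = \Wsys(\mathcal{G}) \uplus \Wenv(\mathcal{G})$), verifies the operator-level complementations, and concludes via the fair/cofair duality $\Wenv(\mathcal{G}) = \Wsys(\overline{\mathcal{G}})$. One minor notational caution: in the fairness disjunct of the source formula the operator is the Player~1 universal predecessor $\mathsf{Pre}_1^{\forall}$ over $\Venv$ (not $\Preforall$, which the paper reserves for $\Vsys$), and the complementations are relative to all of $V$ — so $\neg\Lpreexists(\neg X) = \Lpre(X) \cup \neg V^f$ and $\neg\mathsf{Pre}_1^{\forall}(\neg Y) = \Preexists(Y) \cup \Vsys$, whose union absorbs $\Vsys$ into $\neg V^f$ and yields exactly the disjunct $\Lpre(X) \cup \Preexists(Y) \cup \neg V^f$ that your final rewriting correctly reports.
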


While solving $\overline{\mathcal{G}}$ via \eqref{eq:fixpoint-cofaircoBuechi} is computationally efficient, it does not calculate $\rho^*$ (needed to initialize Alg.~\ref{alg:main_fixed} in line 1). While it is folklore knowledge that nested $\mu$-calculus formulas yield a mapping $\rho^{\psi}: V \to \mathbb{N} \cup \{\top\}$ %
acting like a PM, there is no guarantee that $\rho^{\psi} = \rho^*$, as PMs are not unique.  %
We show (in the appendix) that indeed, $\rho^{\psi} = \rho^*$ holds.
This is not a coincidence since we have crafted the gadgets %
used to obtain the cofair coBüchi PM (\cref{thm:cofaircoBuechiPM}) to obtain this equivalence.
This equivalence allows us to combine efficient batch synthesis via a symbolic fixpoint algorithm %
with incremental game solving (cf. Fig.~\ref{fig:overview} (C), \circled{3}) and ensures the end-to-end efficiency of our method.

\paragraph{Controller Synthesis} 
As the winning region of the abstract game expands with more data becoming available, a policy that is once winning (from a state $s$) remains winning. Suppose we want to synthesize controllers for some set $I\subseteq \Wenv(\mathcal{G})$ of abstract states once they become winning.
The l.f.p. $\rho^*$ has another nice property: It yields a \Psys winning strategy that sends each $v\in \Vsys$ (when $\Vsys\cap V^f=\emptyset$) to its minimal $\rho^*$--successor, which is easily interpreted as a control policy $\pi$. %
However, solving the dual abstract game via PMs, as we do, produces an environment policy, not a controller. To obtain the latter, \texttt{synth-controller} computes the $\mu$-calculus fixpoint for the negated formula $\overline\psi$ of~\cref{eq:fixpoint-cofaircoBuechi}, warm-started with $\Wsys(\mathcal{G})$, and skips straight to the final iteration. This last iteration is sufficient to obtain $\rho^{\overline{\psi}}$ which induces the minimal-successor controller policy in a single, efficient step (as $\Vsys\cap V^f=\emptyset$ in $\mathcal{G}$).

\section{Experiments}

\begin{figure}[t]
    \centering
    \includegraphics[width=\linewidth]{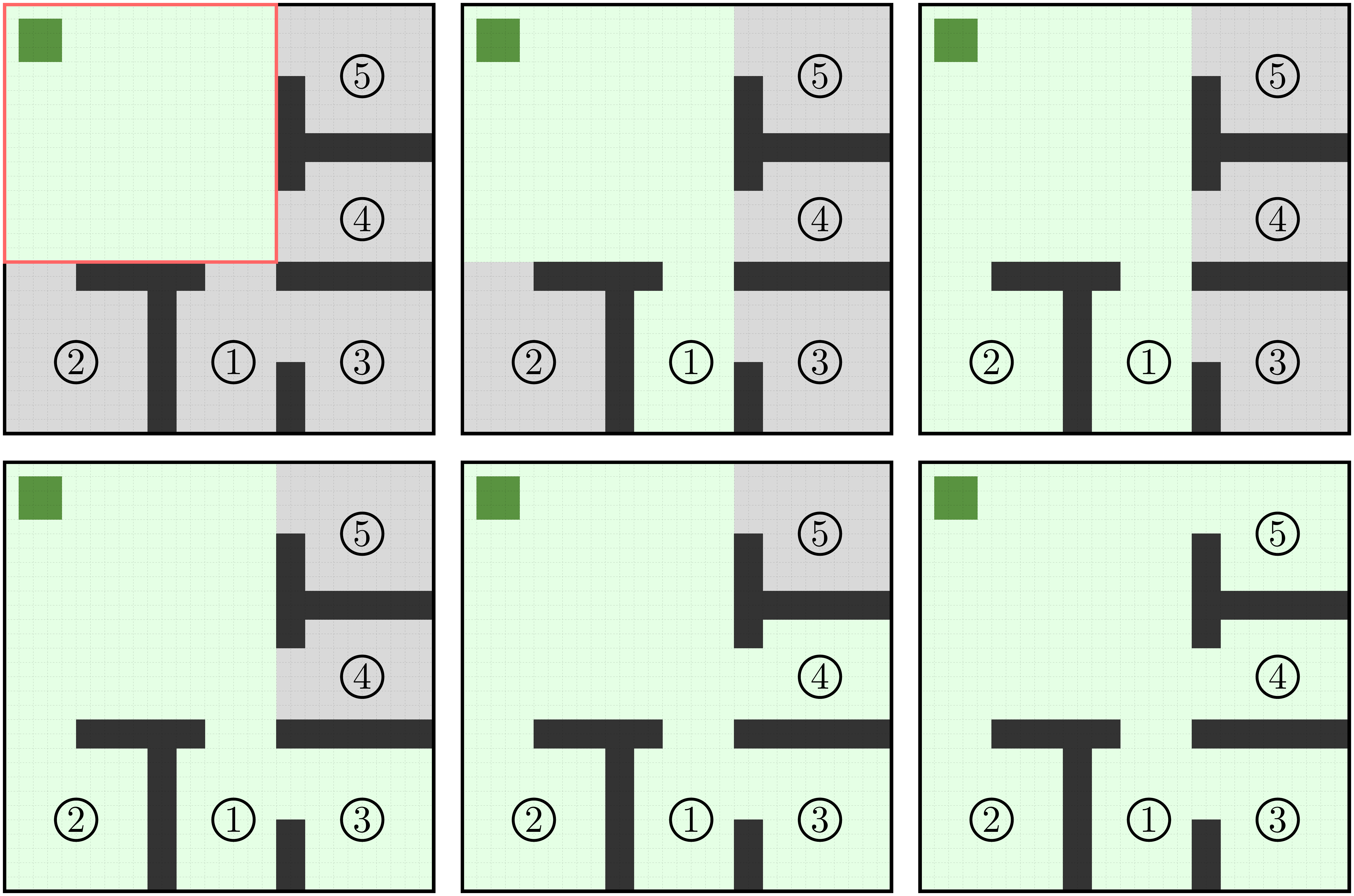}
    \caption{Gray areas indicate low-data regions and initially are not in the winning region. %
    As observations arrive from each room, they become part of the winning domain sequentially.}

    \label{fig:exp1}
\end{figure}

We evaluate our approach on a 2D robot case study adapted from~\citet{nazeri2025data}. We provide additional results in the appendix.
The specification is to visit the goal set (in dark green) infinitely often while avoiding obstacles (in black) illustrated in Fig.~\ref{fig:exp1}.
Initially, the samples are dense inside the red box (collected by the robot moving inside the red box) but sparse elsewhere, making the gray area’s under- and over-approximations overly conservative (Fig.~\ref{fig:exp1} top-left).
We first compute the winning region (light green) with a C++ symbolic fixpoint solver on a dual-core 3.3 GHz, 16 GB RAM machine. After gathering additional data in the gray zone, we re-solve the game via our incremental lifting algorithm and compare its runtime to full re-computation via the symbolic fixpoint solver (see Table~\ref{tab:runtime_comparison}). Both methods produce the same winning domain as expected, and the proposed lifting algorithm runs faster.
The enlarged winning regions upon the arrival of new data samples are indicated sequentially in Fig.~\ref{fig:exp1}.
In each update, the lifting algorithm outperforms the re-computation.
Fig.~\ref{fig:exp_res} compares the runtime required to update the winning domain as additional data is available for a $1 \times 1$ region of the state space, demonstrating consistent and substantial speedups for our approach. This experiment shows that the proposed lifting algorithm is exponentially faster in scenarios where the game graph is updated locally.

\begin{figure}[t]
    \centering
    \includegraphics[width=0.8\linewidth]{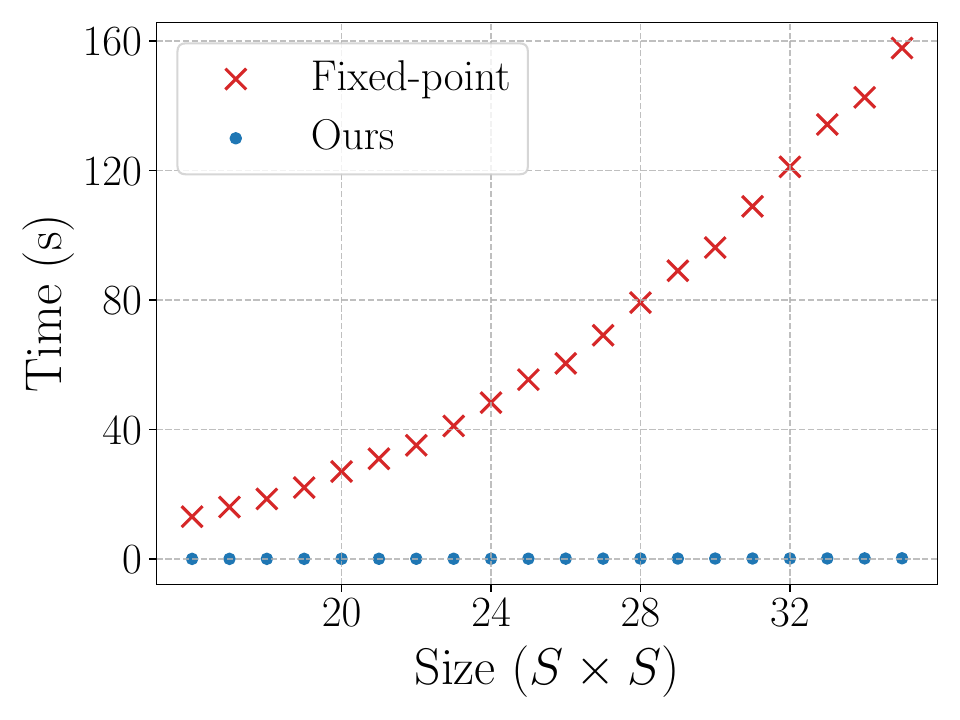}
    \includegraphics[width=0.8\linewidth]{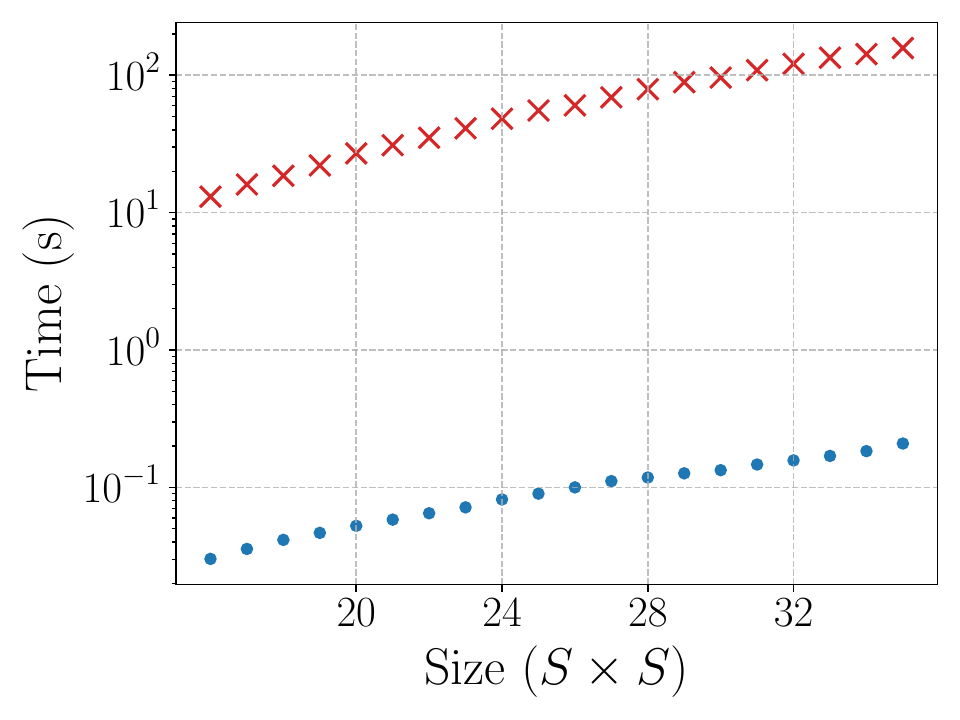}
\caption{Execution time of incremental lifting (blue dots) vs. full fixpoint recomputation (red crosses) across varying graph sizes; linear scale (top) and logarithmic scale (bottom).}
    \label{fig:exp_res}
\vspace{-0.5cm}
\end{figure}

\begin{table}[b]\label{table:comparison}
  \centering
  \resizebox{\columnwidth}{!}{%
    \begin{tabular}{ccc}
      \toprule
        Room & Incremental lifting (s) & Fixpoint recomputation (s) \\
        \midrule
        1 & 18.69 & 173.37 \\
        2 & 14.86 & 154.31 \\
        3 & 11.72 & 162.81 \\
        4 & 12.79 & 139.54 \\
        5 & 15.24 & 79.10 \\
        \bottomrule
        \end{tabular}
    }
    \caption{Runtime of incremental lifting vs.\ full fixpoint recomputation. The lifting algorithm computes the updated winning domain (which includes the new room where more data came from) faster than recomputation of the wining domain using the fixpoint solver.}
\label{tab:runtime_comparison}
\end{table}

\section{Conclusion}
In this work, we presented an incremental data-driven abstraction-based synthesis algorithm for unknown dynamical systems with additive noise. We first provided a monotone method to construct the abstraction (game graph) from data and proved that the abstraction becomes less conservative as new data become available. Then, we developed a lifting algorithm that enables incremental updates to the solution of the game graph when new data arrives. Finally, we proved that our lifting algorithm can be warm-started from the solution of a fixed-point algorithm for computational efficiency. Numerical results demonstrate the effectiveness of our algorithm, particularly when the game graph is refined locally with newly arriving data.
 
Conceptually, our work presents a novel abstraction-based control technique based on state-space discretizations. It is well known that such discretization-based approaches are hard to scale to large dimensional systems but are able to handle very general non-linear dynamics and reactive specifications. While our approach thereby suffers from scalability limitations, incremental synthesis techniques lend themselves more naturally to compositional frameworks where solutions over subsets of dimensions are iteratively computed to obtain an overall solution. The exploitation of our novel synthesis technique to address scalability concerns in such an iterative compositional framework is an interesting direction for future work.

\section*{Acknowledgments}
This work is funded by the DFG grant 389792660 as part of TRR 248 – CPEC, by the European Union with ERC Auto-CyPheR
grant 101089047 and EIC SymAware grant 101070802, and by the Emmy Noether Grant SCHM 3541/1-1.
Views and opinions expressed are however those of the author(s) only and do not necessarily reflect those of the European Union or the European Research Council Executive Agency. Neither the European Union nor the granting authority can be held responsible for them.
\bibliography{aaai2026}

\begin{thebibliography}{49}
\providecommand{\natexlab}[1]{#1}

\bibitem[{Abate et~al.(2010)Abate, Katoen, Lygeros, and Prandini}]{abate2010approximate}
Abate, A.; Katoen, J.-P.; Lygeros, J.; and Prandini, M. 2010.
\newblock Approximate model checking of stochastic hybrid systems.
\newblock \emph{European Journal of Control}, 16(6): 624--641.

\bibitem[{Abate et~al.(2008)Abate, Prandini, Lygeros, and Sastry}]{abate2008probabilistic}
Abate, A.; Prandini, M.; Lygeros, J.; and Sastry, S. 2008.
\newblock Probabilistic reachability and safety for controlled discrete time stochastic hybrid systems.
\newblock \emph{Automatica}, 44(11): 2724--2734.

\bibitem[{Anand et~al.(2025)Anand, Nayak, Raha, Saglam, and Schmuck}]{AnandNRSS25}
Anand, A.; Nayak, S.~P.; Raha, R.; Saglam, I.; and Schmuck, A. 2025.
\newblock Fair Quantitative Games.
\newblock In Abdulla, P.~A.; and Kesner, D., eds., \emph{Foundations of Software Science and Computation Structures - 28th International Conference, FoSSaCS 2025, Held as Part of the International Joint Conferences on Theory and Practice of Software, {ETAPS} 2025, Hamilton, ON, Canada, May 3-8, 2025, Proceedings}, volume 15691 of \emph{Lecture Notes in Computer Science}, 331--354. Springer.

\bibitem[{Baier and Katoen(2008)}]{baier2008principles}
Baier, C.; and Katoen, J.-P. 2008.
\newblock \emph{Principles of model checking}.
\newblock MIT press.

\bibitem[{Banerjee et~al.(2023)Banerjee, Majumdar, Mallik, Schmuck, and Soudjani}]{banerjee2023fast}
Banerjee, T.; Majumdar, R.; Mallik, K.; Schmuck, A.-K.; and Soudjani, S. 2023.
\newblock Fast symbolic algorithms for omega-regular games under strong transition fairness.
\newblock \emph{TheoretiCS}, 2.

\bibitem[{Banse et~al.(2023)Banse, Romao, Abate, and Jungers}]{banse2023data2}
Banse, A.; Romao, L.; Abate, A.; and Jungers, R.~M. 2023.
\newblock Data-driven abstractions via adaptive refinements and a {Kantorovich} metric.
\newblock In \emph{2023 62nd IEEE Conference on Decision and Control (CDC)}, 6038--6043. IEEE.

\bibitem[{Beliakov(2006)}]{beliakov2006interpolation}
Beliakov, G. 2006.
\newblock Interpolation of Lipschitz functions.
\newblock \emph{Journal of computational and applied mathematics}, 196(1): 20--44.

\bibitem[{Belta and Sadraddini(2019)}]{belta2019formal}
Belta, C.; and Sadraddini, S. 2019.
\newblock Formal methods for control synthesis: An optimization perspective.
\newblock \emph{Annual Review of Control, Robotics, and Autonomous Systems}, 2(1): 115--140.

\bibitem[{Chatterjee, De~Alfaro, and Henzinger(2005)}]{chatterjee2005complexity}
Chatterjee, K.; De~Alfaro, L.; and Henzinger, T.~A. 2005.
\newblock The complexity of stochastic Rabin and Streett games.
\newblock In \emph{International Colloquium on Automata, Languages, and Programming}, 878--890. Springer.

\bibitem[{Chatterjee and Henzinger(2014)}]{ChatterjeeHenzinger14}
Chatterjee, K.; and Henzinger, M. 2014.
\newblock Efficient and Dynamic Algorithms for Alternating B\"{u}chi Games and Maximal End-Component Decomposition.
\newblock \emph{J. ACM}, 61(3).

\bibitem[{Chekan and Langbort(2023)}]{Chekan2023UncertainConstraints}
Chekan, J.~A.; and Langbort, C. 2023.
\newblock Safety-Aware Learning-Based Control of Systems with Uncertainty Dependent Constraints.
\newblock In \emph{2023 American Control Conference (ACC)}, 1264--1270.

\bibitem[{Cohen, Belta, and Tron(2022)}]{Cohen2022}
Cohen, M.~H.; Belta, C.; and Tron, R. 2022.
\newblock Robust Control Barrier Functions for Nonlinear Control Systems with Uncertainty: A Duality-based Approach.
\newblock In \emph{Proceedings of the IEEE Conference on Decision and Control}, volume 2022-Decem, 174--179.
\newblock ISBN 9781665467612.

\bibitem[{Fijalkow et~al.(2023)Fijalkow, Bertrand, Bouyer{-}Decitre, Brenguier, Carayol, Fearnley, Gimbert, Horn, Ibsen{-}Jensen, Markey, Monmege, Novotn{\'{y}}, Randour, Sankur, Schmitz, Serre, and Skomra}]{gamesongraphs}
Fijalkow, N.; Bertrand, N.; Bouyer{-}Decitre, P.; Brenguier, R.; Carayol, A.; Fearnley, J.; Gimbert, H.; Horn, F.; Ibsen{-}Jensen, R.; Markey, N.; Monmege, B.; Novotn{\'{y}}, P.; Randour, M.; Sankur, O.; Schmitz, S.; Serre, O.; and Skomra, M. 2023.
\newblock Games on Graphs.
\newblock \emph{CoRR}, abs/2305.10546.

\bibitem[{Finkbeiner(2016)}]{Bernd-lecturenotes}
Finkbeiner, B. 2016.
\newblock Synthesis of Reactive Systems.
\newblock In Esparza, J.; Grumberg, O.; and Sickert, S., eds., \emph{Dependable Software Systems Engineering}, volume~45 of \emph{{NATO} Science for Peace and Security Series - {D:} Information and Communication Security}, 72--98. {IOS} Press.

\bibitem[{Girard and Pappas(2005)}]{girard2005approximate}
Girard, A.; and Pappas, G.~J. 2005.
\newblock Approximate bisimulations for nonlinear dynamical systems.
\newblock In \emph{Proceedings of the 44th IEEE Conference on Decision and Control}, 684--689. IEEE.

\bibitem[{Gracia et~al.(2023)Gracia, Boskos, Laurenti, and Mazo~Jr}]{gracia2023distributionally}
Gracia, I.; Boskos, D.; Laurenti, L.; and Mazo~Jr, M. 2023.
\newblock Distributionally robust strategy synthesis for switched stochastic systems.
\newblock In \emph{Proceedings of the 26th ACM International Conference on Hybrid Systems: Computation and Control}, 1--10.

\bibitem[{Hausmann et~al.(2024)Hausmann, Piterman, Sa\u{g}lam, and Schmuck}]{HausmannPSS24}
Hausmann, D.; Piterman, N.; Sa\u{g}lam, I.; and Schmuck, A. 2024.
\newblock Fair {\o}mega-Regular Games.
\newblock In \emph{FoSSaCS {(1)}}, volume 14574 of \emph{Lecture Notes in Computer Science}, 13--33. Springer.

\bibitem[{Jagtap, Pappas, and Zamani(2020)}]{Jagtap2020CBCGP}
Jagtap, P.; Pappas, G.~J.; and Zamani, M. 2020.
\newblock Control Barrier Functions for Unknown Nonlinear Systems using {Gaussian} Processes.
\newblock In \emph{Proceedings of the IEEE Conference on Decision and Control}, volume 2020-Decem, 3699--3704.
\newblock ISBN 9781728174471.

\bibitem[{Jin, Khajenejad, and Yong(2020)}]{jin2020data}
Jin, Z.; Khajenejad, M.; and Yong, S.~Z. 2020.
\newblock Data-driven model invalidation for unknown Lipschitz continuous systems via abstraction.
\newblock In \emph{2020 American Control Conference (ACC)}, 2975--2980. IEEE.

\bibitem[{Jurdzi{\'{n}}ski(2000)}]{Jurdzinskismall00}
Jurdzi{\'{n}}ski, M. 2000.
\newblock Small Progress Measures for Solving Parity Games.
\newblock In \emph{STACS 2000}, 290--301. Berlin, Heidelberg: Springer Berlin Heidelberg.

\bibitem[{Kazemi et~al.(2024)Kazemi, Majumdar, Salamati, Soudjani, and Wooding}]{kazemi2022datadriven}
Kazemi, M.; Majumdar, R.; Salamati, M.; Soudjani, S.; and Wooding, B. 2024.
\newblock Data-driven abstraction-based control synthesis.
\newblock \emph{Nonlinear Analysis: Hybrid Systems}, 52: 101467.

\bibitem[{Klarlund and Kozen(1991)}]{KK91}
Klarlund, N.; and Kozen, D. 1991.
\newblock Rabin measures and their applications to fairness and automata theory.
\newblock In \emph{[1991] Proceedings Sixth Annual IEEE Symposium on Logic in Computer Science}, 256--265.

\bibitem[{Kozen(1983)}]{Kozen:muCalculus}
Kozen, D. 1983.
\newblock Results on the Propositional mu-Calculus.
\newblock \emph{Theor. Comput. Sci.}, 27: 333--354.

\bibitem[{Kress-Gazit, Fainekos, and Pappas(2009)}]{KressGazitFainekosPappas2009}
Kress-Gazit, H.; Fainekos, G.~E.; and Pappas, G.~J. 2009.
\newblock Temporal-Logic-Based Reactive Mission and Motion Planning.
\newblock \emph{IEEE Transactions on Robotics}, 25(6): 1370--1381.

\bibitem[{Kress-Gazit, Lahijanian, and Raman(2018)}]{kress2018synthesisForRobotsReview}
Kress-Gazit, H.; Lahijanian, M.; and Raman, V. 2018.
\newblock Synthesis for robots: Guarantees and feedback for robot behavior.
\newblock \emph{Annual Review of Control, Robotics, and Autonomous Systems}, 1(1): 211--236.

\bibitem[{Lavaei et~al.(2022)Lavaei, Soudjani, Abate, and Zamani}]{LSAZ2022automated}
Lavaei, A.; Soudjani, S.; Abate, A.; and Zamani, M. 2022.
\newblock Automated verification and synthesis of stochastic hybrid systems: A survey.
\newblock \emph{Automatica}, 146: 110617.

\bibitem[{Lopez and Slotine(2023)}]{Lopez2022uCBF}
Lopez, B.~T.; and Slotine, J.-J.~E. 2023.
\newblock Unmatched Control Barrier Functions: Certainty Equivalence Adaptive Safety.
\newblock In \emph{2023 American Control Conference (ACC)}, 3662--3668.

\bibitem[{Majumdar et~al.(2023)Majumdar, Mallik, Rychlicki, Schmuck, and Soudjani}]{majumdar2023flexible}
Majumdar, R.; Mallik, K.; Rychlicki, M.; Schmuck, A.-K.; and Soudjani, S. 2023.
\newblock A flexible toolchain for symbolic rabin games under fair and stochastic uncertainties.
\newblock In \emph{International Conference on Computer Aided Verification}, 3--15. Springer.

\bibitem[{Majumdar et~al.(2024)Majumdar, Mallik, Schmuck, and Soudjani}]{majumdar2024symbolic}
Majumdar, R.; Mallik, K.; Schmuck, A.-K.; and Soudjani, S. 2024.
\newblock Symbolic control for stochastic systems via finite parity games.
\newblock \emph{Nonlinear Analysis: Hybrid Systems}, 51: 101430.

\bibitem[{Majumdar, Sa{\u{g}}lam, and Thejaswini(2024)}]{majumdar2024rabin}
Majumdar, R.; Sa{\u{g}}lam, I.; and Thejaswini, K. 2024.
\newblock Rabin games and colourful universal trees.
\newblock In \emph{International Conference on Tools and Algorithms for the Construction and Analysis of Systems}, 213--231. Springer.

\bibitem[{Makdesi, Girard, and Fribourg(2024)}]{makdesi2023data}
Makdesi, A.; Girard, A.; and Fribourg, L. 2024.
\newblock Data-Driven Models of Monotone Systems.
\newblock \emph{IEEE Transactions on Automatic Control}, 69(8): 5294--5309.

\bibitem[{Mathiesen et~al.(2024)Mathiesen, Romao, Calvert, Laurenti, and Abate}]{mathiesen2024data}
Mathiesen, F.~B.; Romao, L.; Calvert, S.~C.; Laurenti, L.; and Abate, A. 2024.
\newblock A data-driven approach for safety quantification of non-linear stochastic systems with unknown additive noise distribution.
\newblock arXiv:2410.06662.

\bibitem[{Mazouz et~al.(2024)Mazouz, Skovbekk, Mathiesen, Frew, Laurenti, and Lahijanian}]{mazouz2024data}
Mazouz, R.; Skovbekk, J.; Mathiesen, F.~B.; Frew, E.; Laurenti, L.; and Lahijanian, M. 2024.
\newblock Data-Driven Permissible Safe Control with Barrier Certificates.
\newblock arXiv:2405.00136.

\bibitem[{Mehdipour et~al.(2023)Mehdipour, Althoff, Tebbens, and Belta}]{AlthoffBeltaReviewFMCEforAutonomousDriving}
Mehdipour, N.; Althoff, M.; Tebbens, R.~D.; and Belta, C. 2023.
\newblock Formal methods to comply with rules of the road in autonomous driving: State of the art and grand challenges.
\newblock \emph{Automatica}, 152: 110692.

\bibitem[{Nazeri et~al.(2025{\natexlab{a}})Nazeri, Badings, Schmuck, Soudjani, and Abate}]{DBLP:journals/corr/abs-2508-15543}
Nazeri, M.; Badings, T.; Schmuck, A.; Soudjani, S.; and Abate, A. 2025{\natexlab{a}}.
\newblock Data-Driven Abstraction and Synthesis for Stochastic Systems with Unknown Dynamics.
\newblock \emph{CoRR}, abs/2508.15543.

\bibitem[{Nazeri et~al.(2025{\natexlab{b}})Nazeri, Badings, Soudjani, and Abate}]{nazeri2025data}
Nazeri, M.; Badings, T.; Soudjani, S.; and Abate, A. 2025{\natexlab{b}}.
\newblock Data-Driven Yet Formal Policy Synthesis for Stochastic Nonlinear Dynamical Systems.
\newblock \emph{Proceedings of Machine Learning Research vol}, 283: 1--15.

\bibitem[{Reissig, Weber, and Rungger(2016)}]{reissig2016feedback}
Reissig, G.; Weber, A.; and Rungger, M. 2016.
\newblock Feedback refinement relations for the synthesis of symbolic controllers.
\newblock \emph{IEEE Transactions on Automatic Control}, 62(4): 1781--1796.

\bibitem[{Salamati et~al.(2024)Salamati, Lavaei, Soudjani, and Zamani}]{Salamati2021DDCBC}
Salamati, A.; Lavaei, A.; Soudjani, S.; and Zamani, M. 2024.
\newblock Data-driven verification and synthesis of stochastic systems via barrier certificates.
\newblock \emph{Automatica}, 159: 111323.

\bibitem[{Sa\u{g}lam, Schmuck, and Tsyrempilon(2024)}]{SaglamST24}
Sa\u{g}lam, I.; Schmuck, A.; and Tsyrempilon, M. 2024.
\newblock A Decremental Algorithm for Fair B{\"{u}}chi Games.
\newblock In \emph{Automated Technology for Verification and Analysis - 22nd International Symposium, {ATVA} 2024, Kyoto, Japan, October 21-25, 2024, Proceedings, Part {I}}, volume 15054 of \emph{Lecture Notes in Computer Science}, 89--109. Springer.

\bibitem[{Sa\u{g}lam and Schmuck(2023)}]{SaglamS23}
Sa\u{g}lam, I.; and Schmuck, A.-K. 2023.
\newblock {Solving Odd-Fair Parity Games}.
\newblock In \emph{43rd IARCS Annual Conference on Foundations of Software Technology and Theoretical Computer Science (FSTTCS 2023)}, volume 284 of \emph{Leibniz International Proceedings in Informatics (LIPIcs)}, 34:1--34:24. Dagstuhl, Germany: Schloss Dagstuhl -- Leibniz-Zentrum f{\"u}r Informatik.
\newblock ISBN 978-3-95977-304-1.

\bibitem[{Schmuck et~al.(2024)Schmuck, Thejaswini, Saglam, and Nayak}]{SchmuckTSN24}
Schmuck, A.; Thejaswini, K.~S.; Saglam, I.; and Nayak, S.~P. 2024.
\newblock Solving Two-Player Games Under Progress Assumptions.
\newblock In Dimitrova, R.; Lahav, O.; and Wolff, S., eds., \emph{Verification, Model Checking, and Abstract Interpretation - 25th International Conference, {VMCAI} 2024, London, United Kingdom, January 15-16, 2024, Proceedings, Part {I}}, volume 14499 of \emph{Lecture Notes in Computer Science}, 208--231. Springer.

\bibitem[{Sch{\"o}n et~al.(2024)Sch{\"o}n, van Huijgevoort, Haesaert, and Soudjani}]{schoen2023bayesian}
Sch{\"o}n, O.; van Huijgevoort, B.; Haesaert, S.; and Soudjani, S. 2024.
\newblock Bayesian formal synthesis of unknown systems via robust simulation relations.
\newblock \emph{IEEE Transactions on Automatic Control}.

\bibitem[{Soudjani and Abate(2013)}]{Soudjani2013adaptive}
Soudjani, S.; and Abate, A. 2013.
\newblock Adaptive and sequential gridding procedures for the abstraction and verification of stochastic processes.
\newblock \emph{SIAM Journal on Applied Dynamical Systems}, 12(2): 921--956.

\bibitem[{Tabuada(2009)}]{tabuada2009verification}
Tabuada, P. 2009.
\newblock \emph{Verification and control of hybrid systems: a symbolic approach}.
\newblock Springer Science \& Business Media.

\bibitem[{Tarski(1955)}]{Tarski95}
Tarski, A. 1955.
\newblock A lattice-theoretical fixpoint theorem and its applications.
\newblock \emph{Pacific Journal of Mathematics}, 5: 285--309.

\bibitem[{Wang et~al.(2023)Wang, Meng, Liu, and Smith}]{wang2023stochastic}
Wang, C.; Meng, Y.; Liu, J.; and Smith, S. 2023.
\newblock Stochastic Control Barrier Functions with {Bayesian} Inference for Unknown Stochastic Differential Equations.
\newblock arXiv:2312.12759.

\bibitem[{Wang, Theodorou, and Egerstedt(2018)}]{Wang2018CBF}
Wang, L.; Theodorou, E.~A.; and Egerstedt, M. 2018.
\newblock Safe learning of quadrotor dynamics using barrier certificates.
\newblock In \emph{2018 IEEE International Conference on Robotics and Automation (ICRA)}, 2460--2465. IEEE.

\bibitem[{Zabinsky, Smith, and Kristinsdottir(2003)}]{zabinsky2003optimal}
Zabinsky, Z.~B.; Smith, R.~L.; and Kristinsdottir, B.~P. 2003.
\newblock Optimal estimation of univariate black-box Lipschitz functions with upper and lower error bounds.
\newblock \emph{Computers \& Operations Research}, 30(10): 1539--1553.

\bibitem[{Zhang et~al.(2024)Zhang, Ma, Soudijani, and Soudjani}]{zhang2024formal}
Zhang, Z.; Ma, C.; Soudijani, S.; and Soudjani, S. 2024.
\newblock Formal verification of unknown stochastic systems via non-parametric estimation.
\newblock In \emph{International Conference on Artificial Intelligence and Statistics}, 3277--3285. PMLR.

\end{thebibliography}

\newpage               %

\appendix%
\newpage
\section{Related Work}
\label{sec:related}

\paragraph{Abstractions-Based Control Policy Synthesis}
Abstraction-based control policy synthesis has become a foundational approach in formal methods and cyber-physical systems, enabling effective design and verification by simplifying complex dynamics into finite-state representations. We refer the reader to the seminal works by \citet{tabuada2009verification,girard2005approximate,reissig2016feedback} that introduce bisimulation-based symbolic models, their approximate versions, and feedback refinement relations. Extensions to stochastic dynamical systems have been extensively studied in the literature \cite{abate2008probabilistic,Soudjani2013adaptive}. As a major limitation, these approaches often require knowing an exact mathematical model of the system under study.

\paragraph{Abstraction-Based Policy Synthesis From Data}
A growing body of research has addressed \emph{data-driven} formal methods for safety verification and control synthesis, including recent abstraction-based learning methods \cite{gracia2023distributionally,banse2023data2,Chekan2023UncertainConstraints,makdesi2023data,kazemi2022datadriven,zhang2024formal,schoen2023bayesian,nazeri2025data, DBLP:journals/corr/abs-2508-15543}. Learning abstractions for dynamical systems is a powerful approach that enables synthesis and verification, without requiring exact knowledge about the system dynamics.

In deterministic settings, data-driven methods have been used to construct symbolic models from observed trajectories---enabling verification and control synthesis through automata learning, compositional abstractions, or neural approximations. For instance, \citet{banse2023data2} proposed memory-dependent Markov chain abstractions that mitigate bias via sequential sampling and adaptive partition refinement, using a Kantorovich-inspired metric for increased accuracy and scalability. Similarly, \citet{kazemi2022datadriven} developed abstraction-based control synthesis frameworks that infer growth bounds and use scenario-based convex optimization to ensure formal guarantees from sampled dynamics.

Stochastic systems introduce additional complexity due to inherent uncertainty and non-determinism. \citet{gracia2023distributionally} tackled this by learning robust Markov decision process (RMDP) abstractions from data for switched stochastic systems with unknown disturbances, combining robust-value iteration with provable high-probability satisfaction guarantees under temporal logic specifications. Similarly, \citet{nazeri2025data, DBLP:journals/corr/abs-2508-15543} extended these data-driven abstraction techniques to continuous nonlinear stochastic systems, constructing interval MDPs with PAC-style confidence bounds on transition probabilities, enabling reliable policy synthesis under uncertainty. These available approaches are either restricted to non-stochastic systems, limited to finite-horizon properties, or focused on quantitative verification and synthesis. In contrast, our approach proposed in this work addresses almost sure satisfaction of infinite-horizon properties for unknown stochastic systems. 

\paragraph{Abstraction-Free Control Policy Synthesis}
Another line of research focuses on synthesizing control policies directly via certificates, bypassing the need for abstraction.
For data-driven control barrier certificates (CBCs), many existing approaches are limited to linear or control-affine dynamics~\cite{Jagtap2020CBCGP,Cohen2022,Lopez2022uCBF}.
Furthermore, many approaches rely on known Lipschitz constants to provide formal guarantees and/or address partially unknown dynamics.
For instance, Gaussian processes are employed by \citet{Wang2018CBF} and \citet{Jagtap2020CBCGP} to learn the partially unknown dynamics of nonlinear systems while assuming the affine control-dependent part of the dynamics is known.
Systems with unknown additive disturbance are addressed by \citet{Chekan2023UncertainConstraints}.
Similarly, \citet{mathiesen2024data} assume the deterministic part of the dynamics to be accurately known, whereas the CBC-based safety control method proposed by \citet{mazouz2024data} focuses on known noise distributions.
Approaches to fully unknown dynamics are scarce. \citet{Salamati2021DDCBC} study the computation of CBCs for fully unknown discrete-time systems relying on known Lipschitz constants.
Uncertain continuous-time systems are studied by \citet{wang2023stochastic}, employing Bayesian inference and local Lipschitzness. These approaches are developed for basic safety and reachability properties and cannot handle the class of specifications considered in our work.

\paragraph{Strategy Synthesis Over Game Abstractions}
Two-player games on graphs \cite{gamesongraphs}---originally used to abstract the computation flow of reactive programs in computer science---provide a powerful model for dynamic strategic interactions. 
These models enable the use of \textbf{reactive synthesis} \cite{Bernd-lecturenotes} to design control strategies that continuously react to actions of the environment to satisfy temporal logic specifications. Such specifications are widely used to formalize both safety and strategic liveness objectives for cyber-physical systems, e.g., for mobile robot navigation \cite{KressGazitFainekosPappas2009,kress2018synthesisForRobotsReview} or in autonomous driving~\cite{AlthoffBeltaReviewFMCEforAutonomousDriving}. 

Abstraction-based control policy synthesis (cf. previous paragraphs of this section) constructs abstractions as games and utilizes reactive synthesis for strategy construction. %
\citet{majumdar2024symbolic} first observed that abstracting into \emph{fair} games significantly reduces conservatism, leading to successful synthesis in broader contexts. In particular, it has been shown that fair games have many interesting computational advantages \cite{banerjee2023fast, SaglamS23, HausmannPSS24, SchmuckTSN24, AnandNRSS25}, which makes their use in abstraction-based policy synthesis very promising. %

\paragraph{Incremental Game Solving}
The first incremental/decremental game solving algorithm was proposed by \citet{ChatterjeeHenzinger14} for Büchi/coBüchi games, which relies on progress measures (PMs). First presented in~\citet{KK91}, progress measures became more known for Jurdzinski's presentation~\cite{Jurdzinskismall00} for parity games, under the name ``small progress measures.''

Very recently, the incremental algorithm by \citet{ChatterjeeHenzinger14} was extended to fair Büchi games by \citet{SaglamST24}. Although conceptually related, their approach differs significantly from the one presented in this paper. While \citet{SaglamST24} construct PMs directly and focus on heuristics for non-monotone graph modifications, we utilize a simplified version of the gadgets from \citet{HausmannPSS24} to construct cofair coBüchi PMs; that are aligned with a symbolic fixed-point algorithm adapted from the fair Büchi game solver proposed by \citet{banerjee2023fast}. 

To the best of the authors' knowledge, this is the first work to connect monotonicity in data-driven abstraction refinement with the monotonicity required for incremental game solving.

\section{Notation}\label{app:notation}
This section provides the notation necessary to follow the proofs and the additional material presented in this appendix.

\paragraph{Basics}
We use $\mathbb{N}$ and $\mathbb{R}$ to denote the set of non-negative integers and real numbers, respectively. Let $\Sigma$ be a finite alphabet. Then  $\Sigma^*$ and $\Sigma^\omega$ denote the sets of finite and infinite words over $\Sigma$. 
For an infinite word $\xi \in \Sigma^\omega$, we denote by $\Inf(\xi)$ the set of letters occurring infinitely often in $\xi$.

We write $\|v\|$ for the $L^\infty$ norm of vector $v$, and $|v|$ for its element-wise absolute value. 
We use $\preccurlyeq$ and $\succcurlyeq$ to denote element-wise inequalities for vectors in $\mathbb{R}^n$. Specifically, for $l, h \in \mathbb{R}^n$, we write $l \preccurlyeq h$ if and only if $l(i) \leq h(i)$ for all $i \in \{1, 2, \ldots, n\}$, where $l(i) \in \mathbb{R}$ denotes the $i$-th component of the vector $l$.
We use $[1, n]$ to denote the set of natural numbers from $1$ to $n \in \mathbb{N}$, inclusive.
For vectors $l, h \in \mathbb{R}^n$ with $l \preccurlyeq h$, the notation $[l, h]$ denotes the axis-aligned hyperrectangle (box) defined by the Cartesian product of intervals, $[l, h] := \prod_{i=1}^n [l(i), h(i)]$. 

\paragraph{Game graphs}
A \emph{game graph} is a tuple $\gamegraph= \tup{V,\Vsys,\Venv,E}$, where $(V,E)$ is a finite directed graph with  \emph{edges} $ E $ and \emph{vertices} (\emph{nodes}) $ V $ partitioned into \Psys and \Penv vertices, $\Vsys$ and $\Venv$, respectively. We refer to such game graphs as \emph{non-fair} game graphs. For $v \in V$, $E(v)$ denotes its successor set $\{w \mid (v, w) \in E\}$. 
Without loss of generality, we assume all vertices in $V$ have at least one outgoing edge. 
A \emph{play} originating at a vertex $v_0$ is an infinite sequence of vertices $\xi=v_0v_1\ldots \in V^\omega$ with $v_{i+1}\in E(v_i)$ for all $i\in\mathbb{N}$. 
We denote the restriction of a game graph $\gamegraph$ to a subset $V' \subseteq V$ as $\gamegraph|_{V'} = \tup{V\cap V', \Vsys\cap V', \Venv \cap V', E \cap V' \times V'}$. In all figures, circle nodes denote $\Vsys$ vertices and square nodes denote $\Venv$ vertices.

\paragraph{Fair and cofair game graphs}
A \emph{fair game graph} is a tuple $\gamegraph^f= (\gamegraph, E^f)$
where $\gamegraph = \tup{V,\Vsys,\Venv,E}$ is a game graph and $E^f \subseteq E \cap \Venv \times V$ is a subset of edges originating from $\Venv$ vertices. 
We call a game graph \emph{cofair} if $E^f$ originates from $\Vsys$ vertices instead, and denote it by $\gamegraph^{cf}$. 
We call $E^f$ \emph{fair edges} and nodes with fair edges \emph{fair vertices}, $V^f = \{v \in V \mid E^f(v) \neq \emptyset \}$. A play $\xi$ in $\gamegraph$ is called \emph{fair} in $\gamegraph^f$ if in $\xi$ \emph{whenever a fair vertex is visited infinitely often, all its fair outgoing edges are taken infinitely often as well}. Formally, $\xi$ is fair iff for every $v \in V^f$, $v \in \Inf(\xi) \implies \forall w \in E^f(v), w \in \Inf(\xi)$.

\paragraph{Games}
A game $\mathcal{G} = \ltup{\gamegraph^\circ, \spec}$ consists of a (non-fair, fair, or cofair) game graph and a specification. A game is called a non-fair/fair/cofair depending on the game graph.

In a non-fair game $\ltup{\gamegraph, \spec}$, a play $\xi$ is winning for $\Vsys$ iff $\xi \in \spec$. 
In a fair game $\ltup{\gamegraph^f, \spec}$ (fair nodes belong to \Penv), a play $\xi$ is winning for $\Vsys$ iff $\xi \in \spec$ \emph{or} $\xi$ \emph{is not} fair.
In a cofair game $\ltup{\gamegraph^{cf}, \spec}$ (fair nodes belong to \Psys), a play $\xi$ is winning for $\Vsys$ iff $\xi \in \spec$ \emph{and} $\xi$ \emph{is} fair. Otherwise, $\xi$ is winning for $\Venv$. 

\paragraph{Büchi and coBüchi games}
In this work, we restrict our attention to specifications 
$\spec = \Inf(\XX)$ (called a Büchi specification) and $\spec = \overline{\Inf}(\XX)$ (coBüchi specification). The set $\XX \subseteq V$ is called Büchi vertices in the first specification and coBüchi vertices in the second. $\Inf(\XX)$ defines all plays in $V^\omega$ that visit a node in $\XX$ infinitely often, and $\overline{\Inf}(\XX)$ defines all plays in $V^\omega$ that do not visit any node in $\XX$ infinitely often.
Games with Büchi/coBüchi specifications are called (non-fair/fair/cofair) Büchi/coBüchi games, depending on the game graph.

\paragraph{Strategies}
A \emph{strategy} for Player $j \in \{0,1\}$ over the game graph $\gamegraph^\circ$ %
is a function $\sigma^j : V^* \cdot V_j \to V$ with the constraint that for all $h\cdot v \in V^* \cdot V_j$ it holds that $\sigma^j(h \cdot v) \in E(v)$. A play $\xi=v_0 v_1\ldots \in V^\omega$ is called a $\sigma^j$-play if for all $i\in \mathbb{N}$ holds that $v_i\in V_j$ implies $v_{i+1}=\sigma^j(v_0 \ldots v_i)$. A strategy $\sigma^j$ is winning for Player $j$ from a subset $V' \subseteq V$ in the game $\ltup{\gamegraph^\circ,\spec}$, if all $\sigma^j$-plays $\xi$ in $G^\circ$ that start at a vertex in $V'$ are winning for the player w.r.t.\ $\spec$. 
The winning region of Player $j$ is the maximal set $V'\subseteq V$ that is winning for Player $j$, which is denoted by $\Win_j(\gamegraph, \spec)$, or simply $\Win_j$ whenever the game is clear from the context.
For the games we are interested in, it always hold that $V = \Wsys \cupdot \Wenv$.
A strategy $\sigma$ is called \emph{positional} iff for all $h, h' \in V^*$, $\sigma(h \cdot v) = \sigma(h' \cdot v)$. In this case, we simply denote the strategy at vertex $v$ by $\sigma(v)$.

Both Büchi and coBüchi games accept positional strategies; that is, both players have positional winning strategies in their winning regions. In fair/cofair Büchi and coBüchi games, the player without the fair edges has positional strategies from its winning region, whereas the owner of the fair vertices might need exponential memory strategies~\cite{SaglamS23}.

\paragraph{Duality of Büchi and coBüchi Games} Büchi and coBüchi games are dual in the following sense: 
In a Büchi game $\ltup{\gamegraph, \Inf(\XX)}$, Player 0 is playing Büchi game on the set $\XX$ (i.e. trying to ensure $\Inf(\XX)$) and Player 1 is playing a coBüchi game on the set $\XX$ (i.e. trying to ensure $\overline{\Inf}(\XX)$).

Consequently, if we have a Büchi game $\ltup{\gamegraph, \Inf(\XX)}$ with $\gamegraph= \tup{V,\Vsys,\Venv,E}$, we can swap the ownership of the nodes and turn the Büchi nodes into coBüchi nodes, and get an \emph{equivalent} coBüchi game $\ltup{\overline{\gamegraph}, \overline{\Inf}(\XX)}$ where $\overline{\gamegraph} = \tup{V,\overline{\Vsys} = \Venv,\overline{\Venv} = \Vsys,E}$. These two games, which are called \textit{duals} of each other, are equivalent in the sense that $\Wsys(\gamegraph, \Inf(\XX))  =\Wenv(\overline{\gamegraph}, \overline{\Inf}(\XX))$, and consequently, $\Wenv(\gamegraph, \Inf(\XX))  =\Wsys(\overline{\gamegraph}, \overline{\Inf}(\XX))$. 

Similarly, fair Büchi and cofair coBüchi games are dual. Let $\mathcal{G} = \ltup{\gamegraph^f, \Inf(\XX)}$ be a fair Büchi game with $\gamegraph^f=(\gamegraph, E^f)$. Then the cofair coBüchi game $\overline{\mathcal{G}} = \ltup{\gamegraph^{cf}, \overline{\Inf}(\XX)}$ where $\gamegraph^{cf}=(\overline{\gamegraph}, E^f)$ (note that this game graph is cofair since in $\overline{\gamegraph}$, the fair edges $E^f$ originate from $\Vsys$ nodes) is its dual, i.e. $\Wsys\ltup{\mathcal{G}} = \Wenv\ltup{\overline{\mathcal{G}}}$. Observe that by solving a game, we automatically obtain the solution to its dual.

\section{Supplementary Material for Sec. 3}
\subsection{Proof of Equation~\ref{eq:bounds}}
\begin{proof}[\unskip\nopunct]
Fix an arbitrary state $x_* \in \mathbb{R}^n$ and control input $u_* \in \mathcal{U}$. Consider the subset of data samples $\mathbf{D_N}(u_*) \subseteq \mathbf{D_N}$ defined as
$$
\mathbf{D_N}(u_*) = \{(x_i, u_i, x_i^+) \in \mathbf{D_N} \mid u_i = u_*\}.
$$
Given any sample $(x_i, u_*, x_i^+) \in \mathbf{D_N}(u_*)$, we have
$$
x_i^+ = f(x_i, u_*) + w_i.
$$
Since the noise term $w_i$ is unknown but bounded within the compact set $\Omega = [l,h]$, we derive the following bounds for $f(x_i, u_*)$:
\begin{equation*}
x_i^+ - h \preccurlyeq f(x_i, u_*) \preccurlyeq x_i^+ - l.
\end{equation*}
Using the Lipschitz continuity, we have
\begin{equation*}
|f(x_i, u_*) - f(x_*, u_*)| \preccurlyeq L_X\cdot\|x_i - x_*\|.
\end{equation*}
Combining the two inequalities above, we obtain
\begin{equation*}
x_i^+ - L_X\cdot\|x_i - x_*\| - h \preccurlyeq f(x_*, u_*) \preccurlyeq x_i^+ + L_X\cdot\|x_i - x_*\| - l.
\end{equation*}
Since this holds for all samples $(x_i, u_*, x_i^+) \in \mathbf{D_N}(u_*)$, we combine these inequalities into a single expression by introducing functions:
\begin{align}
\check{f}(x_*, u_*|\mathbf{D_N}) := &\max_{\substack{(x_i, u_*, x_i^+) \\ \in \mathbf{D_N}(u_*)}}
\left(x_i^+ - L_X\cdot\|x_i - x_*\|\right) - h,\notag\\
\hat{f}(x_*, u_*|\mathbf{D_N}) := &\min_{\substack{(x_i, u_*, x_i^+) \\ \in \mathbf{D_N}(u_*)}}
\left(x_i^+ + L_X\cdot\|x_i - x_*\|\right) - l,\notag
\end{align}
where we have
\begin{equation*}
    f(x_*, u_*) \in \Big[\check{f}(x_*, u_*|\mathbf{D_N}), \hat{f}(x_*, u_*|\mathbf{D_N})\Big].
\end{equation*}
\end{proof}

\subsection{Proof of~\cref{thm:ouapproximation}}
\ouapproximation*

\begin{proof}[\unskip\nopunct]
\textbf{Data-driven over-approximation} We show that any abstract state $s'$ that is reachable from a concrete state $x\in\mathcal{T}^{-1}(s)$ with a non-zero probability, i.e., $T(s' \mid x, u) > 0$, is contained within the data-driven over-approximation $\overline{F}(s, u|\mathbf{D_N})$.

Consider abstract states $s, s' \in S$, a concrete state $x \in \mathcal{T}^{-1}(s)$, and a control input $u \in \mathcal{U}$ satisfying
$$ T(s' \mid x, u) > 0. $$
This implies the existence of a noise realization $w \in \Omega$ such that
$$ f(x, u) + w \in \mathcal{T}^{-1}(s'), $$
which is equivalent to
$$ \{w \in \Omega \mid f(x, u) + w \in \mathcal{T}^{-1}(s')\} \neq \emptyset. $$
Given the support of the noise $\Omega = [l, h]$, we write this as
$$ \Big[f(x, u) + l, f(x, u) + h\Big] \cap \mathcal{T}^{-1}(s') \neq \emptyset. $$
Using data-driven bounds in \cref{eq:bounds}, we obtain
$$ \left[\check{f}(x, u|\mathbf{D_N}) + l, \hat{f}(x, u|\mathbf{D_N}) + h\right] \cap \mathcal{T}^{-1}(s') \neq \emptyset. $$
We can expand the first term and obtain
\begin{multline*}
\left[\min_{x \in \mathcal{T}^{-1}(s)} \check{f}(x, u|\mathbf{D_N}) + l, \max_{x \in \mathcal{T}^{-1}(s)} \hat{f}(x, u|\mathbf{D_N}) + h\right] \\ \cap \mathcal{T}^{-1}(s') \neq \emptyset,
\end{multline*}
which, by definition, implies that $s' \in \overline{F}(s, u|\mathbf{D_N})$.

\bigskip
\textbf{Data-driven under-approximation} We show that for every abstract state $s' \in \underline{F}(s, u|\mathbf{D_N})$, any concrete state $x \in \mathcal{T}^{-1}(s)$ can reach $s'$ with a non-zero probability, i.e., $T(s' \mid x, u) > 0$.

Consider an abstract state $s' \in \underline{F}(s, u|\mathbf{D_N})$. By the definitions of $\underline{F}(s, u|\mathbf{D_N})$ and $\underline{f}(s, u|\mathbf{D_N})$, we have
\begin{multline*}
\mathcal{T}^{-1}(s') \subseteq \bigg[\max_{x \in \mathcal{T}^{-1}(s)} \hat{f}(x, u|\mathbf{D_N}) + l, \\ \min_{x \in \mathcal{T}^{-1}(s)} \check{f}(x, u|\mathbf{D_N}) + h\bigg].
\end{multline*}
Using \cref{eq:bounds}, we get
\begin{align*}
&\left[\max_{x \in \mathcal{T}^{-1}(s)} \hat{f}(x, u|\mathbf{D_N}) + l, \min_{x \in \mathcal{T}^{-1}(s)} \check{f}(x, u|\mathbf{D_N}) + h\right] \\
&\quad\subseteq \left[\max_{x \in \mathcal{T}^{-1}(s)} f(x, u) + l, \min_{x \in \mathcal{T}^{-1}(s)} f(x, u) + h\right] \\
&\quad\subseteq \bigcap_{x \in \mathcal{T}^{-1}(s)} [f(x, u) + l, f(x, u) + h].
\end{align*}
Therefore, it follows that
$$\mathcal{T}^{-1}(s') \subseteq \bigcap_{x \in \mathcal{T}^{-1}(s)} \Big[f(x, u) + l, f(x, u) + h\Big] $$
which, by definition, implies that
$$ T(s' \mid x, u) > 0, \quad \forall x \in \mathcal{T}^{-1}(s). $$
\end{proof}

\subsection{Abstract Fair Game Graph}
~\cref{fig:gameabstraction} shows the abstract game graph constructed formally in Sec. 4. Each abstract state is a $\Vsys$ node (i.e., $s$). With each control input $u \in \mathcal{U}$, it goes to a $\Venv$ node (i.e., $s^u$). Each of these $\Venv$ nodes has $(m^{s,u}+1)$ successors, all of which are fair nodes, i.e., only have fair outgoing edges. The topmost one (i.e., $s^u_0$) has fair outgoing edges to the under-approximation $\underline{F}(s, u)$. The rest of them also have an additional fair edge to a node in $\overline{F}(s, u) \setminus \overline{F}(s, u)$. This construction mimics the following: When the controller choses an action $u$ from an abstract state $s$, it can end up in \emph{any} abstract state $\overline{s}$ in the over-approximation $\overline{F}(s, u)$ (this is guaranteed by the existence of a $\Venv$ node $s^a_i$ that has $\overline{s}$ as a successor, so $\Venv$ has a strategy to reach $\overline{s}$ from $s$) and if it takes the input $u$ from $s$ infinitely often, it is guaranteed to end up in \emph{every} abstract state in the under-approximation $\underline{F}(s, u)$ (this is guaranteed as all successors $s^u_i$ of $s^u$ have \emph{every} state $\underline{s} \in \underline{F}(s, u)$ as a \emph{fair} successor, so every play that sees $s$ infinitely often, also sees $\underline{s}$ infinitely often).

\begin{figure}[htbp]
  \centering
    \begin{tikzpicture}[auto,node distance=1.5cm,semithick]%
    \node[draw, circle, minimum size=0.8cm, inner sep=0pt] (q) {$s$}; 
        \node[draw, rectangle, minimum width=0.65cm, minimum height=0.65cm, inner sep=0pt]        (q1) [right=2em of q] {$s^u$};
        \node[draw, rectangle, minimum width=0.65cm, minimum height=0.65cm, inner sep=0pt]
        (q3) [right=2em of q1] {$s^u_1$};
        \node[draw, rectangle, minimum width=0.65cm, minimum height=0.65cm, inner sep=0pt]
        (q2) [above=1em of q3] {$s^u_0$};
        \node (blank2)  [below=1em of q3] {$\vdots$};
        \node[draw, rectangle, minimum width=0.65cm, minimum height=0.65cm, inner sep=0pt]
        (q4) [below=1em of blank2] {$s^u_{m}$};
         \node[draw, circle, minimum size=0.8cm, inner sep=0pt]
        (q5) [right=3em of q3] {$\overline{s}^{s,u}_1$};
        \node[draw, circle, minimum size=0.8cm, inner sep=0pt]
        (q6) [right=3em of q4] {$\overline{s}^{s,u}_m$};
        \node (blank)  [below=1.5em of q1] {$\vdots$};
        \node[draw,  shape=ellipse,
       minimum width=0.8cm,
        minimum height=0.8cm, inner sep=0pt,  line width=1pt] (F) [right=2em of q2] {$\underline{F}(s, u)$}; 
        \draw[->] (q)  -- (q1);
        \draw[->] (q1)  -- (q2);
        \draw[->] (q1)  -- (q3);
        \draw[->] (q1)  -- (q4);
        \draw[->] (q1)  -- (blank2);
        \draw[->, dashed] (q2)  -- (F);
        \draw[->, dashed] (q3)  -- (F);
        \draw[->, dashed] (q4)  -- (F);
        \draw[->, dashed] (q3)  -- (q5);
        \draw[->, dashed] (q4)  -- (q6);
        \draw[->] (q)  -- (blank);
    \end{tikzpicture}
    \par\smallskip
  \caption{The figure represents the outgoing edges of a state $s$, and the subgame created thus, in the abstract fair game graph. The fair edges are denoted by dashed lines. Here, $m = m^{s,u}$ and the fair outgoing edge to the elliptic node represents fair outgoing edges to each state $\underline{s} \in \underline{F}(s,u) \subseteq V_0$.}
    \label{fig:gameabstraction}
\end{figure}
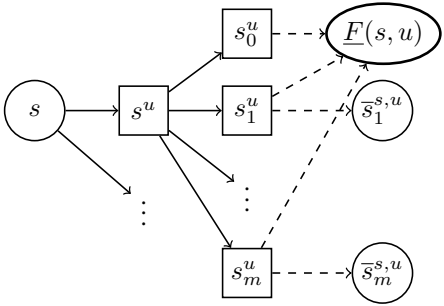

\paragraph{The running example} The abstract game graph depicted in~\cref{fig:overview} (B) is an example of this construction. Here $\mathcal{U} = \{\texttt{fw}\}$ and $S$ is the 9-cells in the grid figure. However, we treat the two wall states $\frownie$ as one for simplicity, and only show the abstract states $v_0, v_1, \smiley, \frownie$ in the abstract game graph, as the other abstract states in the grid (the white ones) are nor reachable from $v_0$ under the current abstraction sets. 
The initial game (both black and purple edges) is constructed using the approximation sets $\overline{F}^{N}(s, \texttt{fw})$, $\underline{F}^{N}(s, \texttt{fw})$ where 
for $s \in \{v_0, v_1\}$ the approximation set values are given on the figure and for $v \in \{\smiley, \frownie\}$ we assume $\overline{F}^{N}(s, \texttt{fw}) = \underline{F}^{N}(s, \texttt{fw} ) = s$ and simplify the resulting transitions in the abstract fair game with self loops. 

As $|\mathcal{U}| = 1$, $s \in \{v_0, v_1\}$ has one successor, $s^\texttt{fw} \in \Venv$. 
Since $|\overline{F}^N(s, \texttt{fw}) \setminus \underline{F}^N(s, \texttt{fw})| + 1$ is $2$ and $3$ for $s= v_0$ and $s = v_1$; $(v_0)^\texttt{fw}$ and $(v_1)^\texttt{fw}$ have $2$ and $3$ successors, respectively.  $(v_0)^\texttt{fw}_0$ and  $(v_1)^\texttt{fw}_0$---the topmost successors of  $(v_0)^\texttt{fw}$ and  $(v_1)^\texttt{fw}$ respectively---have one successor each, that is the unique abstract state in their under-approximations $\underline{F}^N(s, \texttt{fw})$ for $s = v_0$ and $s = v_1$. The other successors of $s^\texttt{fw}$ go to all of the under-approximation states, in addition to one more state from the over-approximation each. 

After one learning step, we assume the over-approximation $\overline{F}^{N+1}(v_1, \texttt{fw})$ has shrunk to $\{v_1, \smiley\}$; and everything else stayed the same. Then, the abstract game graph obtained by the under- and over-approximations  $\overline{F}^{N}(s, \texttt{fw}) = \underline{F}^{N}(s, \texttt{fw} )$ is the black subgraph; and it is easily obtained by removing the purple edges (and the now redundant node $(v_1)^\texttt{fw}_1$). In the next section, we detail how to obtain the new subgraph after a learning step by locally modifying the previous graph. 

\section{Supplementary Material for Sec. 4}

\subsection{Proof of~\cref{thm:mono}}
\mono*
\begin{proof}[\unskip\nopunct]
We start from the definitions of data-driven bounds,
\begin{align}
\check{f}(x_*, u_* \mid \mathbf{D}_N) &= \max_{\substack{(x_i, u_*, x_i^+) \\ \in \mathbf{D}_N(u_*)}}
\left(x_i^+ - L_X \cdot \|x_i - x_*\|\right) - h, \notag\\
\hat{f}(x_*, u_* \mid \mathbf{D}_N) &= \min_{\substack{(x_i, u_*, x_i^+) \\ \in \mathbf{D}_N(u_*)}}
\left(x_i^+ + L_X \cdot \|x_i - x_*\|\right) - l. \notag
\end{align}
Consider an additional observation
$$
\mathbf{D}_{N+1}(u_*) = \mathbf{D}_N(u_*) \cup \{(x_{N+1}, u_*, x_{N+1}^+)\}.
$$
Then, the updated lower bound becomes
\begin{multline*}
\check{f}(x_*, u_* \mid \mathbf{D}_{N+1}) = \\
\max \Big( \check{f}(x_*, u_* \mid \mathbf{D}_N),\; x_{N+1}^+ - L_X \cdot \|x_{N+1} - x_*\| - h \Big),
\end{multline*}
which results in
$$
\check{f}(x_*, u_* \mid \mathbf{D}_{N+1}) \succcurlyeq \check{f}(x_*, u_* \mid \mathbf{D}_N).
$$
Similarly, the upper bound becomes
\begin{multline*}
\hat{f}(x_*, u_* \mid \mathbf{D}_{N+1}) = \\
\min \Big( \hat{f}(x_*, u_* \mid \mathbf{D}_N),\; x_{N+1}^+ + L_X \cdot \|x_{N+1} - x_*\| - l \Big),
\end{multline*}
and therefore
$$
\hat{f}(x_*, u_* \mid \mathbf{D}_{N+1}) \preccurlyeq \hat{f}(x_*, u_* \mid \mathbf{D}_N).
$$

This establishes that the lower bound $\check{f}$ is monotonically non-decreasing, while the upper bound $\hat{f}$ is monotonically non-increasing.
\end{proof}
\subsection{Proof of~\cref{thm:UAOA}}
\UAOA*
\begin{proof}
Theorem~\ref{thm:mono} establishes that
$$\check{f}(x_*, u_*|\mathbf{D_N}) \preccurlyeq \check{f}(x_*, u_*|\mathbf{D_{N+1}}) \text{ and}$$
$$\hat{f}(x_*, u_*|\mathbf{D_N}) \succcurlyeq \hat{f}(x_*, u_*|\mathbf{D_{N+1}}).$$
As a result, the learned interval for $f(x_*, u_*)$ becomes tighter with more data, and therefore,
$$
\overline{f}(s, u|\mathbf{D_{N+1}}) \subseteq \overline{f}(s, u|\mathbf{D_{N}}) \text{ and} $$
$$\underline{f}(s, u|\mathbf{D_{N}}) \subseteq \underline{f}(s, u|\mathbf{D_{N+1}}).
$$
Consequently, the corresponding reachable sets also improve monotonically, 
$$
\overline{F}(s, u|\mathbf{D_{N+1}}) \subseteq \overline{F}(s, u|\mathbf{D_{N}}) \text{ and}$$
$$\underline{F}(s, u|\mathbf{D_{N}}) \subseteq \underline{F}(s, u|\mathbf{D_{N+1}}).
$$
\end{proof}
\subsection{Monotonic Graph Modifications and Proof of~\cref{thm:dualmonotonicitya}}

\paragraph{Monotonic Game Updates.}
Let $\mathcal{G}_N = \langle\gamegraph^f_N, \Inf(\XX)\rangle$ and
$\mathcal{G}_{N+1} = \ltup{\gamegraph^f_{N+1}, \Inf(\XX)}$
be the abstract fair Büchi games induced by the under- and over-approximations  $\underline{F}^{N}(s, u)$, $\overline{F}^N(s, u)$ and $\underline{F}^{N+1}(s, u)$, $\overline{F}^{N+1}(s, u)$, respectively.

We show that the resulting modifications from $\gamegraph^f_N$ to $\gamegraph^f_{N+1}$ can only \textbf{expand} the $\Vsys$ winning region in the corresponding fair Büchi game.

As mentioned in~\cref{thm:UAOA}, at each learning step, for a state and input pair $(s, u)$, two atomic changes may occur (a) the under-approximation expands and (b) the over-approximation shrinks. It is easy to see that the under-approximation is always contained in the over-approximation, i.e., $\underline{F}^N(s, u) \subseteq \overline{F}^N(s,u)$ for all $N$.

We will examine how the atomic changes (a) and (b) alters the game graph. 

\begin{enumerate}[label=(\alph*)]
\item The under-approximation expands, resulting in the new under-approximation $\underline{F}^{N+1}(s, u) = \{s'\} \cup  \overline{F}^N(s, u) $ where $s' \in \overline{F}^{N}(s, u) \setminus \underline{F}^{N}(s, u)$.
Then the new game graph $\gamegraph_{N+1}$ has additional (fair) outgoing edges from each $s^u_i$ to $s'$, for each $i \in [1, m^{s, u}]$. \label{item:UAexpands}
\item The over-approximation shrinks, resulting in the new over-approximation 
$\overline{F}^{N+1}(s, u) = \overline{F}^{N}(s, u) \setminus \{s'\}$ for a $s' \in \overline{F}^{N}(s, u) \setminus \underline{F}^{N}(s, u) $. 
Then the new game graph $\gamegraph_{N+1}$ loses the node $s^u_i$, which had fair outgoing edges to $\underline{F}^N(s, u) \cup \{s'\}$.\label{item:OAshrinks}
\end{enumerate}
\cref{thm:dualmonotonicitya} states that, under both (a) and (b), the winning region of $\Vsys$ in $\mathcal{G}^f_{N+1}$ contains the winning region of \Psys in $\mathcal{G}^f_N$, with the exception of the lost states in modification~\ref{item:OAshrinks}.

\dualmonotonicitya*

\emph{Proof sketch. }We show that neither modifications~\ref{item:UAexpands} nor~\ref{item:OAshrinks} can shrink the winning region of \Psys. As every modification consists of (a) and (b), this proves the theorem. 

It is easy to see that~\ref{item:OAshrinks} cannot cause $\Wsys$ to shrink, because \Penv is losing the (non-fair) edge $s^u \to s^u_i$. This change is only restricting the actions of \Penv, and therefore $\Wenv$ cannot expand. Consequently, $\Wsys$ cannot shrink. 

The reason why~\ref{item:UAexpands} cannot cause $\Wsys$ to shrink is more subtle. As \Penv is gaining new edges $s^u_i \to s'$, it may appear that the strategy space of \Penv is expanding. However, firstly, although \Penv has new strategies on $\gamegraph^f_{N+1}$ that do not exist on $\gamegraph^f_{N}$, some of its strategies that were fair on $\gamegraph^f_{N}$ are not fair---and therefore losing---on $\gamegraph^f_{N+1}$, because the newly added edges are fair, and on $\gamegraph^f_{N+1}$, whenever $s \to s^u_i$ is seen infinitely often for some $s^u_i$, $s^u_i \to s'$ should also be taken infinitely often (which was not a necessity on $\gamegraph^f_N$). Furthermore, the modification does not yield any new meaningful \Penv strategies. This is because all the newly added outgoing edges are to the same vertex $s' \in \overline{F}(s,u)$, that is already the successor of one $\Venv$ node $s^u_i$ in $\gamegraph^f_N$. So every winning \Penv strategy in $\gamegraph^f_{N+1}$ can be mimicked by a winning \Penv strategy in $\gamegraph^f_{N}$.

Before we start to prove~\cref{thm:dualmonotonicitya} formally, we cite a result from~\cite{SaglamS23} characterising strategies in fair Büchi games.
\begin{lemma}[\citet{SaglamS23}]~\label{lemma:almost-positional}
In fair Büchi games, \Psys has positional winning strategies, and \Penv has almost positional winning strategies. That is, let $\ltup{\gamegraph^f, \Inf(\XX)}$ be a fair Büchi game; if $v \in \Wenv(\gamegraph^f, \Inf(\XX))$, then there exists a \Penv winning strategy $\sigma$ s.t. from each $v \in \Venv$:
\begin{itemize}
\item If $v \not \in V^f$, then $\sigma$ is positional from $v$, i.e. $\sigma(h \cdot v) = w$ for every $h \in V^*$. 
\item If $v \in V^f$, then there is a successor $w' \in E(v)$ and a set $W = E^f(v) \cup \{w'\}$ such that in every play, when $v$ is visited for the first time, $\sigma$ takes the edge $(v, w')$, and when it is visited later, it cycles the edges $\{(v, w) \mid w \in W\}$.
\end{itemize}
\end{lemma}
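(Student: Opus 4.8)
# Proof Proposal for Lemma~\ref{lemma:almost-positional} (Almost-Positional Strategies in Fair Büchi Games)

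\textbf{Setup and goal.} The plan is to prove that in a fair Büchi game $\ltup{\gamegraph^f, \Inf(\XX)}$, Player~0 has positional winning strategies on $\Wsys$, while Player~1 has ``almost-positional'' winning strategies on $\Wenv$ of the restricted form described: positional on non-fair $\Venv$ nodes, and on each fair node $v$ using a fixed first-visit successor $w'$ together with a fixed cyclic schedule over $W = E^f(v) \cup \{w'\}$ on all subsequent visits. I treat the two players separately, since Player~0 (who does not own the fair edges) faces an essentially standard Büchi objective, whereas Player~1 (who owns the fair edges and must \emph{violate} fairness or the Büchi condition) is the delicate case.

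\textbf{Player~0 positionality.} The cleanest route is to recast fairness as a winning condition on the underlying finite graph. A play is winning for Player~0 iff it is \emph{not} fair or it satisfies $\Inf(\XX)$; equivalently, either some fair node $v$ is visited infinitely often while some edge in $E^f(v)$ is taken only finitely often, or $\XX$ is hit infinitely often. This is an $\omega$-regular (in fact a Rabin-style) condition on $(V,E)$, and the plan is to argue that from Player~0's perspective it collapses to a Rabin condition for which positional strategies are known to exist by the classical positional determinacy of Rabin games. Concretely, I would encode each fair node $v$ together with one of its fair out-edges as a Rabin pair $(\{v\} \text{ avoided}, \text{that edge's target seen})$ and add one Rabin pair capturing $\Inf(\XX)$; then positional determinacy for Player~0 in Rabin games yields a positional winning strategy on $\Wsys$. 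Since this half is standard, I would keep it brief and cite positional determinacy of Rabin games directly.

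\textbf{Player~1 almost-positionality --- the main obstacle.} This is the hard part, because Player~1 must either reach $\XX$ only finitely often along a \emph{fair} play or else deliberately make some play fair while avoiding $\XX$ eventually; in both sub-cases Player~1 must \emph{honor} all fair edges of the nodes it revisits infinitely, which forces genuine (bounded) memory and rules out pure positionality. The plan is to fix a Player~1 winning strategy $\sigma$ on $\Wenv$ and show it can be normalized to the claimed shape without losing winning-ness. For each fair $\Venv$ node $v \in V^f$, fairness compliance requires that every edge of $E^f(v)$ be taken infinitely often whenever $v$ recurs; the key structural claim is that it suffices to commit to one auxiliary successor $w' \in E(v)$ used on the very first visit, and thereafter to \emph{round-robin} the edges in $W = E^f(v) \cup \{w'\}$. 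I would justify this by a cycling/shuffling argument: any fair winning play through $v$ can be re-scheduled so that, after the first visit, the out-edges from $v$ are taken in a fixed cyclic order over $W$, and this re-scheduling preserves both the set of infinitely-visited vertices $\Inf(\xi)$ and fairness, hence preserves the winning status of the play. The subtlety is that changing $v$'s out-choice changes the continuation, so I cannot argue play-by-play in isolation; instead I would invoke finite memory. Namely, fair Büchi games admit finite-memory winning strategies for Player~1, so I would start from a finite-memory $\sigma$ and collapse its memory \emph{locally at each fair node} to the round-robin counter over $W$, proving via an induction on visits that the induced play still (i) honors fairness at $v$ and (ii) stays winning because $\Inf(\xi)$ is unchanged and $\XX$ is still avoided infinitely (or visited only finitely) by construction.

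\textbf{Assembly.} Finally I would combine the two halves: positional $\sigma_0$ on $\Wsys$ from the Rabin reduction, and the normalized almost-positional $\sigma_1$ on $\Wenv$ from the cycling argument, noting that $V = \Wsys \cupdot \Wenv$ by determinacy of fair Büchi games (an $\omega$-regular condition), so the two strategies jointly cover $V$. I expect the genuine difficulty to lie entirely in the Player~1 normalization --- specifically, in showing that collapsing finite memory to a single first-visit successor plus a round-robin schedule over $W$ does not destroy fairness-compliance on \emph{other} fair nodes that share edges or are entangled in the same recurrent component. I would handle this by performing the collapse one fair node at a time and verifying that each local modification leaves $\Inf(\xi)$ invariant, so that the global fairness and Büchi bookkeeping is preserved throughout.
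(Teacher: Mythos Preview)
The paper does not prove this lemma at all: it is explicitly introduced as ``a result from~\cite{SaglamS23}'' and is simply cited, then used as a black box in the proof of \cref{thm:dualmonotonicitya}. There is therefore no proof in the paper to compare your proposal against.

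Since you have nonetheless sketched an argument, a brief comment on it. Your Player~0 side is fine in spirit (the fair Büchi objective for Player~0 is indeed a Rabin condition, so positional determinacy applies). Your Player~1 side identifies the right difficulty but the normalization argument is underspecified at the crucial point: you claim that replacing a finite-memory strategy by a first-visit-then-round-robin schedule at a fair node $v$ leaves $\Inf(\xi)$ invariant, but changing out-edges at $v$ generally changes which vertices are reached and hence which vertices recur, so ``$\Inf(\xi)$ is unchanged'' is exactly what needs to be argued and cannot simply be asserted. A cleaner route (and the one taken in the cited source) is to work from the fixpoint characterization of $\Wenv$ for fair Büchi games: the outer greatest/least fixpoint structure furnishes, for each $v\in\Wenv$, a ranking that directly yields the distinguished successor $w'$ (the one that strictly decreases the rank, used on the first visit to make progress toward trapping the play away from $\XX$), after which cycling $E^f(v)\cup\{w'\}$ keeps the play inside the same rank level while discharging fairness. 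That argument avoids the per-node memory-collapse induction and the attendant worry about entangled fair nodes.
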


\begin{proof}[Proof of \cref{thm:dualmonotonicitya}] 

We first let $G^f_{N+1}$ be the graph $G^f_N$ after a modification of type~\ref{item:OAshrinks} and show that $\Wsys(\mathcal{G}_{N+1}) \supseteq \Wsys(\mathcal{G}_N)  \cap V_{N+1}$.
This is the easier case, since the \Penv node $s^u$ loses a (non-fair) successor. This causes a decrease in the set of \Penv strategies. 
Formally, let $v \in \Wsys(\mathcal{G}_N) \cap V^{N+1}$ and $\iota$ be a \Psys winning strategy from $v$ in $\mathcal{G}_N$. As $\iota$ is winning against all \Penv strategies in $\mathcal{G}_N$, it is still winning against all \Penv strategies in $\mathcal{G}_{N+1}$; that is, $\iota$ is still a \Psys winning strategy from $v$ in $\mathcal{G}_{N+1}$, and therefore $v \in \Wsys(\mathcal{G}_{N+1})$.

Next we let $G^f_{N+1}$ be the graph $G^f_N$ after a modification of type~\ref{item:UAexpands} and show that $\Wsys(\mathcal{G}_{N+1}) \supseteq \Wsys(\mathcal{G}_N)$ (note that in this case, $V_N = V_{N+1}$ as the modification does not remove any node).

In $G^f_{N+1}$, all $s^u_i$ in $i \in [0, m^{s,u}]$ have a (fair) outgoing edge to $s'$. Since $s' \in \overline{F}^N(s, u)$, there exists a $j \in [1, m^{s,u}]$ such that the (fair) edge $(s^u_j, s')$ is in $G^f_N$. Let $v \in \Wsys(\mathcal{G}_N)$ and $\iota$ be a positional $\Vsys$ winning strategy from $v$ in $\mathcal{G}_N$ (\cref{lemma:almost-positional}).

Assume $v \in \Wenv(\mathcal{G}_{N+1})$ towards a contradiction. Then by~\cref{lemma:almost-positional}, there exists an almost positional \Penv winning strategy $\sigma$ that wins $v$ in $\mathcal{G}_{N+1}$.
As $s^u \not \in V^f$, $\sigma$ is positional on $s^u$. Let $s^u_i$ be the successor $\sigma$ assigns to $s^u$. 
Then the $\iota,\sigma$-play $\xi$ in $G^f_{N+1}$ that starts at $v$ and is compliant with both strategies $\iota$ and $\sigma$ is winning for \Penv. 
Since $\xi$ is \Penv winning, it is fair.
Then, $\xi$ either (i) sees $s^u$ (and therefore $s^u_i$) only finitely often, or (ii) it sees $s^u$ (and therefore $s^u_i$) infinitely often and 
cycles through the outgoing edges $(s^u_i, w)$ of $s^u_i$ infinitely often.

(i) Assume $\xi$ sees $s^u_i$ only finitely often. Then a tail (i.e. suffix) $\xi'$ of $\xi$ that does not see $s^u_i$ is a $\iota$-play in $G^f_N$, and in $G^f_{N+1}$; and therefore $\xi$ is \Psys winning. Note that this is because every node that can be reached from $s^u$ by a $\iota,\sigma$-play in $G^f_{N+1}$, can also be reached by a $\iota,\sigma'$-play in $G^f_N$ for a \Penv strategy $\sigma'$ in $G^f_N$. In particular, $s'$ can be reached from $s^u$ in $G^f_N$ via a \Penv strategy that takes the edges $s^u \to s^u_j \to s'$. Thus, we arrive at a contradiction.

(ii) Assume $\xi$ sees $s^u_i$ infinitely often.
Using $\sigma$, we will construct a \Penv strategy 
$\sigma'$ in $\mathcal{G}_N$ such that the $\iota,\sigma'$-play $\xi'$ that starts at $v$ in $G^f_N$ is winning for \Penv and therefore a contradiction. Let $\sigma'$ be a \Penv strategy in $G^f_N$ that is identical to $\sigma$ on every node in $V$ except $s^u$ and its successors, that is,  $\sigma'$ is almost positional with the same rules as $\sigma$ on every other node. On $s^u$, let $\sigma'$ to alternate between edges $(s^u, s^u_i)$ and $(s^u, s^u_j)$, and on $s^u_i$ and $s^u_j$ let $\sigma'$ to cycle through all the successors. Then since $s^u$ is visited infinitely often in $\xi'$, $s^u_i$ and $s^u_j$ are seen infinitely often as well; and consequently, all their successors. Therefore, all nodes in $\underline{F}^N(s,u) \cup \{s'\} \cup \{\overline{s}^{s,u}_i\}$ are seen infinitely often as well. Notice that these are exactly the successors of $s^u_i$ that are visited infinitely often in $\xi$ as well, since in $G^f_{N+1}$ the successors of $s^u_i$ are $\underline{F}^{N+1}(s, u) \cup \{\overline{s}^{s,u}_i\}$ where $\underline{F}^{N+1}(s, u) = \underline{F}^{N}(s, u) \cup \{s'\}$.
As $\iota$ is positional and $\sigma$ and $\sigma'$ are identical outside of $s^u$; $\xi$ and $\xi'$ see exactly the same nodes infinitely often. Therefore, $\xi'$ is \Psys winning. This gives us the desired contradiction and concludes the proof.
\end{proof}

\section{Supplementary Material for Sec. 5}

In this section, we will prove~\cref{thm:cofaircoBuechiPM}---the main contribution of this paper. 
To achieve this, we need to go through intermediate steps. 
In the first subsection, we will introduce progress measures valid for (non-fair) coBüchi games. In the second subsection, we will give a gadget-based reduction from cofair coBüchi games to (non-fair) coBüchi games. Bringing these two steps together, in the third subsection, we will prove~\cref{thm:cofaircoBuechiPM}.

\subsection{The coBüchi Progress Measure}\label{subsec:cobuechi-pm}
First presented by~\citet{KK91}, progress measures became more known for Jurdzinski's presentation~\cite{Jurdzinskismall00} for parity games, under the name ``small progress measures.'' We adopt the terminology of the later work, and restrict it to coBüchi games---as coBüchi games are 2-color ($\{2, 3\}$) parity games.\footnote{The parity progress measures in~\citet{Jurdzinskismall00} are restricted to colors $\{2,3\}$ and yield different results than ours. In fact, it yields PMs that assign vertices values in $\mathbb{N}^2$. However, it is not difficult to observe that one value of each pair is redundant, and they are equivalent to the one-digit coBüchi PMs we present in this work.}%

A progress measure $\rho: V \to [0, L] \cup \{\top\}$ 
is \emph{valid for coBüchi} if it satisfies~\cref{eq:cobuechi-pm} and~\eqref{eq:cobuechi-pr} where the range is given by $L = |\XX|$, with $\XX$ denoting the set of coBüchi vertices. The induced lifting function~\cref{eq:cobuechi-lifting} is the same as~\cref{eq:cofaircoBuechi-lifting} in the main paper---their difference comes from different definitions of $\Prog$.

\begin{equation}\label{eq:cobuechi-pm}
\rho(v) \geq \begin{cases}
    \Progc(v) + 1 \quad &\text{if} \quad v\in \XX,\\
    \Progc(v) \quad &\text{if}\quad v \not\in \XX,
\end{cases}
\end{equation}
\begin{equation}\label{eq:cobuechi-pr}
\text{where } \Progc(v) = \begin{cases}
    \min_{(v,w) \in E}\rho(w) \quad &\text{if} \quad v \in \Vsys,\\
    \max_{(v,w) \in E}\rho(w) \quad &\text{if} \quad v \in \Venv,
\end{cases}
\end{equation}
\begin{equation}\label{eq:cobuechi-lifting}
\textsf{Lift}(\rho, v) = \rho' \, \text{where} \,\,\rho'(w) = 
\begin{cases}
    \rho(w)  &\text{if}\,\quad w \neq v,\\     
    \Progc(v)+1 &\text{if}\, v = w \in \XX, \\ 
    \Progc(v) &\text{if}\, v = w \not \in \XX.
\end{cases}
\end{equation}

Knaster-Tarski Theorem\footnote{Using the facts that the set of Büchi progress measures of a game graph is a complete lattice where $\rho^0 : V \to \{0\}$ is the least element, and the lifting function is monotonic.}~\cite{Tarski95} reveals that by applying the lifting function to the smallest progress measure $\rho^0 : V \to 0$ iteratively, we reach a fixpoint. This fixpoint $\rho^*$, carries information of the winning regions. Specifically, $\Wsys(\gamegraph^{cf}, \overline{\Inf}(\XX)) = \{v \in V \mid \rho^*(v) \neq \top\}$, and consequently, $\Wenv(\gamegraph^{cf}, \overline{\Inf}(\XX)) = \{v \in V \mid \rho^*(v) = \top\}$.

\paragraph{The intuition behind the range of $\rho$} It is enough to restrict the range of a valid coBüchi PM to $L = |\XX|$ because if a node $v$ is lifted until it has reached $\rho(v) > |\XX|$, this means \Penv has a strategy $\sigma$ to ensure a coBüchi node is reached repeatedly from $v$, making all $\sigma$-plays, and $v$, \Penv winning. Recall that in coBüchi games both players have positional strategies. Observe that a node $v$ gains $\rho(v) = m$ if there exists a \Penv strategy $\sigma$ that in a play starting at $v$, it makes sure  $m$ coBüchi nodes are visited.\footnote{$\sigma(v) = w$ where $w$ is the successor with maximum $\rho(w)$.} Thus, if $\rho(v) > |\XX|$, this means \Penv has a strategy $\sigma$ to ensure $|\XX| + 1$ coBüchi nodes are visited. Therefore, $\sigma$ ensures visiting a cycle that contains at least one coBüchi node, and therefore visiting the coBüchi set infinitely often.

\subsection{The Gadget-Based Reduction}

In this section, we present 
\textit{gadgets} that turn a cofair coB\"{u}chi game into a regular coBüchi game. Our construction is inspired by the gadgets represented by~\citet{HausmannPSS24} that reduce a fair/cofair parity game into a regular parity game. 
If we directly use the gadgets given there to turn our cofair coB\"{u}chi game (i.e. a $\{2, 3\}$-color parity game) to a regular parity game, we  end up with a $\{1,2,3,4,5\}$-color parity game. However, this will introduce computational burden.
Therefore, inspired by these gadgets, we present simplified gadgets that turn a cofair coBüchi game into a (non-fair) coB\"uchi game. 

\begin{figure}[htbp]
  \centering
  \begin{minipage}{0.45\linewidth}
    \centering
    \begin{tikzpicture}[auto,node distance=1.5cm,semithick]
      \node[draw, circle, minimum size=0.8cm, inner sep=0pt] (q) {$v$}; 
      \node[draw, rectangle, minimum width=0.65cm, minimum height=0.65cm, inner sep=0pt]
        (ql) [below left=2em of q] {$v_l$};
      \node[draw, circle, minimum size=0.8cm, inner sep=0pt] 
      (qr) [below right=2em of q] {$v_r$};
      \node[draw, circle, minimum size=0.6cm, inner sep=0pt] at (qr.center) {};
      \node (eleft)  [below=1.5em of ql] {$E^f(v)$};
      \node (eright) [below=1.5em of qr] {$E(v)$};
      \draw[->] (q)  -- (ql);
      \draw[->] (q)  -- (qr);
      \draw[->] (ql) -- (eleft);
      \draw[->] (qr) -- (eright);
    \end{tikzpicture}
    \par\smallskip
    
  \end{minipage}\hfill
  \begin{minipage}{0.45\linewidth}
    \centering
    \begin{tikzpicture}[auto,node distance=1.5cm,semithick]
      \node[draw, circle, minimum size=0.8cm, inner sep=0pt] (q) {$v$}; 
      \node[draw, circle, minimum size=0.6cm, inner sep=0pt] at (q.center) {};
      \node (e)  [below=1.5em of q] {$E(v)$};
      \draw[->] (q)  -- (e);
    \end{tikzpicture}
    \par\smallskip 
  \end{minipage}
  \addtocounter{figure}{-4}
  \caption{Simplified cofair coBüchi gadgets for $v \in V^f \setminus \XX$ (left) and $v \in V^f \cap \XX$ (right). Doubly encircled nodes denote the coBüchi nodes.}
  \addtocounter{figure}{3}
\end{figure}
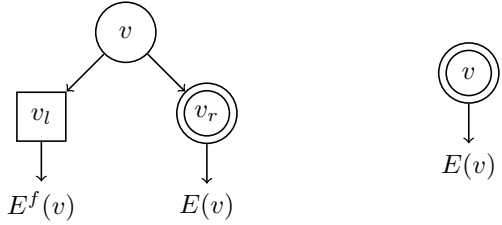

Given the cofair coBüchi game $\ltup{\gamegraph^{cf}, \overline{\Inf}(\XX)}$, the gadget game $\ltup{\gamegraph', \overline{\Inf}(\XX')}$ with $\gamegraph' = \tup{V', \Vsys', \Venv', E'}$ is obtained by replacing each fair node $v$ in $\gamegraph^{cf}$ with one of the gadgets given in~\cref{fig:gadgets}. If $v$ is a coBüchi node, it is replaced with the right gadget; all its successors are treated as regular successors, i.e., the fair outgoing edges of $v$ are turned into regular edges. If $v$ is not a coBüchi node, it is replaced with the left gadget and its incoming edges are redirected to the topmost gadget node (with the same name $v$), which has two children: the left child $v_l \in \Venv' \setminus \XX'$, and the right child $v_r \in \Vsys' \cap \XX'$. The left child has fair successors of $v$, and the right child has all successors of $v$. %
 
Next, we claim that the gadget game preserves the winning regions of the original game.

 \begin{restatable}{thm}{gadgetgame}\label{thm:gadgetgame}
Let $\ltup{\gamegraph^{cf}, \overline{\Inf}(\XX)}$ be a cofair coBüchi game with $\gamegraph^{cf} = (\gamegraph, E^f)$ and $\gamegraph = \tup{V, \Vsys, \Venv, E}$. There exists a coBüchi game $\ltup{\gamegraph', \overline{\Inf}(\XX')}$ with $\gamegraph' = \tup{V', \Vsys', \Venv', E'}$ with $V \subseteq V'$, where $|V'| = |V| + 2\cdot |V^f \setminus \XX|$ and $|\XX'| = |\XX| + |V^f \setminus \XX|$. Furthermore, for $j \in \{0,1\}$,  $\Win_j(\gamegraph^{cf}, \overline{\Inf}(\XX)) \cap V = \Win_j(\gamegraph', \overline{\Inf}(\XX')) \cap V$. 
\end{restatable}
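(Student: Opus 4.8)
The plan is to prove the two claims of \cref{thm:gadgetgame} separately: the cardinalities are a direct count, and the winning-region identity is established for $j=0$ only, the case $j=1$ following by determinacy. Since both games are determined -- their vertex sets partition as $\Win_0 \cupdot \Win_1$ -- and the fresh nodes $v_l, v_r$ lie outside $V$, once $\Win_0(\gamegraph^{cf},\overline{\Inf}(\XX)) \cap V = \Win_0(\gamegraph',\overline{\Inf}(\XX')) \cap V$ is shown, complementing inside $V$ yields the $j=1$ identity. For the counts, the left gadget -- used exactly for $v \in V^f \setminus \XX$ -- introduces the two new nodes $v_l \in \Venv'$ and $v_r \in \Vsys' \cap \XX'$, whereas the right gadget -- used for $v \in V^f \cap \XX$ -- reuses $v$ and merely relabels its fair edges as ordinary ones; hence $V \subseteq V'$, $|V'| = |V| + 2\,|V^f \setminus \XX|$, and $\XX' = \XX \cupdot \{v_r \mid v \in V^f \setminus \XX\}$, giving $|\XX'| = |\XX| + |V^f \setminus \XX|$.

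For the winning-region identity I would route both inclusions through a common safety object. In the gadget coB\"uchi game let $Z'$ be the largest $\XX'$-free region from which Player~0 can avoid $\XX'$ forever (the greatest fixpoint that is $\XX'$-free, has a staying successor at every $\Vsys'$ node, and keeps all successors at every $\Venv'$ node); it is standard that $\Win_0(\gamegraph',\overline{\Inf}(\XX')) = \Attr_0(Z')$, so $\XX'$ is visited only finitely often and only during the finite attractor prefix. The key structural observation is that, because $v_r \in \XX'$, no $v_r$ lies in $Z'$; hence inside $Z'$ Player~0 must leave every fair node through its left child $v_l$, and since $v_l \in \Venv'$ has successor set $E^f(v)$, membership $v_l \in Z'$ forces $E^f(v) \subseteq Z'$. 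Projecting by contracting $v \to v_l \to w$ and $v \to v_r \to w$ to the plain edge $v \to w$, the set $Y := Z' \cap V$ is therefore $\XX$-free and closed under: a staying successor at ordinary $\Vsys$ nodes, all successors at $\Venv$ nodes, and \emph{all} fair successors at fair nodes. This is exactly the fair-safe closure characterising the recurrent part of the cofair winning region, so it remains to transfer strategies and check that the attractors to $Y$ in $\gamegraph$ and to $Z'$ in $\gamegraph'$ agree on $V$.

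The \textbf{backward} transfer (gadget $\Rightarrow$ cofair) is the milder one: a positional Player~0 strategy forcing $Z'$ projects to a cofair strategy that forces $Y$ with the same finite set of $\XX$-visits (each move $v \to v_r \to w$, taken only finitely often, maps to $v \to w$); once inside $Y$, Player~0 stays forever by round-robining the fair edges at each recurrently visited fair node -- all landing in $Y$ by the closure $E^f(v) \subseteq Y$ -- and following the staying strategy elsewhere. The resulting play is fair and, being confined to the $\XX$-free set $Y$ after a finite prefix, satisfies $\overline{\Inf}(\XX)$; hence $v \in \Win_0(\gamegraph^{cf},\overline{\Inf}(\XX))$.

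The \textbf{forward} transfer (cofair $\Rightarrow$ gadget) is where I expect the main obstacle. A cofair-winning Player~0 cannot realise a chosen fair edge through the left gadget, since there Player~1 picks the fair successor, and realising it through the right gadget costs a coB\"uchi visit to $v_r$, which may occur only finitely often along a gadget-winning play. The resolution is that Player~0 needs non-fair moves at fair nodes only during the finite phase of reaching $Y$: in the tail, fairness is met merely by cycling the fair edges, all of which stay in $Y$. Accordingly Player~0 plays right only finitely often (lifting, via the right gadget, the $\gamegraph$-attractor moves to $Y$ into a $\gamegraph'$-attractor to $Z'$), and once inside $Y$ it plays left at every fair node, conceding the fair-successor choice to Player~1; this concession is sound \emph{precisely because} the closure $E^f(v) \subseteq Y$ keeps every fair successor in the $\XX$-free safe region, so the gadget play still visits $\XX'$ only finitely often. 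The crux I would have to nail down is the two feeding assertions of this step -- that a cofair-winning position forces reaching $Y$ after finitely many $\XX$-visits (equivalently, that the tail of every cofair-winning play lies in the maximal fair-safe region $Y$), and that this forcing lifts through the right gadget so that $\Attr_0^{\gamegraph}(Y)$ and $\Attr_0^{\gamegraph'}(Z')$ coincide on $V$. For the forcing and cycling constructions I would lean on positional determinacy of coB\"uchi games and, on the cofair side, on the almost-positional strategies of \cref{lemma:almost-positional}.
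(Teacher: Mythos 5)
Your counts, the reduction to $j=0$ via determinacy, and the local gadget analysis (that $v_r\in\XX'$ forces $v_r\notin Z'$, hence $v_l\in Z'$ and $E^f(v)\subseteq Z'$) are all correct, and your backward transfer -- round-robin over fair edges inside a safe set -- is essentially the paper's Player~0 strategy transfer. But the load-bearing step of your argument is a false lemma: in coB\"uchi games it is \emph{not} standard, and not true, that $\Win_0(\gamegraph',\overline{\Inf}(\XX')) = \Attr_0(Z')$ where $Z'$ is the largest Player-0-controlled $\XX'$-free region. Counterexample (with $E^f=\emptyset$, so it is itself a gadget game of a cofair game): take $a\in\Venv$, $b\in\Vsys\cap\XX'$, $z\in\Vsys$, with edges $a\to a$, $a\to b$, $b\to z$, $z\to z$. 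Every play from $a$ visits $\XX'$ at most once, so $a\in\Win_0$; yet $Z'=\{z\}$ and Player~0 cannot force reaching $z$ (Player~1 may loop at $a$ forever), so $a\notin\Attr_0(Z')$. The coB\"uchi winning region is the iterated least fixpoint $\mu Y.\,\nu X.\,\bigl(\Cpresys(Y)\cup(\neg\XX'\cap\Cpresys(X))\bigr)$, whose inner $\nu$ precisely captures plays that \emph{linger} outside the safe region at Player~1's discretion, paying finitely many $\XX'$-visits at times Player~1 chooses; collapsing this to a single attractor loses exactly those winning positions. The same defect reappears as the ``crux'' you flag in the forward direction: the assertion that a cofair-winning position forces reaching $Y$ after finitely many $\XX$-visits is false for the same reason, so both of your feeding assertions fail rather than merely remaining to be checked.

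The paper's proof avoids any region characterization altogether and is worth contrasting: it transfers a \emph{positional} Player~0 winning strategy $\iota'$ of the gadget game into an almost-positional cofair strategy (play $\iota'(w_r)$ positionally at fair nodes where $\iota'$ goes right, cycle through $E^f(w)$ where it goes left), and separately transfers a positional Player~1 winning strategy $\sigma'$ back by copying it on $\Venv$; in each direction one ``extends'' an arbitrary play of the cofair game to a play of the gadget game consistent with the gadget strategy, then analyzes the infinitely-often-visited subgraph (for Player~1 using that, under positional $\sigma'$, every cycle in that subgraph is realizable and hence contains a coB\"uchi node), and determinacy of both games closes the two inclusions into the stated equality. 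If you want to salvage your region-based route, you would have to stratify it: work with the $\mu$-iterates $Y^0\subseteq Y^1\subseteq\cdots$ (equivalently, progress-measure levels $\rho\le k$) and prove your projection correspondence level by level -- which is essentially the paper's separate fixpoint/progress-measure equivalence theorem, a considerably heavier argument than the direct strategy transfer this theorem actually needs.
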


First, we formally define the gadget game $\ltup{\gamegraph', \XX'}$ where $\gamegraph' = \tup{V', \Vsys', \Venv', E'}$ is constructed using the gadgets in~\cref{fig:gadgets}:
\begin{align*}
&V' = V \,\cup\, \{\,v_l, v_r \mid v \in V^f \setminus \XX\,\} ,\\
&\XX' = \XX \,\cup\, \{\, v_r \mid v \in V^f \setminus \XX\,\},
\end{align*}
and for $v \in V$,
\begin{align*}
&E'(v) = \begin{cases} 
     \,\{v_l, v_r\} & \text{ if } v  \in V^f \setminus \XX,\\
    \,E(v)  &\text{ otherwise,} 
\end{cases}\\
&E'(v_l) = E^f(v), \quad \quad E'(v_r) = E(v).
\end{align*}
It follows directly from the construction that the bounds on $|V'|$ and $|\XX'|$ given in~\cref{thm:gadgetgame} hold. We will prove the main part of the theorem, i.e.,
$\Win_j(\gamegraph^{cf}, \overline{\Inf}(\XX)) \cap V =  \Win_j(\gamegraph', \overline{\Inf}(\XX')) \cap V$ for $j \in \{0,1\}$.

Before we start the proof, we recall that both players have positional winning strategies in coBüchi games, e.g., the gadget game. 

\begin{proof}
We show the equivalence in two parts: $ \textbf{(1) } \Wsys(\gamegraph', \overline{\Inf}(\XX'))  \cap V \subseteq \Wsys(\gamegraph^{cf}, \overline{\Inf}(\XX)) \cap V$, and then \textbf{(2)} $\Wenv(\gamegraph', \overline{\Inf}(\XX')) \cap V \subseteq\Wenv(\gamegraph^{cf}, \overline{\Inf}(\XX)) \cap V$. By the determinacy of cofair coBüchi games (i.e., $V = \Wsys \uplus \Wenv$), this will give us the desired equality. 

\smallskip
\textbf{(1)} $\Wsys(\gamegraph', \overline{\Inf}(\XX')) \cap V \subseteq \Wsys(\gamegraph^{cf}, \overline{\Inf}(\XX)) \cap V$:

Let $v \in \Wsys(\gamegraph', \overline{\Inf}(\XX'))$ and $\iota'$ be a positional \Psys winning strategy in $\ltup{\gamegraph', \overline{\Inf}(\XX)}$. Using $\iota'$, we will construct a \Psys winning strategy $\iota$ which wins $v$ in $\ltup{\gamegraph^{cf}, \overline{\Inf}(\XX)}$.

We construct $\iota$ as follows:
Whenever $w \not \in V^f$ or $w \in V^f \cap \XX$, $\iota$ is positional on $w$ with $\iota(w) = \iota'(w)$. Otherwise, that is when $w \in V^f \setminus \XX$: If $\iota'(w) = w_r$, then set $\iota$ to be positional again on $w$ with $\iota(w) = \iota'(w_r)$. If $\iota'(w) = w_l$, then whenever $w$ is visited, make $\iota$ to cycle through all fair successors of $w$, according to an arbitrary, fixed order on $E^f(w)$.

Now we show that $\iota$ is a winning \Psys strategy in $\ltup{\gamegraph^{cf}, \overline{\Inf}(\XX)}$ that wins $v$. For this, we need to introduce the concept of \emph{extension} of an $\iota$-play in $\ltup{\gamegraph^{cf}, \overline{\Inf}(\XX)}$ to the gadget game.

Take an $\iota$-play $\xi = w^1 w^2 \ldots$ in $\gamegraph^{cf}$ with $w^1 = v$. The \emph{extension} $\xi'$ of $\xi$ to $\gamegraph'$ is defined as follows: Initialize $\xi'$ with $\xi' = w^1$, and increment $j=1$ one-by-one. 
For $w^j \not \in V^f $ or $w^j \in V^f \cap \XX$, extend $\xi'$ with $w^{j+1}$. Otherwise, i.e., for $w^j \in V^f \setminus \XX$, let $\iota'(w^j) = w^j_p$ where $p \in \{l, r\}$, and extend $\xi'$ with $w^j_p w^{j+1}$.

 As $\xi$ is an $\iota$-play, whenever $\iota'(w^j) = w^j_l$, $w^{j+1} \in E^f(w^j)$ and whenever $\iota'(w^j) = w^j_r$, $w^{j+1} = \iota'(w^j_r)$. Therefore, $\xi'$ is an $\iota$-play in $\gamegraph'$. As it starts from $v$, $\xi'$ is also winning for \Psys in the gadget game. Furthermore, note that if we restrict the nodes in $\xi'$ to $V$, i.e., skip the gadget nodes $v_l$ and $v_r$, we get exactly $\xi$. We denote this phenomenon by $\xi'|_{V} = \xi$.

 Let $\tilde{\xi'}$ denote the tail of $\xi'$ consisting of nodes and edges seen infinitely often in $\xi'$. %
 Note that as $\xi'$ is \Psys winning, $\tilde{\xi'}$ %
  does not contain coBüchi nodes. 

 Using these observations, we will show that $\xi$ is \Psys winning. %
In order to show $\xi$ is \Psys winning we need to show two properties: (i) $\xi$ is fair, and (ii) $\tilde{\xi} = \tilde{\xi'}|_{V}$ does not contain any coBüchi nodes.

Property (ii) follows directly from $\tilde{\xi}$ not containing any coBüchi nodes. To see (i), observe that by construction of $\iota$, the only fair nodes that are not required to traverse their fair outgoing edges are coBüchi nodes and nodes $w$ with $\iota'(w) = w_r$. Recall that $w_r$ is a coBüchi node in the gadget game. As $\tilde{\xi'}$ does not contain any coBüchi nodes, it does not contain any nodes $w_r$, and by the definition of extension, this means $\tilde{\xi}$ does not contain any nodes $w$ with $\iota'(w) = w_r$.
Therefore, we conclude that $\tilde{\xi'}|_{V} = \tilde{\xi}$ does not contain any fair nodes that do not traverse all their fair outgoing edges.  

Thus, $\xi$ satisfies both (i) and (ii), and is therefore \Psys winning. As $\xi$ is an arbitrary $\iota$-play that starts at $v$ in $\ltup{\gamegraph', \overline{\Inf}(\XX)}$, this implies that $\iota$ is a winning strategy, and $v \in \Wsys(\gamegraph^{cf}, \overline{\Inf}(\XX))$.

This concludes the first part of the proof. We now proceed to the second part. 

\smallskip
\textbf{(2)} $\Wenv(\gamegraph', \overline{\Inf}(\XX')) \cap V \subseteq \Wenv(\gamegraph^{cf}, \overline{\Inf}(\XX)) \cap V$:

Let $v \in \Wenv(\gamegraph', \overline{\Inf}(\XX'))$ and $\sigma'$ be a positional \Penv winning strategy in  $\ltup{\gamegraph', \overline{\Inf}(\XX)}$. For every $w \in V^f \setminus \XX$, let $\sigma(w)$ denote $\sigma'(w_l)$. As before, using $\sigma'$, we will construct a \Penv winning strategy $\sigma$ in $\ltup{\gamegraph', \overline{\Inf}(\XX)}$ that wins $v$. The strategy $\sigma$ is set by simply assigning $\sigma(w)  = \sigma'(w)$ for each $w \in \Venv$. 

Again, before we show that $\sigma$ is winning for \Penv we need to introduce the concept of \emph{extension} of a $\sigma$-play in $\ltup{\gamegraph^{cf}, \overline{\Inf}(\XX)}$ to the gadget game. The definition of the extension is different from that in the first part of the proof.

Take a $\sigma$-play $\xi = w^1 w^2 \ldots$ in $\gamegraph^{cf}$ with $w^1 = v$. The \emph{extension} $\xi'$ of $\xi$ to $\gamegraph'$ is defined as follows: Initialize $\xi'$ with $\xi' = w^1$ and increment $j =1 $ one-by-one. For $w^j \not \in V^f $ or $w^j \in V^f \cap \XX$, extend $\xi'$ with $w^{j+1}$. For $w^j \in V^f \setminus \XX$, if $w^{j+1} = \sigma(w^j)$ extend $\xi'$ with $w^j_l w^{j+1}$, otherwise extend $\xi'$ with $w^j_r w^{j+1}$.

Similar to the first direction, we note that $\xi'$ is a $\sigma$-play in the gadget game starting at $v$, and is therefore winning for \Penv. Furthermore, $\xi'|_{V} = \xi$. We define $\tilde{\xi'}$ as before and let $\gamegraph'|_{\tilde{\xi'}}$ denote the subgraph of $\gamegraph'$ that consists of nodes and edges taken in $\tilde{\xi'}$. Note that as $\tilde{\xi'}$ is a $\sigma$-play and $\sigma$ is positional, every node $\Venv$ in $\gamegraph'|_{\tilde{\xi'}}$  has exactly one outgoing edge. Therefore, every play in $\gamegraph'|_{\tilde{\xi'}}$ is realizable by a \Psys strategy, i.e., for every play $\tau$ in $\gamegraph'|_{\tilde{\xi'}}$, there exists a \Psys strategy $\iota'$ such that $\tau$ is a tail of the $\iota', \sigma'$-play that starts at $v$. Therefore, every cycle in $\gamegraph'|_{\tilde{\xi'}}$ is \Penv-winning.  This means every simple cycle carries a coBüchi node. 

Similar to the first part, using these observations, we will show that $\xi$ is \Penv winning. In order to show that $\xi$ is \Penv winning we need to show one of the two holds: (i) $\xi$ is unfair, or (ii) $\tilde{\xi} = \tilde{\xi'}|_{V}$ visits a coBüchi node. 

We will show that $\neg (i ) \Rightarrow (ii)$ holds, i.e., if $\xi$ is fair, then $\tilde{\xi}$ visits a coBüchi node. 
Assume $\xi$ is fair. Then every node $w \in V^f \cap \tilde{\xi}$ traverses all their fair outgoing edges infinitely often in $\tilde{\xi}$. Specifically, the edge $(w, \sigma(w))$ is traversed infinitely often as $\sigma(w) \in E^f(w)$. Then, the branch $w \to w_l \to \sigma(w)$ is traversed infinitely often in $\tilde{\xi'}$, and therefore exists in $\gamegraph'|_{\tilde{\xi'}}$. In other words, all fair nodes in $\gamegraph'|_{\tilde{\xi'}}$ have their left branches in the subgraph. We proceed to remove all right branches from $\gamegraph'|_{\tilde{\xi'}}$ and obtain yet another subgraph, and call it $H$. As all fair nodes had their left branches as well in  $\gamegraph'|_{\tilde{\xi'}}$, $H$ contains no dead-ends. As $H$ is a subgraph of $\gamegraph'|_{\tilde{\xi'}}$ without a dead-end, it contains a cycle, and consequently contains a coBüchi node. Due to the absence of right branches (therefore, the nodes $w_r$) in $H$, the only coBüchi nodes come from nodes in $V$. So $H$, and therefore $\tilde{\xi'}$, contain a coBüchi node $u \in V$. Then, $u \in \tilde{\xi} = \tilde{\xi'}|_{V}$. This concludes the proof of the implication.

As $\xi$ is an arbitrary $\sigma$-play that starts at $v$ in $\ltup{\gamegraph', \overline{\Inf}(\XX)}$, this implies that $\sigma$ is a winning strategy, and $v \in \Wenv(\gamegraph^{cf}, \overline{\Inf}(\XX))$. 

This concludes the second part of the proof, and therefore the proof of~\cref{thm:gadgetgame}.
\end{proof}

\subsection{The cofair coBüchi PM (Proof of~\cref{thm:cofaircoBuechiPM})}

Bringing~\cref{thm:gadgetgame} and the valid coBüchi PM together gives us the valid cofair coBüchi PM in~\cref{thm:cofaircoBuechiPM}. 
\cofaircoBuechiPM*

We obtain the cofair coBüchi progress measure in~\cref{eq:cofaircobuechi-pm,eq:cofaircobuechi-prf} by \emph{restricting the coBüchi progress measure} given in~\cref{eq:cobuechi-pm,eq:cobuechi-pr} applied to the gadget game, \emph{to the nodes $V$} of the original cofair coBüchi game. 

Observe that the cofair coBüchi game graph $\gamegraph^{cf}$ \emph{embeds in} the gadget game graph $\gamegraph'$ (\cref{thm:gadgetgame}). While we apply the coBüchi progress measure in~\cref{eq:cobuechi-pm} to $V'$, we focus on the way that the PM changes for $v \in V$  w.r.t. the PM of its regular and fair successors.

Take a node $v \in V^f \setminus \XX$ in $G^{cf}$. Then in $G'$, $v$ is the root node of the gadget in the left figure of~\cref{fig:gadgets}. 
Then, $\Progc(v) = \min \{\Progc(v_l), \Progc(v_r) + 1\}$ (as $v_r\in \XX'$) where $\Progc(v_l) = \min_{w \in E^f(v)} \rho(w)$ and $\Progc(v_r) = \max_{w \in E(v)} \rho(w)$. Therefore, when we restrict $\Progc$ to $v$ and its successors in $V$, we obtain, as in~\cref{eq:cofaircobuechi-prf},
$$\Prog(v) = \min \begin{cases}  
        \max_{(v,w) \in E^f} \rho(w),  \\
        \min_{(v,w) \in E}\rho(w) + 1.  
    \end{cases}$$
Similarly, for a node $v$ in $V^f \cap \XX$,  simply
$\Prog(v) = \Progc(v)$ since in the gadget game, $v$ is treated as a non-fair node. For non-fair nodes in $G^{cf}$  $\Prog(v) = \Progc(v)$ as well since their outgoing edges are the same in $G^{cf}$ and $G'$.

\paragraph{Enhanced PM range for (the dual of) the abstract game} Note that $|\XX'| = |\XX| + |V^f \setminus \XX|$ due to the construction of the gadget game (\cref{thm:gadgetgame}). In the abstract fair Büchi game the size of $|V^f|$ depends on the difference in the sizes of the under- and over-approximation of each $(s, u)$ pair (recall the abstract fair game graph construction from Sec. 3), and is upper-bounded by $|S|^2\cdot |\mathcal{U}|$ as $\overline{F}(s, u)$ can be the entire $S$ for each $u \in \mathcal{U}$. Clearly, $V^f$ is the same in the dual game. Therefore, the range of the cofair coBüchi PM is in the worst case $ L = |S|^2\cdot |\mathcal{U}|$. In order to optimize the number of iterations in the lifting algorithm, we observe that for the dual of the abstact fair Büchi game, we can restrict the cofair coBüchi PM to the range $[1, |S| + |\XX|] \cup \{\top\}$ (where $\XX\subseteq S$ is the set of \emph{abstract Büchi states}). 

Intuitively this is because due to the DAG-like structure of the subgame in~\cref{fig:gameabstraction}; \Penv cannot ensure visiting two different successors $s^u_i$ and $s^u_j$ of a node $s^u$. 
Recall that in the dual game, the ownership of the nodes is swapped, so $s^u \in \Vsys$. Now imagine replacing the $V^f$ nodes with the left gadget in~\cref{fig:gadgets} to obtain the gadget game of the dual of the abstract game. Since the gadget game is a coBüchi game without fairness, both players have positional strategies. Fix a positional winning \Psys strategy $\iota$ from all $v \in \Wsys$. In particular for each $s^u$, $\iota(s^u) = s^u_i$ for some $i$ and $\iota(s^u_i) = (s^u_i)_p$ for some $p \in \{r, l\}$. So each cycle that passes through $s^u$, can pass through only one of $s^u_i$.

Similar to the explanation of the range for coBüchi PMs, $\rho(v) > |S| + |\XX|$ implies \Penv has a strategy to ensure $|S|+|\XX| + 1$ coBüchi nodes are seen. This implies either (i) more than one successor $s^u_i$ of some $s^u$ are seen, or (ii) a cycle containing a coBüchi node is seen. As argued before, case (i) is not possible. Therefore, if $\rho(v)$ is lifted above $|S|+|\XX|$, \Penv has a winning strategy from $v$.

\section{Supplementary Material for Sec. 6}

In this section we will detail the ingredients we need to ensure the end-to-end efficiency of~\cref{alg:main_fixed}. 
In particular, we will show how to seamlessly integrate a symbolic fixpoint solver with our incremental algorithm. For this, we will show how to extract $\rho^*$ from the $\texttt{initialise}$ function in~\cref{alg:main_fixed} (line 1), which solves the fixpoint given in~\cref{lemma:cofaircoBuechifp}. The main theorem of this section is the following.
\begin{restatable}{thm}{fixpointpmequivalence}\label{thm:fixpoint-pm-equivalence}
    Let $\mathcal{G}$ be a fair Büchi game with the vertex set $V$. Then for every $v \in V$, 
    $\rho^{\psi}(v) = \rho^*(v)$, where $\rho^{\psi}$ is obtained from iterations of \cref{eq:fixpoint-cofaircoBuechi} and $\rho^*$ is the l.f.p. of the lifting function~\cref{eq:cofaircoBuechi-lifting} applied on the dual game $\overline{\mathcal{G}}$.
\end{restatable}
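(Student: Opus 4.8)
The plan is to factor the claimed equality through the gadget reduction of \cref{thm:gadgetgame}, thereby reducing everything to the well-understood correspondence between small progress measures and $\mu$-calculus fixpoint iterations for ordinary (non-fair) coBüchi games. First I would recall two facts that are essentially already in place. By the construction underlying \cref{thm:cofaircoBuechiPM}, the cofair coBüchi PM $\rho^*$ on $\overline{\mathcal{G}}$ is exactly the restriction to $V$ of the (non-fair) coBüchi least-fixpoint PM $\hat{\rho}^*$ of the gadget game $\langle\gamegraph',\overline{\Inf}(\XX')\rangle$; that is, $\rho^*=\hat{\rho}^*|_V$. On the other side, $\rho^{\psi}$ is read off the approximants of the \emph{outer} least fixpoint in \cref{eq:fixpoint-cofaircoBuechi}: writing $Y_0=\emptyset\subseteq Y_1\subseteq\cdots$ with $Y_{k+1}=\nu X.\,F(Y_k,X)$ and $Y^*=\psi$, each node $v$ is assigned $\rho^{\psi}(v)$ according to the first layer it enters, with $\rho^{\psi}(v)=\top$ iff $v\notin\psi=\Wsys(\overline{\mathcal{G}})$ (\cref{lemma:cofaircoBuechifp}).

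The crux is the second step, an \emph{iteration-level} simulation: running \cref{eq:fixpoint-cofaircoBuechi} on $\overline{\mathcal{G}}$ should produce, layer by layer, the same sets (intersected with $V$) as running the standard two-nested coBüchi fixpoint on the gadget game $\gamegraph'$. The content is that the fairness predecessor operators $\Lpre$ and $\Preexists$ occurring in the body of \cref{eq:fixpoint-cofaircoBuechi} are precisely the projection onto $V$ of the plain controllable-predecessor steps taken through the two-node gadget $\{v_l,v_r\}$ of a fair node $v\in V^f\setminus\XX$: the clause $\Lpre(X)\cup\Preexists(Y)\cup\neg V^f$ mirrors the choice between the left child $v_l$ (fair successors, no coBüchi node) and the right child $v_r$ (all successors, a coBüchi node), while $\neg V^f$ accounts for non-fair nodes whose gadget is trivial. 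Because the gadgets were crafted to be consistent with the fixpoint of \cite{banerjee2023fast}, this correspondence is exact, so the $Y_k$-approximants of $\psi$ on $\overline{\mathcal{G}}$ equal the $V$-restrictions of the approximants $\hat{Y}_k$ on $\gamegraph'$; hence $\rho^{\psi}=\hat{\rho}^{\psi'}|_V$, where $\hat{\rho}^{\psi'}$ denotes the iteration-rank of the coBüchi fixpoint on $\gamegraph'$.

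Third, I would invoke (and, for completeness, prove) the folklore equivalence on the non-fair coBüchi game $\gamegraph'$: the coBüchi small-progress-measure least fixpoint $\hat{\rho}^*$ (\cref{eq:cobuechi-pm}, \eqref{eq:cobuechi-pr}) coincides pointwise with the coBüchi $\mu$-calculus iteration-rank $\hat{\rho}^{\psi'}$. Both directions follow from the reading that a node's value counts the maximal number of coBüchi ($\XX'$) nodes Player 1 can force before Player 0 reaches safety, which is exactly the outer $\mu Y$ layer index; since the $\mu Y$ least fixpoint captures each node at the \emph{earliest} possible layer, the iteration-rank is the \emph{least} valid coBüchi PM and thus matches $\hat{\rho}^*$. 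Concretely, I would show $\hat{\rho}^{\psi'}$ is a fixpoint of the coBüchi lifting, giving $\hat{\rho}^*\preccurlyeq\hat{\rho}^{\psi'}$, and conversely, by induction on $k$, that every node of coBüchi-PM value $k$ has entered $\hat{Y}_k$, giving $\hat{\rho}^{\psi'}\preccurlyeq\hat{\rho}^*$. Chaining the three facts yields $\rho^{\psi}=\hat{\rho}^{\psi'}|_V=\hat{\rho}^*|_V=\rho^*$.

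The main obstacle is precisely the second step: establishing the exact, iteration-synchronous match between the fairness operators $\Lpre,\Preexists$ of \cref{eq:fixpoint-cofaircoBuechi} on $\overline{\mathcal{G}}$ and the plain predecessor steps through the gadgets on $\gamegraph'$. This must be argued at the level of \emph{individual} approximants rather than their limits, since PMs are sensitive to \emph{when} a node is captured, not merely \emph{whether} it is captured (the point stressed after \cref{lemma:cofaircoBuechifp}: a nested $\mu$-calculus formula gives \emph{some} PM-like map, but uniqueness of $\rho^*$ must be earned). Getting the off-by-one bookkeeping right at coBüchi nodes (where \cref{eq:cofaircobuechi-pm} adds $1$) and at the left/right gadget branching is the delicate part, and is exactly where the careful design of the gadgets pays off.
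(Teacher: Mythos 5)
Your proposal is correct in outline but takes a genuinely different route from the paper. The paper's own proof of \cref{thm:fixpoint-pm-equivalence} never revisits the gadget game: it transcribes the lifting rules directly onto the fair Büchi game (ownership swapped) and proves $\rho^{\psi}(v)=m \Leftrightarrow \rho^*(v)=m$ by induction on $m$, characterizing the outer fixpoint layers combinatorially --- $Y^1$ is the maximal subset of $\neg\XX$ admitting a \emph{full-$\Venv$-subgraph}, and in the inductive step it constructs $\Venv$-subgraphs whose dead-ends are preset to already-confirmed ranks, once to bound $\rho^*$ from above by a restricted lifting computation, and once to show the relevant nodes survive every inner $\nu X$ iteration of $Y^{m+1}$. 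You instead factor the equality through \cref{thm:gadgetgame} into three transfer lemmas, and this decomposition is viable: for instance, at $v\in V^f\setminus\XX$ the body of \cref{eq:fixpoint-cofaircoBuechi} simplifies (using $E^f(v)\subseteq E(v)$, $E^f(v)\neq\emptyset$, and $Y\subseteq X$ during the iteration) to $(E^f(v)\subseteq X)\vee(E(v)\cap Y\neq\emptyset)$, which is exactly the left-child/right-child alternative of the gadget, with the coBüchi node $v_r$ supplying the $+1$ via its capture only through $\Cpre$ of the previous layer; the one-step delay through gadget nodes is absorbed because each inner $\nu X$ is run to convergence, so $\hat Y^k\cap V = Y^k$ can be proved layer by layer (including the $v\in V^f\cap\XX$ case, where both sides reduce to $E(v)\cap Y^k\neq\emptyset$). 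What your route buys is modularity and a conceptual explanation of the paper's remark that the gadgets were ``crafted'' for this equivalence; what the paper's direct induction buys is avoiding the auxiliary lemmas you still owe. Concretely: your step 1 does \emph{not} follow merely from the validity-restriction derivation in the proof of \cref{thm:cofaircoBuechiPM} --- that argument matches the validity \emph{conditions}, while equality of the \emph{least} fixpoints needs both inequalities (restrict the gadget l.f.p.\ to $V$ and check it is a valid cofair PM; canonically extend $\rho^*$ to the gadget nodes by setting $\rho(v_l)=\max_{w\in E^f(v)}\rho^*(w)$ and $\rho(v_r)=\min_{w\in E(v)}\rho^*(w)+1$ and check validity on the gadget game), using tightness of least PMs; and your step 3 likewise needs both directions spelled out (the iteration-rank map is a valid coBüchi PM, hence above the l.f.p.; conversely $\{v\mid\hat\rho^*(v)\le k\}$ is a post-fixpoint of the inner operator at layer $k{+}1$, hence inside $\hat Y^{k+1}$). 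None of these is a fatal gap --- each is a short standard argument --- but together with the layer-synchronized simulation of step 2, which you rightly identify as the crux, they constitute the real work that the paper's single hands-on induction performs in one pass.
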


In the first subsection, we will first briefly introduce fixpoint calculations and the notation we use in~\cref{eq:fixpoint-cofaircoBuechi}. In the second subsection, we will show how to extract a PM $\rho^{\psi}$ from the calculation of~\cref{eq:fixpoint-cofaircoBuechi}. In the last subsection, we will prove~\cref{thm:fixpoint-pm-equivalence}. While the first and the second subsections are known in the literature, the last subsection is our contribution that allows for the integration of symbolic fixpoint solvers with our incremental algorithm~\cref{alg:main_fixed}.

\subsection{Fixpoint Algorithms in the $ \mu $-calculus} 

The $\mu $-calculus offers a succinct representation of symbolic algorithms (i.e., algorithms manipulating sets of vertices instead of individual vertices) over a game graph. 
In this section, we omit the general preliminaries of $\mu$-calculus and instead focus on the evaluation procedure for the two-nested formula defined in~\cref{eq:Z}. We refer to~\citet{Kozen:muCalculus} for further details regarding the $\mu$-calculus.
\begin{equation}\label{eq:Z}
Z=\mu Y.~\nu X.~\phi(X,Y)
\end{equation} %
Here, $X$ and $Y$ are subsets of $V$, $\mu$ denotes the least fixpoint, $\nu$ the greatest fixpoint, and $\phi$ is a formula composed of the \emph{monotone set transformers} introduced in~\cref{eq:transformers1,eq:transformers2}.

To compute $Z$, we start by initializing $X$ and $Y$: $X^0 := V$ (as $X$ is under $\nu$) and $Y^0 := \emptyset$ (as $Y$ is under $\mu$).  We first keep $Y$ at its initial value and iteratively compute $X^k=\phi(X^{k-1},Y^0) $ until $X^{k+1}=X^k$. 
At this point $X$ reaches a fixpoint; we denote this value of $X$ by $X^\infty$ and assign this value to $Y$, i.e., we have $Y^1 = X^\infty$.  We re-initialize $X^0:=\emptyset$, and re-evaluate $X^k=\phi(X^{k-1},Y^1) $ with the new value of $Y$. 
When $Y$ reaches a fixpoint, that is, $Y^\infty = Y^{k+1} = Y^k$, the calculation terminates and the evaluation of the formula equals $Z = Y^\infty$.

In order to retain all intermediate values of $X$, we use $X^{l,k}$ to denote the set which is computed in the $k$-th iteration over $X$ during the computation of $Y^l=X^{l,\infty}$.

\subsection{Extracting the Mapping Function} 
It is well-established that nested $\mu$-calculus formulas induce a mapping from the set of vertices to the natural numbers. In particular, a two-nested $\mu$-calculus formula $Z$, e.g.~\cref{eq:Z}, yields the mapping function $f^Z$ where for each $v \in Z$, $f^Z(v)$ is the unique integer $i$ for which $v \in Y^{i+1} \setminus Y^{i}$.\footnote{Note that typically $f^Z(v)$ is taken to be $i$ for which $v \in Y^{i} \setminus Y^{i-1}$, but we utilize this shifted indexing to simplify the notation in our proofs.}

\subsection{Obtaining $\psi$ and $\rho^\psi$} 

It is shown by~\citet{banerjee2023fast} that the following 2-nested $\mu$-calculus formula $\Psi$ calculates the \Psys winning region in a fair Büchi game $\mathcal{G} = \ltup{\gamegraph^f, \Inf(\XX)}$ with $\gamegraph^f = (\gamegraph, E^f)$ and $\gamegraph = \tup{V, \Vsys, \Venv, E}$:
\begin{align}\label{eq:fp-fairBuechi}
\Psi = 
\nu Y.\, \mu X. \, &( \XX \,\cap \,\Cpresys(Y) ) \, \cup \, \Cpresys(X) \,\cup\, \\ &(\Lpreexists(X) \,\cap\,\mathsf{Pre}^{\forall}_1(Y)).\notag
\end{align}
Here, $\XX$ is the set of Büchi nodes, and\begin{align}\label{eq:transformers1}
    &\mathsf{Pre}_1^{\forall}(H)= \{ v \in \Venv \, \mid \, E(v) \subseteq H\},\\ \notag
    &\mathsf{Pre}_0^{\exists}(H)= \{ v \in \Vsys \, \mid \, E(v) \cap H \neq \emptyset\},\\ \notag
    &\Cpresys(H) =  \mathsf{Pre}_1^{\forall}(H) \cup \mathsf{Pre}_0^{\exists}(H) \text{,}\\ \notag
    &\Lpreexists(H) = \{ v \in V^f \, \mid \, E^f(v) \cap H \neq \emptyset\}. 
\end{align}
As fair Büchi games are determined ($V = \Wsys(\mathcal{G}) \uplus \Wenv(\mathcal{G})$), we can negate~\cref{eq:fp-fairBuechi} and obtain the following $\mu$-calculus formula $\psi$ that calculates $\Wenv(\mathcal{G})$: 
\begin{align}%
\psi =  \mu Y.\, \nu X. \, &( (\neg \XX) \,\cup \,\Cpre(Y) ) \, \cap \, \Cpre(X) \, \cap \,\tag{6} \\ &( \Lpre(X) \, \cup \, \Preexists(Y)\, \cup \, (\neg V^f)),\notag
\end{align}
where $\neg H$ stands for $V \setminus H$ for a subset $H \subseteq V$, and
\begin{align}\label{eq:transformers2}
    &\Preforall(H) = \{ v \in \Vsys \, \mid \, E(v) \subseteq H\},\\ \notag
    & \Preexists(H) = \{ v \in \Venv \, \mid \, E(v) \cap H \neq \emptyset\},\\ \notag
    &\Cpre(H) =  \Preforall(H) \cup \Preexists(H) \text{,}\\ \notag
    &\Lpre(H) = \{ v \in V^f \, \mid \, E^f(v) \subseteq H\}.
\end{align}
Note that~\cref{eq:fixpoint-cofaircoBuechi} is discussed in~\cref{lemma:cofaircoBuechifp}. It is straightforward to verify~\cref{eq:fp-fairBuechi,eq:fixpoint-cofaircoBuechi} are negations of each other. As $\Wenv(\mathcal{G}) = \Wsys(\overline{\mathcal{G}})$, $\psi$ also calculates the \Psys winning region in the dual cofair coBüchi game $\overline{\mathcal{G}}$. %

Finally, we define $\rho^\psi: V \to \mathbb{N} \cup \{\top\}$ such that for $v \in \psi$, $\rho^\psi(v) = f^\psi(v)$, and for $v \in \overline{\psi}$, $\rho^{\psi}(v) = \top$.

\subsection{Proof of~\cref{thm:fixpoint-pm-equivalence}}

Recall that $\rho^*$ is the least fixpoint of the lifting function in~\cref{eq:cofaircoBuechi-lifting} on the dual cofair coBüchi game $\overline{\mathcal{G}} = \ltup{\gamegraph^{cf}, \overline{\Inf}(\XX)}$ of $\mathcal{G} = \ltup{\gamegraph^f, \Inf(\XX)}$. The only difference between the game graphs $\gamegraph^f$ and $\gamegraph^{cf}$ is that the ownership of the nodes is swapped. Therefore, we obtain the following rules by swapping the ownership of the nodes in~\cref{eq:cofaircobuechi-prf,eq:cofaircoBuechi-lifting}:
\begin{equation}\label{eq:fairbuechi-pr}
\Prog(v) = \begin{cases}
    \min \begin{cases}  
        \max_{(v,w) \in E^f} \rho(w)  \\
        \min_{(v,w) \in E}\rho(w) + 1  
    \end{cases} &\text{if} \, v \in V^f \setminus \XX, 
    \\
    \min_{(v,w) \in E}\rho(w)  &\hspace{-0.84cm}\text{if} \, v \in \Venv  \setminus (V^f \setminus \XX), \\
    \max_{(v,w) \in E}\rho(w)  &\hspace{-0.84cm}\text{if}  \,v \in \Vsys. \\
\end{cases}
\end{equation}

\begin{equation}\label{eq:fairBuechi-lifting}
\textsf{Lift}(\rho, v) = \rho' \,\text{where} \,\,\rho'(w) = 
\begin{cases}
    \rho(w)  &\text{if}\,\quad w \neq v,\\     
    \Prog(v)+1 &\text{if}\, v = w \in \XX, \\ 
    \Prog(v) &\text{if}\, v = w \not \in \XX.
\end{cases}
\end{equation}

The fixpoint formula in ~\cref{eq:fixpoint-cofaircoBuechi} calculates the winning region of the fair Büchi game $\Wenv$. Here, we provided a lifting algorithm that works directly on the fair Büchi game (and not the dual), given in~\cref{eq:fairbuechi-pr,eq:fairBuechi-lifting}, to simplify the comparison of the mappings $\rho^*$ and $\rho^\psi$. Furthermore, we actually use this version of the lifting algorithm in the $\texttt{initialise}$ function in~\cref{alg:main_fixed} in order to bypass the construction of the dual game. Before we start the proof, we will introduce one last definition. 

\begin{definition}[(Full-)$\Venv$-subgraph]
    Let $\ltup{\gamegraph^f, \Inf(\XX)}$ be a fair Büchi game with $\gamegraph^f = (\gamegraph, E^f)$ where $\gamegraph = \tup{V, \Vsys, \Venv, E}$. A subgraph $\subgraph = \tup{V', \Vsys', \Venv', E'}$ of $\gamegraph$ is called a \textit{full-$\Venv$-subgraph} if it contains all successors of $\Vsys$ nodes, at least one successor of each $\Venv$ node, and all fair successors of each fair non-Büchi node. 
    That is, in $\subgraph$, 
    \begin{align}\label{eq:full-subgraph}
        &\text{for } \quad  v \in V' \cap \Vsys, \quad &&E'(v) = E(v),\notag \\
        &\text{for } \quad  v \in V' \cap (\Venv \cup \XX), \quad &&|E'(v)| \geq 1,\\
        &\text{for }\quad  v \in V' \cap V^f \cap \neg \XX, \quad &&E'(v) = E^f(v).\notag
    \end{align}
    
    A subgraph is called a \textit{$\Venv$-subgraph} if it is a \textit{full-$\Venv$-subgraph} with the exception that nodes are allowed to be dead-ends. That is, each $v$ in a $\Venv$-subgraph $\subgraph$ is either a dead-end, or satisfies~\cref{eq:full-subgraph}.
\end{definition}

\fixpointpmequivalence*

For $v \in \Wenv(\overline{\mathcal{G}}) = \Wsys(\mathcal{G})$,  $\rho^*(v) = \rho^{\psi}(v) = \top$. Therefore, we only need to show $\rho^*(v) = \rho^{\psi}(v) $ for $v \in \Wsys(\overline{\mathcal{G}}) = \Wenv(\mathcal{G})$.

\begin{proof}
We proceed by induction on the natural numbers assigned to the nodes in $\Wenv(\mathcal{G})$ by $\rho^*$ and $\rho^\psi$. 

\smallskip
\emph{Base case:} $\rho^*(v) = 0 \Leftrightarrow \rho^{\psi}(v) = 0$.

\textit{Proof of the base case. } Observe that $\rho^\psi(v) = 0 $ if and only if $v \in Y^1$ where
\begin{equation}\label{eq:pm-zero} Y^1 \,= \, \nu X. \, \neg \XX  \, \cap \, \Cpre(X) \, \cap \, ( \Lpre(X) \, \cup \, \neg V^f),
\end{equation}
because $Y^0 = \emptyset$ and  thus $\Cpre(Y^0) =\Preexists(Y^0) = \emptyset$.

We claim that $Y^1$ is the maximal subset of $\neg \XX$ that accepts a full-$\Venv$-subgraph in $\gamegraph$.
$Y^1$ accepts a full-$\Venv$-subgraph because all $v \in Y^1$ belong to $\Cpre(Y^1)$ (recall that in the last iteration of $X^0$, $X^{0, j} = X^{0,j+1} = Y^1$) and thus for every $v \in \Venv \cap Y^1$, there exists a successor in $Y^1$, and for every $v \in \Vsys \cap Y^1$, all of the successors are in $Y^1$. Furthermore, every $v \in V^f \cap Y^1$ belongs to $\Lpre(Y^1)$, thus all of their fair successors are in $Y^1$. Note that $Y^1 \subseteq \neg \XX$. 

Next we show that there is no subset $H$ of $\neg \XX$ that accepts a full-$\Venv$-subgraph such that $H \not \subseteq Y^1$. %
Observe that the computation of the set $Y^1 = X^\infty$ starts from $X^0 = V$ followed by $X^1 = \neg \XX$ and can only get smaller in the next iterations. 
For $w \not \in X^2$, there is a \emph{one-step way} to reach a node in $\XX$. Here, by a \emph{one-step way}, we mean for $w \in \Vsys$, there is a successor of $w$ in $\XX$, and for $w \in \Venv$, all successors of $w$ are in $\XX$. Therefore, unless $X^1 = X^2$, the set $X^1$ does not accept a full-$\Venv$-subgraph. Similarly, for $w \not \in X^3$, there is a \emph{two-step way} to reach a node in $\XX$. Therefore, unless $X^2 = X^3$, the set $X^2$ does not accept a full-$\Venv$-subgraph. 
It follows by induction that $H \subseteq Y^1$.

In addition, it follows from~\cref{eq:fairbuechi-pr,eq:fairBuechi-lifting} that $\rho^*(v) = 0$ if and only if $v \not \in \XX$ and,
\begin{align*}
    &\text{if $v \in V^f$,} &&\text{all fair successors $w$ of $v$ have $\rho^*(w) = 0$, }\\
    &\text{if $v \in \Venv \setminus V^f $,} &&\text{there exists a $w \in E(v)$ with $\rho^*(w) = 0$, }\\
    &\text{if $v \in \Vsys$,} &&\text{all successors $w$ of $v$ have $\rho^*(w) = 0$. }
\end{align*}
It is easy to see that this exactly corresponds to the set $\{v \in V \mid \rho^*(v) = 0\}$ being the maximal subset of $\neg \XX$ that accepts a full-$\Venv$-subgraph in $\gamegraph$. 
 Note that this in particular implies that $\rho^*(v) = 0$ iff $v\in Y^1\setminus Y^0$ as $Y^0=\emptyset$.
This completes the proof of the base case.

\smallskip
\emph{Induction Hypothesis: } For every $k < m \neq 0$, $\rho^{\psi}(v) = k \Leftrightarrow \rho^*(v) = k$. In other words,  
 $\rho^*(v) = k$ iff $v\in Y^{k+1}\setminus Y^k$.
 
\smallskip
\emph{Inductive Step: } We will show $ \rho^{\psi}(v) = m \Leftrightarrow \rho^*(v) = m$.

We first derive conditions for $v$ to have $\rho^{\psi}(v) = m$ and $\rho^*(v) = m$. 
We observe that $\rho^{\psi}(v) = m$ if and only if $v \in Y^{m+1} \setminus Y^m $ where
\begin{align}\label{eq:psi-IS} 
     &Y^{m+1} \,= \, (\,\neg \XX  \,\cup \,\Cpre(Y^m) \,)\,\cap\, \Cpre(Y^{m+1}) \,\cap \notag \\ 
    &\phantom{Y^{m+1}\,=\,\,}\,( \, \Lpre(Y^{m+1}) \cup \Preexists(Y^m) \cup \neg V^f\,),
    \\
    \vspace{0.2cm}
     &Y^{m} \quad = \,\, (\,\neg \XX \, \cup \, \Cpre(Y^{m-1}) )\, \cap \, \Cpre(Y^{m}) \, \cap \notag \\ 
     &\phantom{Y^{m+1}\,=\,\,}\,( \,\Lpre(Y^{m}) \, \cup \, \Preexists(Y^{m-1}) \,\cup\, \neg V^f\,).\notag
    \end{align}
    We know from the induction hypothesis that all nodes $ w \in Y^m$ (i.e., nodes with $\rho^{\psi}(w) \leq m-1$) have $\rho^{*}(w) \leq m-1$, and all nodes $w \in Y^m \setminus Y^{m-1}$ (i.e., nodes with $\rho^{\psi}(w) = m-1$) have $\rho^{*}(w) = m-1$.

    On the other hand, it follows from~\cref{eq:fairbuechi-pr,eq:fairBuechi-lifting} that $\rho^{*}(v) = m$ if and only if,
    \begin{alignat}{1}
    &\text{if $v \in \XX$,}\label{eq:lift-chi-IS}\\
    &\quad \text{if $v \in \Venv,$ min successor $w$ of $v$ has } \rho^*(w) = m-1,  \notag \\ 
    &\quad \text{if $v \in \Vsys,$ max successor $w$ of $v$ has } \rho^*(w) = m-1, \notag \\ 
    &\text{if $v \not \in \XX$,}  \label{eq:lift-nonchi-IS} \\
    &\quad \text{if } v \in V^f,
    \begin{cases} 
        \text{max fair successor $w \in E^f(v)$ of $v$ has} \notag \\\text{ \hspace{3,3cm} $\rho^*(w) = m$, or} \notag \\ 
        \text{min successor $w \in E(v)$ of $v$ has}\\ \text{ \hspace{3,3cm} $\rho^*(w) = m-1$,}
    \end{cases} \notag \\
    &\quad \text{if $v \in \Venv \setminus V^f ,$ min successor $w$ of $v$ has } \rho^*(w) = m,\notag \\ 
    &\quad \text{if $v \in \Vsys,$ max successor $w$ of $v$ has } \rho^*(w) = m. \notag
    \end{alignat}
A min/max successor of $v$ is a node $w \in E(v)$ that minimizes or maximizes the value $\rho^*(w)$. Similarly, a min/max fair successor is defined as $w \in E^f(v)$ with the minimum or maximum $\rho^*(w)$.

\bigskip
We now start the proof of the inductive step. 
We first prove the forward direction $\rho^{\psi}(v) = m \Rightarrow \rho^{*}(v) = m$. 

For $v \in \XX$, from~\cref{eq:psi-IS} we have $v \in \Cpre(Y^m) \setminus \Cpre(Y^{m-1})$. 
    
\begin{itemize}
\item If $v \in \Venv$, this gives that $v$ has a successor in $Y^m$, and it has no successors in $Y^{m-1}$. %
Thus, minimum $\rho^\psi$-successor $w$ of $v$ is in $Y^m \setminus Y^{m-1}$, and therefore has $\rho^\psi(w) = m-1$. From the induction hypothesis we obtain $\rho^{*}(w) = m-1$. 
\item If $v \in \Vsys$, all successors of $v$ are in $Y^m$, but not all of them are in $Y^{m-1}$. %
Therefore, the maximum $\rho^\psi$-successor $w$ of $v$ is in $Y^{m} \setminus Y^{m-1}$, and therefore has $\rho^{\psi}(w) = \rho^{*}(w) = m-1$ by the induction hypothesis.
\end{itemize}
Therefore, for $v \in \XX$,~\cref{eq:lift-chi-IS} is satisfied, i.e., we have $\rho^\psi(v) = m \Rightarrow \rho^{*}(v) = m$. 
For $v \in V^f \setminus \XX$, from~\cref{eq:psi-IS} we have $v$ is in \begin{equation}\label{eq:v-is-in} \Lpre(Y^{m+1}) \setminus \Lpre(Y^m)\,)\, \cup \, (\,\Preexists(Y^m) \setminus \Preexists(Y^{m-1}).\end{equation}
    If $v \in \, (\,\Preexists(Y^m) \setminus \Preexists(Y^{m-1})\,)$, then $v$ has a successor in $Y^m$, but no successors in $Y^{m-1}$. Thus, the minimum $\rho^{\psi}$-successor $w$ of $v$ has $\rho^\psi(w) = m-1$. The induction hypothesis implies $\rho^{*}(w) = m-1$.  
    Then, for $v \in (V^f \setminus \XX) \cap (\Preexists(Y^m) \setminus \Preexists(Y^{m-1}))$~\cref{eq:lift-nonchi-IS} is satisfied, i.e., we have $\rho^\psi(v) = m \Rightarrow \rho^{*}(v) = m$. 

    So far, we have shown for $v \in \XX$ and $v \in V^f \setminus \XX$ with a successor $w$ in $Y^m$ %
     that $\rho^\psi(v) = m \Rightarrow \rho^*(v) = m$ holds.
    We need to show the same for
    
    \noindent\textbf{\circled{1}} $v \in V \setminus (V^f \cup \XX)$, and
    
    \noindent\textbf{\circled{2}}
    $v \in V^f \setminus \XX$ with no successors in $Y^{m}$.
    
    Observe that for \circled{2}, as $v$ is in~\cref{eq:v-is-in}, all fair successors of $v$ are in $Y^{m+1}$. In fact, all of them are in $Y^{m+1}\setminus Y^{m}$ since $v$ has no successors in $Y^m$. For \circled{1}, due to~\cref{eq:psi-IS}, $v \in \, \Cpre(Y^{m+1}) \setminus \Cpre(Y^m)$. That is, for $v \in \Venv$, $v$ has a successor in $Y^{m+1}$, but no successors in $Y^m$. Similarly, for $v \in \Vsys$, all successors of $v$ are in $Y^{m+1}$, but not all of them are in $Y^m$.
    
     This allows us to construct a $\Venv$-subgraph $\subgraph = (V', \Vsys', \Venv', E')$ of $\mathcal{G}$ where $Y^{m+1}\setminus Y^m \subseteq V'\subseteq Y^{m+1}$. %

     \smallskip
     The nodes in \circled{1}\&\circled{2} constitute the non-dead-end nodes of $\subgraph$. Conversely, any node $v \in Y^{m+1} \setminus Y^m$ for which $\rho^*(v) = m$ is already established is a dead-end in $\subgraph$. That is, $v \in  (Y^{m+1}\setminus Y^m) \cap \XX$ and $ v \in (Y^{m+1}\setminus Y^m) \cap (V^f \setminus \XX)$ with a successor $w \in Y^m$ are dead-ends in $\subgraph$. Furthermore, any successor of non-dead-end nodes that has $\rho^*(w) < m$ (via the induction hypothesis for $w \in Y^m$), is dead-end in $\subgraph$ as well.

     For a node $v$ in \circled{1}$\&$\circled{2}, the successors of $v$ in $\subgraph$ are defined as follows:
      \begin{align*}
        &\text{for } \quad  v \in \Vsys, \quad &&E'(v) = E(v), \\
        &\text{for } \quad  v \in \Venv \setminus V^f, \quad &&E'(v) = \{w \in E(v) \mid w \in Y^{m+1}\},\\
        &\text{for }\quad  v \in \Venv \cap V^f, \quad &&E'(v) = E^f(v).
    \end{align*}
    Recall that for a node $v \in \Vsys$ in \circled{1}, all successors are in $Y^{m+1}$, %
    for a node $v \in \Venv$ in \circled{1}, it has a successor in $Y^{m+1}$ ($|E'(v)| \geq 1$), %
    and for a node $v \in V^f$ in \circled{2}, all fair successors are in $Y^{m+1} \setminus Y^m$. %
    Therefore, $\subgraph$ is a $\Venv$-subgraph. 
    Furthermore, $\subgraph$ contains all nodes in $Y^{m+1}\setminus Y^m$, all dead-end nodes $w$ in $\subgraph$ have $\rho^*(w) \leq m$, and all non-dead-end nodes in $\subgraph$ are in $V \setminus \XX$. 

    Observe that from the induction hypothesis we know that the nodes $v$ in \circled{1}$\&$\circled{2} have $\rho^*(v) \geq m$. Therefore, we only need to show that $\rho^* (v) \leq m$ also holds.    

    Apply the lifting function~\cref{eq:fairBuechi-lifting} to the nodes in $\subgraph$ presetting the dead-ends to their known $\rho^*\leq m$ values, and get a l.f.p. PM $\rho'$ on the subgraph. 
    Note that $\subgraph$ only restricts the outgoing edges of $\Venv$ nodes. Among \circled{1}\&\circled{2} nodes, the $\Venv\setminus V^f$ nodes get their value from their min successor and the $V^f$ nodes get their value from either their max fair successor or their min successor; restricting the outgoing edges of $\Venv \setminus V^f$ to a subset of the edges, and the outgoing edges of $V^f$ to fair ones that can increase the l.f.p. PM value. That is, $\rho'(v) \geq \rho^*(v)$ for all non-dead-end nodes in $\subgraph$, and for dead-end-nodes $\rho'(v) = \rho^*(v)$ as they are preset to $\rho^*(v)$ and not updated since they have no outgoing edges. 
    
    As there are no non-dead-end nodes that are in $\XX$, all non-dead-end nodes $v$ in $\subgraph$ will get $\rho'(v) \leq m$. This is because in the lifting algorithm, only the Büchi nodes increase the value of their successors. As we already know by the induction hypothesis that $\rho^*(v) \geq m$, this means $\rho^*(v) = m$ for all nodes $v \in Y^{m+1} \setminus Y^m$, i.e., all those with $\rho^\psi(v) = m$. 

    This completes the proof of the forward direction. Next we prove the backward direction $\rho^*(v) = m \Rightarrow \rho^\psi(v) = m$.

    \smallskip
    Take some $v$ with $\rho^*(v) = m$. First assume $v \in \XX$. Then by~\cref{eq:lift-chi-IS} and the induction hypothesis we know that (i) for $v \in \Venv$, $v$ has a minimum successor $w$ with $\rho^*(w) = \rho^\psi(w) = m-1$, and (ii) for $v \in \Vsys$, $v$ has a maximum successor $w$ with $\rho^*(w) = \rho^\psi(w) = m-1$. Therefore the successor $w$ in both cases satisfies $w \in Y^{m} \setminus Y^{m-1}$.
    This further entails that $v \in \Cpre(Y^m) \setminus \Cpre(Y^{m-1})$.

    Using this information, we will show that $v \in Y^{m+1} \setminus Y^{m}$. %
    According to~\cref{eq:psi-IS}, it is sufficient to show (I) $v \in (\neg \XX \cup \Cpre(Y^{m})) \setminus (\neg \XX \cup \Cpre(Y^{m-1}))$, (II) $v \in \Cpre(Y^{m+1})$, and (III) $v \in \Preexists(Y^m) \cup \neg V^f$.

    We showed that $v$ is in $\Cpre(Y^m) \setminus \Cpre(Y^{m-1})$. As $v \in \XX$, (I) follows directly.
    As $\Cpre(Y^m) \subseteq \Cpre(Y^{m+1})$, this immediately implies (II) as well. Finally (III) holds as for $ v \in \Vsys$, $\Vsys \subseteq \neg V^f$ and for $v \in \Venv$, $v \in \Preexists(Y^m)$ follows from $v \in \Cpre(Y^m)$.

    A similar argument works to get the same result for $v \in V^f \setminus \XX$ with a minimum successor $w$ with $\rho^*(w) = m-1$. From the induction hypothesis we have $\rho^\psi(w) = m-1$, i.e., $w \in Y^{m}\setminus Y^{m-1}$. By~\cref{eq:psi-IS}, in order to show $v \in Y^{m+1} \setminus Y^m$ it is sufficient to show (i) $v \in \Cpre(Y_{m+1})$ and (ii) $v \in \Preexists(Y_{m}) \setminus (\Lpre(Y_{m}) \cup \Preexists(Y_{m-1}))$. 

    (i) follows directly from $w$ being in $Y^m$ (and therefore in $Y^{m+1}$). (ii) $v \in \Preexists(Y_m) \setminus \Preexists(Y_{m-1})$ follows from $w$ being in $Y^m \setminus Y^{m-1}$, and $v \not \in \Lpre(Y_m)$ follows from $v$ not being in $Y^m$ ($\rho^\psi(v) \geq m$). 

    For $v \in \XX$ and for $v \in V^f \setminus \XX$ with a successor $w$ with $\rho^*(w) = m-1$, we have shown that $\rho^*(v) = m \Rightarrow \rho^\psi(v) = m$. Now we need to show the same for \\
    \noindent\textbf{\circled{1}}\quad $v \in V \setminus (V^f \cup \XX)$, and \\
    \noindent\textbf{\circled{2}}\quad $v \in V^f\setminus \XX$ with no successor $w$ with $\rho^*(w) \leq m-1$. 

    Similar to the other direction, we will construct a $\Venv$-subgraph $\subgr$ with the nodes $v$ with $\rho^*(v) = m$ and their successors in $\subgr$. Again, the nodes $w$ with confirmed $\rho^*(w) = m$ values and those with $\rho^*(w) = \rho^\psi(w) < m$ (by the induction hypothesis) that are successors of other nodes in $\subgr$ will be dead-ends. All other nodes with $\rho^*(w) = m$ (i.e., \circled{1}\&\circled{2}) will be the non-dead-end nodes of $\subgr$. 
    
     Analogously to the other direction of the proof, for a node $v$ in \circled{1}\&\circled{2} the successors of $v$ in $\subgr$ are as follows:
      \begin{align*}
        &\text{for } \,  v \in \Vsys,  &&E'(v) = E(v), \\
        &\text{for } \,  v \in \Venv \setminus V^f,  &&E'(v) = \{w \in E(v) \mid \rho^*(w) = m-1\},\\
        &\text{for }\,  v \in \Venv \cap V^f,  &&E'(v) = E^f(v).
    \end{align*}
    Because all $v$ in \circled{1}\&\circled{2} satisfy~\cref{eq:lift-nonchi-IS}, it can be observed that  for $v \in \Vsys$, all successors $w$ of $v$ have $\rho^*(w) = m$, for $v \in \Venv \setminus V^f$, there exists a successor $w$ of $v$ with $\rho^*(w) = m$, and for $v \in V^f$, all fair successors $w$ of $v$ have $\rho^*(w) = m$. Therefore, $\subgr$ is a $\Venv$-subgraph where all non-dead-end nodes are in $V \setminus \XX$. 

    As before, from the induction hypothesis we know that all nodes $v$ in \circled{1}\&\circled{2} have $\rho^\psi(v)\geq m$. Thus, we only need to show that $\rho^\psi(v) \leq m$ also holds. That is, we need to show that all non-dead-end nodes of $\subgr$ are in $Y^{m+1}$. 

    To show this, we write down $Y^{m+1}$ more explicitly than in~\cref{eq:psi-IS}, 
    \begin{align*}
     Y^{m+1} \,= \,\nu X.  & (\,\neg \XX  \,\cup \,\Cpre(Y^m) \,)\,\cap\, \Cpre(X) \,\cap \notag  \\ 
            & ( \, \Lpre(X) \cup \Preexists(Y^m) \cup \neg V^f\,).
    \end{align*}
    We already know that all the dead-end nodes in $\subgr$ are in $Y^{m+1}$. Since $X$ is a greatest fixpoint variable, it starts from $X^0 = V$ and saturates at $X^\infty = Y^{m+1}$.
    It suffices to show that all non-dead-end nodes in $\subgr$ are preserved during the iterations of the fixpoint variable $X$.

    A non-dead-end node $v$ in $\subgr$ can be lost during the iterations of $X$ if (1) it is not a fair node, and is not in $\Cpre(X^i)$ for some $X^i$, or (2) it is a fair node and is not in $\Cpre(X^i) \cap \Lpre(X^i)$ for some $X^i$. The dead-end nodes are all in $Y^{m+1}$, and therefore in $X^i$ for every $i$. In addition, for each $\Venv \setminus V^f$ node, there is a successor in $\subgr$, for each $\Vsys$ node, all the successors are in $\subgr$, and for each $V^f$ node, all the fair successors are in $\subgr$, and thus non of the non-dead-end nodes in $\subgr$ can be removed in any iteration $X^i$. 

    Therefore, all nodes in \circled{1}\&\circled{2} are in $Y^{m+1}$, and thus in $Y^{m+1} \setminus Y^m$. With this, we conclude that $\rho^*(v) = m \Rightarrow \rho^\psi(v) = m$, and therefore, the proof of~\cref{thm:fixpoint-pm-equivalence} is completed.
\end{proof}

\section{Supplementary Material for Sec. 7}
\subsection{System Dynamics}
We evaluate our method using a 2D car model with state variables $(x, y)$. The control input is two-dimensional consisting of velocity and steering angle of the car. The input space is defined as
$$
\mathcal{U} = \{-0.2, -0.1, 0.1, 0.2\} \times \left\{0, \tfrac{\pi}{8}, \tfrac{\pi}{4}, \tfrac{3\pi}{8}, \tfrac{\pi}{2}, \tfrac{5\pi}{8},\tfrac{3\pi}{4},\tfrac{7\pi}{8},\pi\right\}.
$$
The (unknown) dynamics of the system are given by
$$
x_{k+1} = x_k + 10\delta v_k \cos(\theta_k) + w_1,
$$
$$
y_{k+1} = y_k + 10\delta v_k \sin(\theta_k) + w_2,
$$
where $\delta=1$ is the time discretization step, and $w_1$, $w_2$ represent additive stochastic noise. Noise support $\Omega$ is $[-1.5, 1.5] \times [-1.5, 1.5]$. Figure \ref{fig:mono} illustrates the area covered by the learned under- and over-approximations as a function of the number of samples in the dataset. In the experiments below, we focus on comparing incremental lifting with fixpoint recomputation.

\begin{figure}[ht]
    \centering
\includegraphics[width=0.8\linewidth]{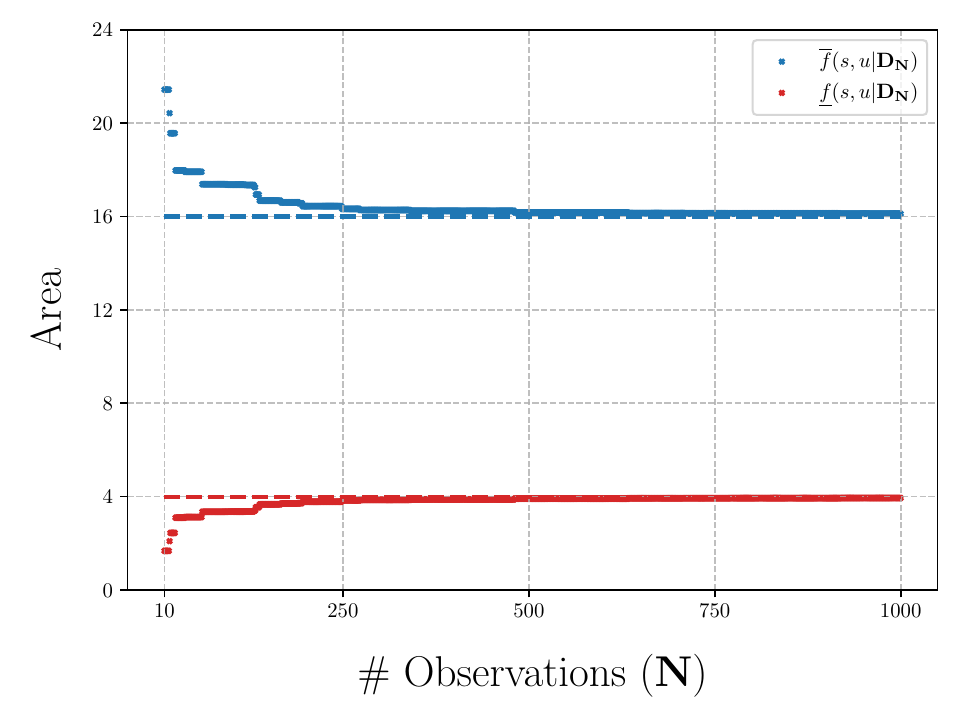}
\includegraphics[width=0.8\linewidth]{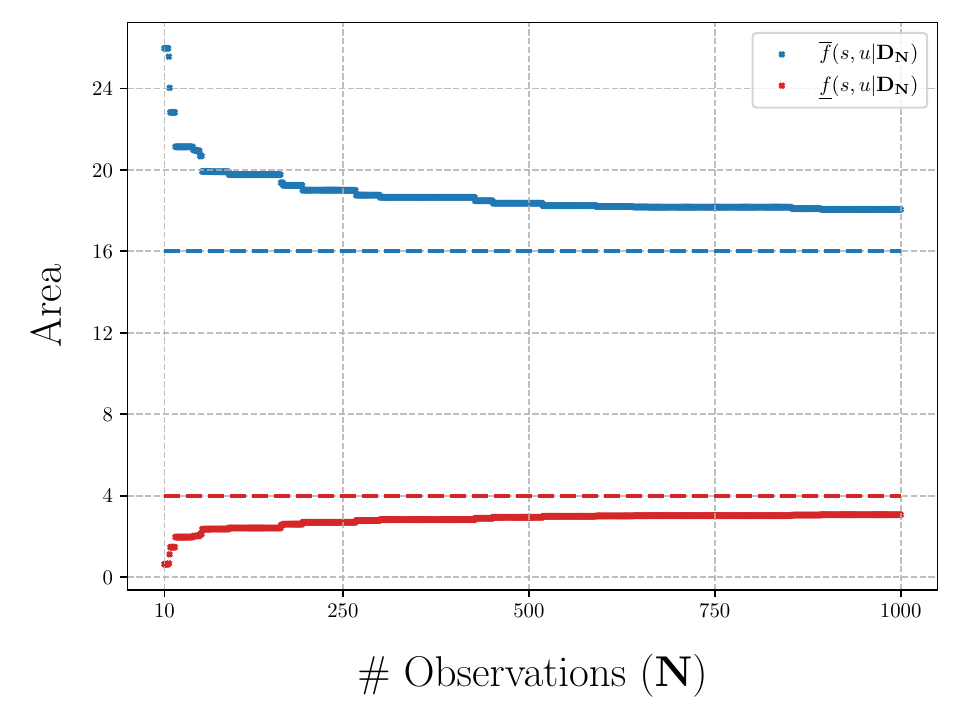}
    \caption{The area covered by the learned under- and over-approximations for a fixed state-input pair $(s, u)$ is shown as a function of the number of data points. Dashed lines represent the ground truth derived from the system dynamics. The top plot corresponds to $L_X = 1$, while the bottom plot uses a more conservative Lipschitz constant of $L_X = 1.25$. A larger (more conservative) Lipschitz constant leads to looser approximations.}
    \label{fig:mono}
\end{figure}

\subsection{Experiment 1}
In this experiment, we compare the runtime of our incremental lifting update against full recomputation when the game graph changes slightly. The robot must repeatedly reach a goal while avoiding a central obstacle. We assume limited data in one region---resulting in larger over- ($9 \times 9$ instead of $5 \times 5$) and smaller under-approximations (initially empty instead of $1 \times 1$). Figure \ref{fig:expadd1} shows the experiment on a $30 \times 30$ grid.
Initially, the winning domain excludes the obstacle and the low-data area. As new samples arrive, we refine the game graph and obtain the new winning domain both incrementally and by recomputation from scratch, and compare their runtimes across different room sizes. Our incremental method runs up to nearly $100$ times faster than full recomputation. See the main paper for detailed runtime plots over varying graph sizes.

\begin{figure}[t]
    \centering
\includegraphics[width=0.5\linewidth]{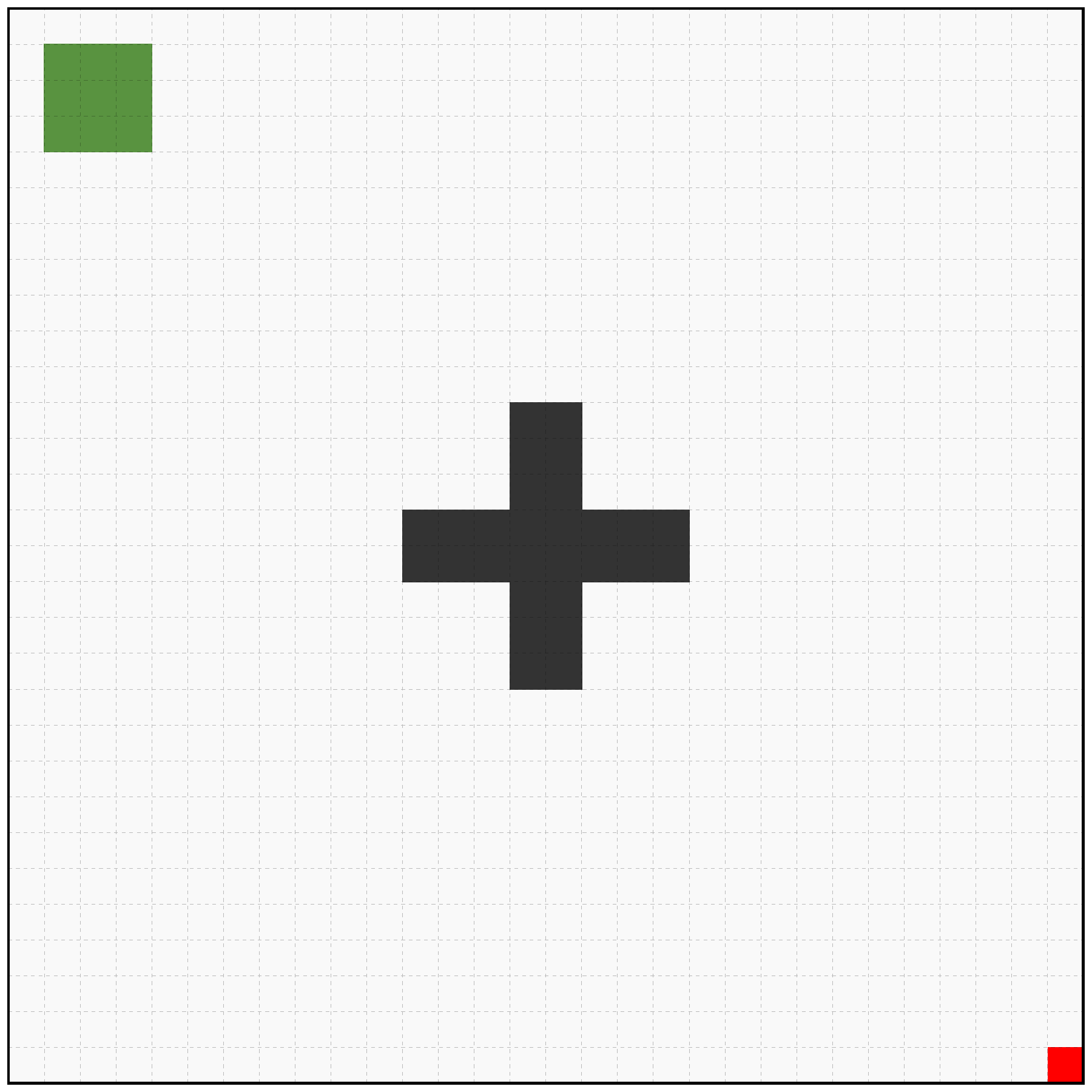}
    \caption{Green denotes the goal region; black denotes obstacles; red denotes the state with scarce data.}
    \label{fig:expadd1}
\end{figure}

\subsection{Experiment 2}
In this experiment, we extend the Experiment 1 by introducing  multiple randomly chosen low-data regions, producing larger perturbations to the game graph. Figure \ref{fig:exp2} shows the experiment on a $20 \times 20$ grid. As before, we record the time to update the winning domain once the game graph is modified. This setup highlights our incremental algorithm’s scalability and robustness under more substantial graph changes. Table~\ref{tab:exp2} summarizes the results for varying numbers of low-data regions and graph sizes.

\subsection{Experiment 3}
In this experiment, the robot operates in a five‐room apartment. Initially, observations are sparse inside all rooms. We then collect observations one room at a time and update the game graph. We recompute the winning domain both by complete recomputation and by our incremental lifting algorithm. Both approaches recover the same winning region, but incremental lifting runs substantially faster. This demonstrates the effectiveness of our incremental algorithm in iteratively solving games across multiple stages. See the main paper for detailed illustrations and runtime comparisons.

\begin{figure}[t]
    \centering
    \includegraphics[width=0.32\linewidth]{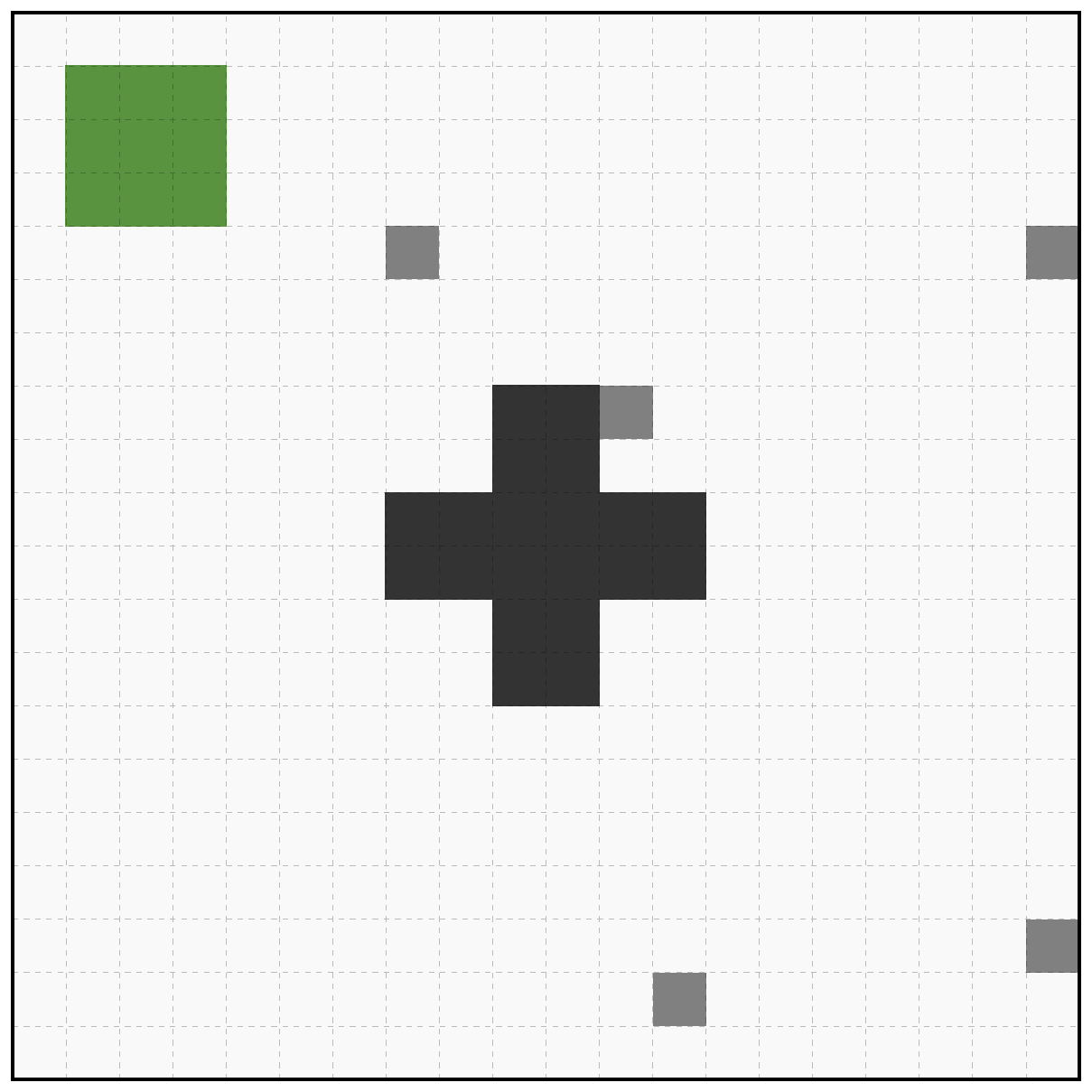}
    \includegraphics[width=0.32\linewidth]{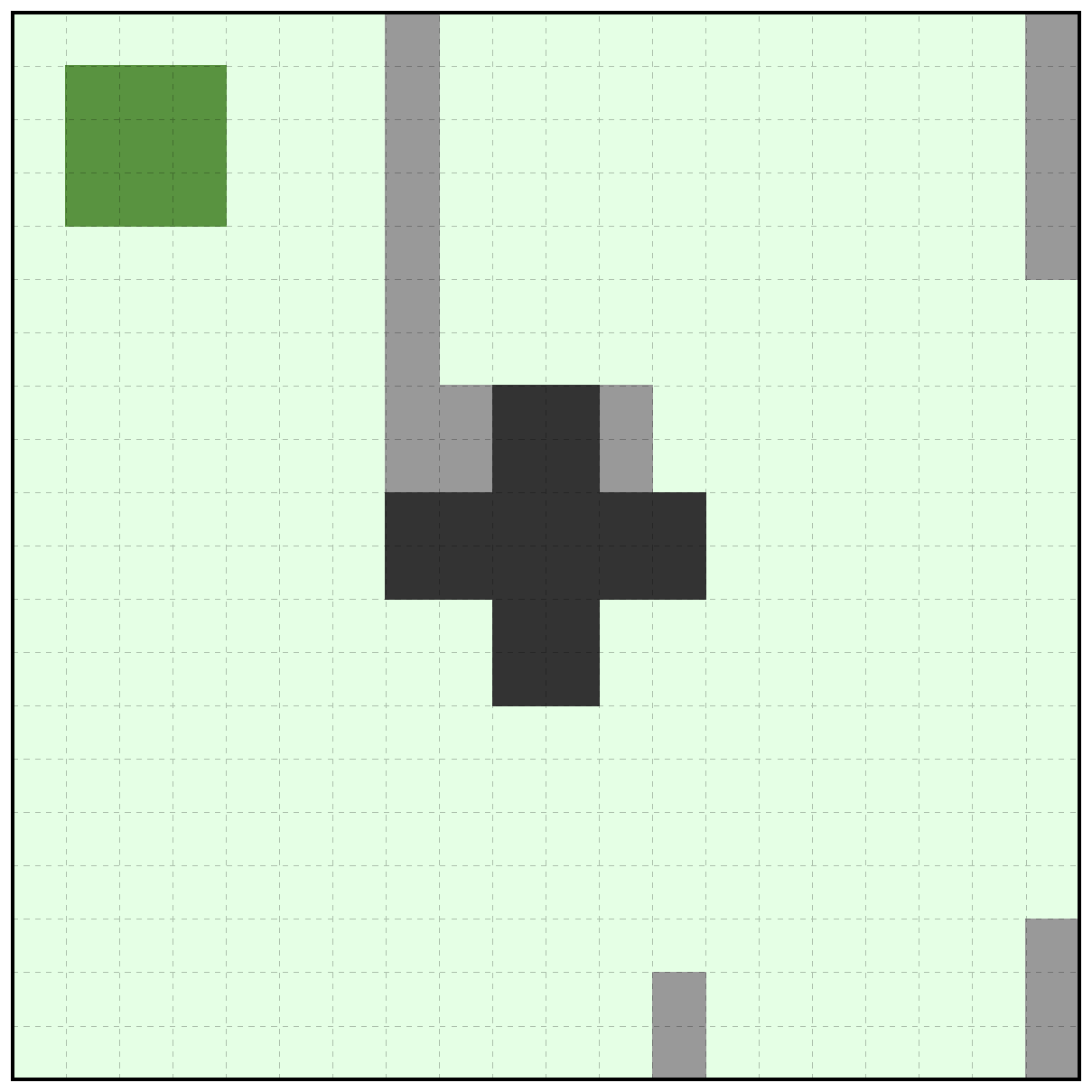}
    \includegraphics[width=0.32\linewidth]{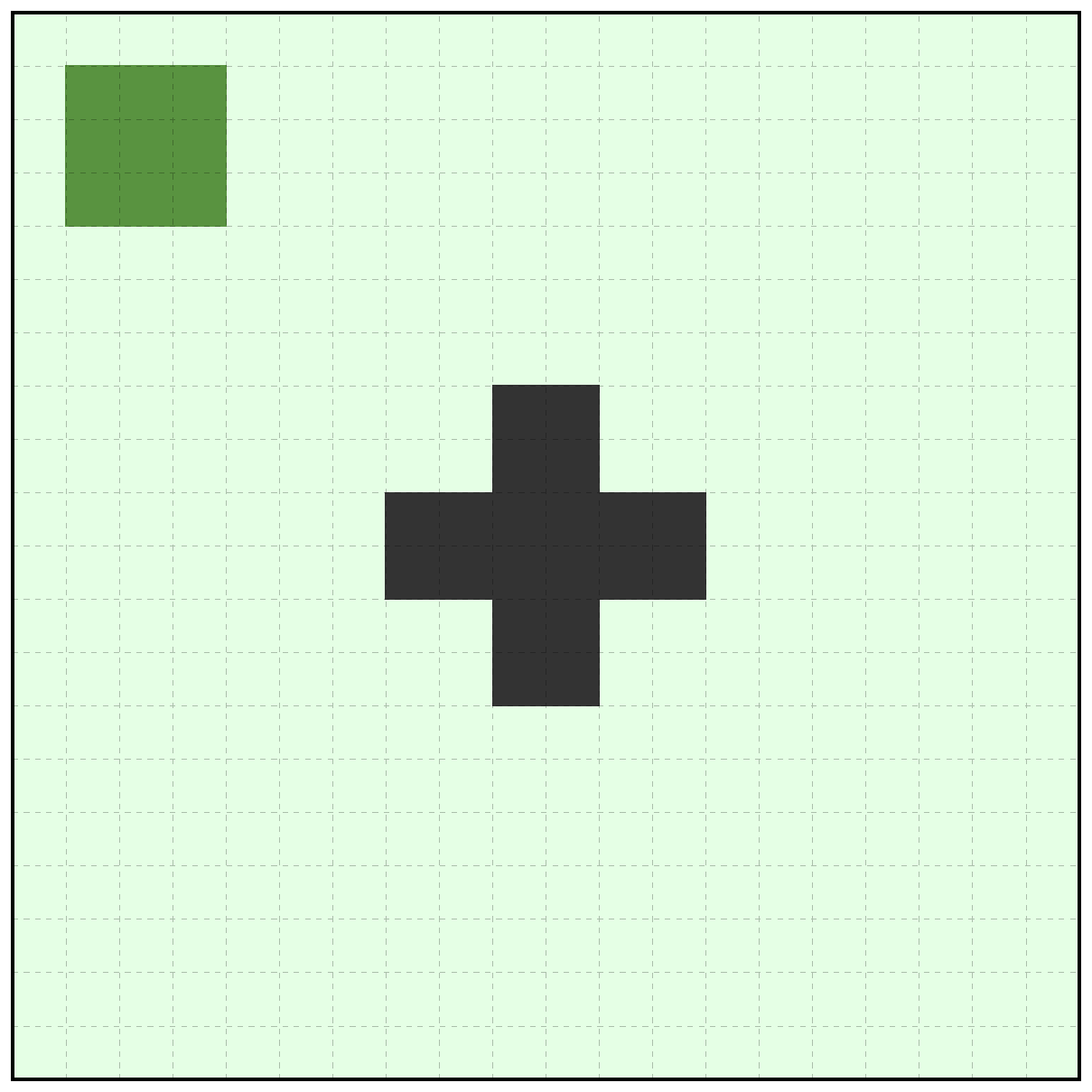}
    \caption{
Left: Gray indicate cells with scarce data (resulting in $9\times9$ over-approximations and empty under-approximations), whereas all other cells have $5\times5$ over-approximations and $1\times1$ under-approximations. 
Center: The initial winning domain (green). 
Right: The winning domain after incorporating more observations.
}
\label{fig:exp2}
\end{figure}

\begin{table*}[th]
\centering
\caption{Details of Experiment 2}
\begin{tabular}{ccccccc}
\toprule
\textbf{\#} & \textbf{Size} & \textbf{\#Low-data regions} & \textbf{\#States} & \textbf{Initialization (s)} & \textbf{ Incremental lifting (s)} & \textbf{Fixpoint recomputation (s)} \\
\midrule
1  & $25 \times 25$ & 5 & 328\,015 & 40.31 & 0.09  & 40.07  \\
2  & $25 \times 25$ & 10 & 329\,759 & 55.10 & 0.09 & 39.95  \\
3  & $25 \times 25$ & 15 & 332\,286 & 57.42 & 0.10 & 40.12  \\
4  & $25 \times 25$ & 20 & 333\,955 & 56.82 & 29.31 & 40.14 \\
5  & $25 \times 25$ & 25 & 337\,026 & 61.58 & 35.82 & 40.23  \\ \cmidrule{1-2}
6  & $35 \times 35$ & 10 & 697\,373 & 121.71 & 0.20 & 120.91  \\
7  & $35 \times 35$ & 20 & 700\,714 & 167.35 & 0.21 & 120.89 \\
8  & $35 \times 35$ & 30 & 707\,059 & 167.25 & 0.22 & 121.39  \\
9  & $35 \times 35$ & 40 & 709\,536 & 168.04 & 0.22 & 121.44  \\
10  & $35 \times 35$ & 50 & 715\,052 & 41.04 & 373.48 & 121.66 \\
\cmidrule{1-2}
11  & $45 \times 45$ & 15 & 1\,196\,297 & 257.89 & 0.34 & 254.37 \\
12  & $45 \times 45$ & 30 & 1\,204\,525 & 259.67 & 0.34 & 254.31 \\
13  & $45 \times 45$ & 45 & 1\,210\,111 & 353.86 & 0.36 & 255.87 \\
14  & $45 \times 45$ & 60 & 1\,216\,971 & 378.73 & 0.36 & 255.74 \\
15  & $45 \times 45$ & 75 & 1\,225\,921 & 384.49 & 0.42 & 256.35 \\
\bottomrule
\end{tabular}
\label{tab:exp2}
\end{table*}

\end{document}